\numberwithin{equation}{section}
\numberwithin{figure}{section}
\newtheorem{theorem}{Theorem}[section]
\newtheorem{corollary}[theorem]{Corollary}
\newtheorem{proposition}[theorem]{Proposition}
\newtheorem{lemma}[theorem]{Lemma}
\theoremstyle{definition}
\newtheorem{definition}[theorem]{Definition}
\newtheorem{remark}[theorem]{Remark}
\newtheorem*{note}{Note}
\newtheorem*{acks}{Acknowledgements}
\newcommand{\abs}[1]{\left|#1\right|}
\newcommand*{\N}{\ensuremath{\mathbb{N}}}
\newcommand*{\Z}{\ensuremath{\mathbb{Z}}}
\newcommand*{\R}{\ensuremath{\mathbb{R}}}
\renewcommand*{\tilde}{\widetilde}
\renewcommand{\P}{\ensuremath{\mathbb{P}}}
\newcommand{\weakto}{\rightharpoonup}
\DeclareSymbolFont{boldoperators}{OT1}{cmr}{bx}{n}
\newcommand{\T}{\mathbb{T}}
\def\XXint#1#2#3{{\setbox0=\hbox{$#1{#2#3}{\int}$}
\vcenter{\hbox{$#2#3$}}\kern-.5\wd0}}
\let\originalleft\left
\let\originalright\right
\renewcommand{\left}{\mathopen{}\mathclose\bgroup\originalleft}
\renewcommand{\right}{\aftergroup\egroup\originalright}
\renewcommand{\>}{\rangle}
\newcommand{\E}{\mathbb{E}}
\renewcommand{\hat}{\widehat}
    \edef\sign{\pgfmathresult}%
    \edef\x{\pgfmathresult}%
    \edef\t{\pgfmathresult}%
    \edef\y{\pgfmathresult}%
\newcommand{\addperiod}[1]{#1.}
\titleformat*{\subsection}{\bfseries}
\titleformat{\subsubsection}[runin]
  {\normalfont\bfseries}
  {\thesubsubsection.}
  {0.5em}
  {\addperiod}
\titleformat*{\subsubsection}{\normalfont\itshape}
\titleformat*{\paragraph}{\bfseries}
\titleformat*{\subparagraph}{\large\bfseries}
\title{On anomalous diffusion in the Kraichnan model and correlated-in-time variants}
\author{
Keefer Rowan\thanks{Courant Institute of Mathematical Sciences,  New York University.
{\footnotesize \href{mailto:keefer.rowan@cims.nyu.edu}{keefer.rowan@cims.nyu.edu}.}
}
}
\date{}
\begin{document}

\maketitle

\begin{abstract}
    We provide a concise PDE-based proof of anomalous diffusion in the Kraichan model---a stochastic, white-in-time model of passive scalar turbulence. That is, we show an exponential rate of $L^2$ decay in expectation of a passive scalar advected by a certain white-in-time, correlated-in-space, divergence-free Gaussian field, uniform in the initial data and the diffusivity of the passive scalar. Additionally, we provide examples of correlated-in-time versions of the Kraichnan model which fail to exhibit anomalous diffusion despite their (formal) white-in-time limits exhibiting anomalous diffusion. As part of this analysis, we prove that anomalous diffusion of a scalar advected by some flow implies non-uniqueness of the ODE trajectories of that flow.
\end{abstract}


\section{Introduction}

In this paper we consider the Kraichnan model---a model of passive scalar turbulence in which a scalar $\theta$ solves an advection-diffusion equation with advecting flow given by the random white-in-time, correlated-in-space, divergence-free Gaussian field $u$. The flow is specified by two parameters: $\alpha$---which controls the H\"older regularity above the microscale---and $\eta,$ which is the microscopic length scale below which $u$ is smooth. The flow $u$ is given precise specification in Subsection~\ref{subsec.spec-kraichnan-drift}. The SPDE the passive scalar $\theta$ solves is
\begin{equation}
\label{eq.spde-intro}
d\theta_t = \kappa \Delta \theta_t - u \odot \nabla \theta_t,\end{equation}
where the notation of $u \odot \nabla \theta_t$ denotes that we are interpreting the equation in the Stratonovich sense, explained in Subsection~\ref{subsec.spec-kraichnan-spde}. In Section~\ref{s.ad-kraichnan}, we consider the vanishing diffusivity limit $\kappa\to0$ and prove with Theorem~\ref{thm.ad} that this model exhibits \textit{anomalous diffusion}---a uniform-in-$\kappa$ rate of $L^2$ decay of $\theta_t$ in expectation. In Section~\ref{s.nonunique}, we show that for any flow (deterministic or random) anomalous diffusion implies non-uniqueness of positive $L^\infty_tL^2_x$ solutions of  the transport equation associated to the flow as well as nonuniqueness of ODE trajectories in the flow. In Section~\ref{s.no-ad}, we study the necessity of the white-in-time property of the Kraichnan flow by constructing examples of correlated-in-time flows which fail to exhibit anomalous diffusion despite the presence of anomalous diffusion in their formal white-in-time limits. We provide a more precise overview of results in Subsection~\ref{subsec.contribs}, but let us first provide some background on the problems under study.

\subsection{Background and motivation}

The anomalous dissipation of energy in a turbulent fluid is a fundamental experimental fact of turbulence, as emphasized in Frisch's comprehensive account~\cite[Chapter 5]{frisch_turbulence_1995}. Anomalous dissipation refers to persistent dissipation of energy in the vanishing viscosity limit, despite viscosity being the ultimate source of energy dissipation. It is both experimentally and numerically well-observed that this somewhat surprising phenomenon is generic to turbulent fluids~\cite{pearson_measurements_2002, kaneda_energy_2003}. Further, the genericity of anomalous dissipation is a necessary assumption to much of the phenomenological theory of turbulence, in particular it is necessary to the derivation of Kolmogorov's celebrated 4/5-law in K41 theory~\cite{kolmogorov_local_1941, kolmogorov_degeneration_1941, kolmogorov_dissipation_1941}. 

Despite its foundational nature, a satisfying theoretical explanation of anomalous dissipation is still elusive. Giving a complete and rigorous account of anomalous dissipation through the Navier-Stokes equations is currently well beyond the grasp of current techniques. In order to make the problem somewhat more approachable, a simpler but analogous system is often considered. The nonlinear, self-advecting complexity of a turbulent fluid is replaced by passive scalar evolution, in which a field (such as temperature, salinity, or dye concentration) is advected by a flow without acting on the flow. The flow is therefore considered as given, and we are left only to solve a \textit{linear} advection-diffusion equation.

For sufficiently complicated advecting flows---such as when the advecting flow is itself a turbulent fluid---the passive scalar exhibits phenomena analogous to fluid turbulence, which we call \textit{scalar turbulence}, whose heuristic description was first given by Obukhov~\cite{obukhov_structure_1949} and Corrsin~\cite{corrsin_spectrum_1951}. Of particular interest to us, in scalar turbulence we expect generic \textit{anomalous diffusion}---persistence of $L^2$ norm decay in the vanishing viscosity limit. While the linear setting makes the analysis much simpler than for fluid turbulence, it is still very difficult to even give examples of anomalous diffusion. One simple reason to see why is that we need the advecting field to be rough for anomalous diffusion to be even possible: for Lipschitz advecting flows, one can directly control the rate of energy dissipation and prove it vanishes in the $0$ viscosity limit. Thus to construct examples, one needs to provide a rough enough advecting flow while carefully controlling the passive scalar solution. Deterministic examples of anomalous diffusion have only been constructed rather recently, first in~\cite{drivas_anomalous_2022} and~\cite{colombo_anomalous_2022}, which each provide a flow for which one can carefully analyze the associated transport equation and then treat the diffusion perturbatively. Following those examples,~\cite{armstrong_anomalous_2023} gave an example of a flow which one can iteratively homogenize---or renormalize---to show that the effective viscosity on large scales is positive and independent of the molecular viscosity, in the vanishing molecular viscosity limit. 

Prior to the construction of these deterministic examples, there was primarily one model known to the community to exhibit anomalous diffusion. The Kraichnan model \cite{kraichnanSmallScaleStructure1968} is a stochastic model of scalar turbulence in which the flow $u$ is given by a white-in-time, H\"older-continuous-in-space Gaussian field. The Kraichnan model has been subject to substantial investigation in the physics and applied math literature following the foundational paper~\cite{bernard_slow_1998}, in which it was demonstrated to exhibit many of the interesting properties of turbulence---including anomalous diffusion. The bulk of the work done for the Kraichnan model is at the heuristic level and not fully rigorous. For reviews of this literature, see~\cite{falkovichParticlesFieldsFluid2001, kupiainen_nondeterministic_2004, gawedzkiKrzysztofGawedzkiSoluble2008}. 

Thus the Kraichnan model gives a model of scalar turbulence for which anomalous diffusion is generic. It is an appealing alternative to the painstaking construction of particular deterministic examples and suggests a different path for the rigorous study of anomalous diffusion, through suitably chosen random flows. A first step on this path is a rigorous understanding of anomalous diffusion in the Kraichnan model. The primary rigorous reference for the Kraichnan model is the pair of papers~\cite{jan_integration_2002, jan_flows_2004}. These works study a different but related problem to that of anomalous diffusion, instead focusing on the finite time separation of infinitesimally separated particles flowing along the ODE trajectories of the advecting field. Its tools are also quite different from ours, using probabilistic techniques to study Lagrangian particle trajectories. We note also the interesting work~\cite{lototskii_passive_2004}. More recently, the preprint~\cite{zelati_statistically_2023} rigorously studies the Kraichnan model and similar to our analysis essentially uses the closed equation for the equal-time two-point correlation. Their focus is different, instead considering mixing in spatially smooth fields.

\subsection{Contributions of this paper}
\label{subsec.contribs}

This paper provides a rigorous, PDE-centric proof of anomalous diffusion in the Kraichnan model using techniques that are accessible to the fluids community. We particularly focus on how the white-in-time property of the Kraichnan model is needed. 

The Gaussian flow $u^{\eta,\alpha}$ in the Kraichnan model is specified by two parameters, a H\"older exponent $\alpha$ and a small scale cut off $\eta$. If we take the cutoff $\eta$ to be $0$, then the flow is approximately spatially $C^\alpha$. For $\eta>0$, we simply smoothly cutoff length scales below $\eta$, so the flow is spatially smooth but still has the structure of a H\"older continuous flow on scales above $\eta$. The motivation for a small scale cutoff is that an advecting turbulent fluid with positive viscosity should be smooth below the dissipation scale. See Subsection~\ref{subsec.spec-kraichnan-drift} for a precise specification of $u^{\eta,\alpha}$. We show anomalous diffusion for a very broad range of parameters, given by the following precise estimate.

\begin{theorem}
    \label{thm.ad}
    There exists $C(d)$ such that for all $\alpha \in (0,1),$ $\eta \in [0,C^{-1}), \kappa \in (0,C^{-1}),$
    we have for any $\theta_0 \in L^2(\T^d)$ such that $\int \theta_0(x)\ dx =0,$ if $\theta^{\kappa,\eta,\alpha} : [0,\infty) \times \T^d \to \R$ is the random function solving the Kraichnan SPDE 
    \begin{equation}
    \label{eq.spde-thm}
    \begin{cases}
    d\theta^{\kappa,\eta,\alpha}_t = \kappa \Delta \theta^{\kappa,\eta,\alpha}_t - u^{\eta,\alpha} \odot \nabla \theta^{\kappa,\eta,\alpha}_t\\
    \theta^{\kappa,\eta,\alpha}(0,\cdot) = \theta_0,
    \end{cases}
    \end{equation}
    then for all $t >0,$
    \begin{equation} \label{eq.expected-ad}\E \|\theta^{\kappa,\eta,\alpha}_t\|_{L^2(\T^d)}^2 \leq \exp\Big(\frac{C}{1-\alpha} \Big(\frac{\log \kappa}{\log \eta} \lor 1\Big)\Big)e^{-t/C}  \|\theta_0\|_{L^2(\T^d)}^2. \end{equation}
\end{theorem}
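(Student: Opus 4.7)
The plan is to exploit the white-in-time structure of $u^{\eta,\alpha}$ to derive a closed deterministic PDE for the equal-time two-point correlation of $\theta^{\kappa,\eta,\alpha}$, and then reduce the problem to decay analysis for a degenerate divergence-form parabolic equation on $\T^d$.

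First I would convert the Stratonovich SPDE \eqref{eq.spde-thm} to It\^{o} form: writing the spatial covariance of $u^{\eta,\alpha}$ as $\E[u^i(t,x)u^j(s,y)] = D^{ij}(x-y)\delta(t-s)$, the It\^{o}--Stratonovich correction contributes an eddy-diffusivity drift $\tfrac{1}{2}D^{ij}(0)\partial_i\partial_j\theta$. Applying It\^{o}'s product formula to $\theta_t(x)\theta_t(y)$ and taking expectation then yields, via the quadratic-variation cross term $D^{ij}(x-y)\partial_i^x\partial_j^y\phi$, a closed linear PDE for $\phi_t(x,y) := \E[\theta_t(x)\theta_t(y)]$. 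Averaging over the center-of-mass coordinate, translation invariance of $D$ reduces this to a closed scalar equation for $\psi_t(r) := \int_{\T^d}\E[\theta_t(x)\theta_t(x-r)]\,dx$:
\[ \partial_t \psi_t(r) = \partial_{r_i}\bigl[(2\kappa\delta^{ij} + \Sigma^{ij}(r))\,\partial_{r_j}\psi_t(r)\bigr] \quad\text{on } \T^d, \]
where the structure function is $\Sigma^{ij}(r) := D^{ij}(0) - D^{ij}(r)$ and divergence form follows from the incompressibility identity $\partial_i D^{ij}(r) = 0$. Crucially, $\psi_t(0) = \E\|\theta_t\|_{L^2(\T^d)}^2$ is exactly the quantity of interest, and $\int_{\T^d}\psi_t(r)\,dr = 0$ for all $t$, since $\int \theta_t\,dx$ is conserved under the Stratonovich evolution and vanishes at $t=0$.

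The task therefore reduces to proving exponential decay of $\psi_t(0)$ for mean-zero solutions of the Kraichnan equation, governed by the self-adjoint divergence-form operator $\mathcal{L}_{\kappa,\eta} := \partial_i[(2\kappa\delta^{ij} + \Sigma^{ij}(r))\partial_j\,\cdot\,]$. From the specification of $u^{\eta,\alpha}$, one has $\Sigma^{ij}(r) \sim |r|^{2\alpha}\delta^{ij}$ in the H\"{o}lder regime $|r|\gtrsim \eta$ and $\Sigma^{ij}(r) \sim \eta^{2\alpha-2}|r|^2\delta^{ij}$ in the smoothed regime $|r|\lesssim\eta$, so $\mathcal{L}_{\kappa,\eta}$ is elliptic but degenerate at the diagonal. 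Since the degeneracy $|r|^{2\alpha}$ is integrable for every $\alpha>0$, I expect a Nash- or Poincar\'{e}-type argument to yield a spectral gap $\lambda\gtrsim 1$ for $\mathcal{L}_{\kappa,\eta}$ on mean-zero $L^2(\T^d)$, uniform in $\kappa,\eta$ small. To pass from $L^2$ decay to the pointwise value $\psi_t(0)$, I would combine this gap with a short-time ultracontractive/smoothing estimate on a transient window $[0,t_0]$, using $\|\psi_0\|_{L^\infty} \leq \|\theta_0\|_{L^2(\T^d)}^2$ to bound the starting datum.

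The main obstacle will be producing the specific prefactor form $\exp\bigl(\tfrac{C}{1-\alpha}(\log\kappa/\log\eta \lor 1)\bigr)$ in \eqref{eq.expected-ad}. This should come from tracking how long the separation process with generator $\mathcal{L}_{\kappa,\eta}$ takes to escape a degenerate neighborhood of the diagonal $r=0$: the relevant microscale is $\eta \lor \kappa^{1/(2\alpha)}$, below which molecular diffusion dominates eddy diffusion, and Richardson-type pair dispersion on the H\"{o}lder part of the spectrum produces polynomial escape with exponent $1/(2-2\alpha)$, yielding the $1/(1-\alpha)$ factor via logarithmic-in-scale integration. The ratio $\log\kappa/\log\eta$ should encode the additional scale range that must be traversed when $\kappa$ is much smaller than $\eta^{2\alpha}$ in logarithmic units. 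The spectral-gap step should be relatively robust; the delicate part is fine-tuning the short-time smoothing estimate to recover precisely this prefactor.
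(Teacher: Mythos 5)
Your proposal follows essentially the same route as the paper: Itô conversion and the closed equation for the two-point correlation, reduction to the difference variable (your center-of-mass averaging is equivalent to the paper's random-translation trick), the structure-function lower bounds $\Sigma(r)\gtrsim |r|^{2\alpha}$ above scale $\eta$ and $\gtrsim \eta^{2\alpha-2}|r|^2$ below it, a weighted Poincar\'e inequality for the uniform spectral gap, and a weighted Nash ultracontractivity estimate to reach the on-diagonal value $\psi_t(0)$. The prefactor you flag as delicate indeed emerges exactly where you predict, from the short-time smoothing step: the paper runs the Nash argument with weight $|r|^{2\beta}$ for a free parameter $\beta\in[\alpha,1)$, waits a transient time of order $1/(1-\beta)$, and optimizes $\beta$ against the competing factors $\eta^{2\beta(\alpha-1)}\kappa^{1-\beta}$ and $C^{1/(1-\beta)}$, which produces the $\exp\bigl(\tfrac{C}{1-\alpha}(\log\kappa/\log\eta\lor 1)\bigr)$ form.
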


We note that the exponential rate of decay is entirely independent of all parameters: only the prefactor varies. We expect anomalous diffusion---estimates on the dissipation of the $L^2$-norm uniform in $\kappa$ as $\kappa \to 0$---only in diagonal limits where $\eta \to 0$ additionally. If $\eta$ stays bounded away from $0$, then the flow is spatially smooth and so cannot exhibit anomalous diffusion. The above estimate then implies for any fixed $\alpha \in (0,1)$ and any $\sigma \in [1,\infty)$, there exists a $C(\sigma, \alpha)<\infty$ such that for any $\eta^\sigma \leq \kappa \leq C^{-1}$ we have that
\[\E \|\theta^{\kappa,\eta,\alpha}_t\|_{L^2(\T^d)}^2 \leq Ce^{-t/C}  \|\theta_0\|_{L^2(\T^d)}^2,\]
that is we get anomalous diffusion in this joint limit. Notably we get anomalous diffusion under an extremely broad class of joint limits $\eta,\kappa \to 0,$ requiring only the very weak condition that the ratio of $\log$'s stay bounded. In particular, one can just take $\eta =0$ and see the estimate is entirely independent of $\kappa.$

The estimate breaks down as $\alpha \to 1$---as expected, since we cannot have anomalous diffusion if the advecting flow is Lipschitz. We note though the interesting dependence on $\alpha$: if the advecting flow is $1-\epsilon$ H\"older, then the estimate suggests we need to wait until $t \approx \epsilon^{-1}$ before the anomalous diffusion is apparent. After this time though, the diffusion continues with an exponential rate independent of $\epsilon$.

Let us quickly sketch how we arrive at the estimate~\eqref{eq.expected-ad}. A simple It\^o calculus computation formally shows, crucially exploiting the white-in-time property of the flow, that the equal-time two point correlation function for $\theta^{\kappa,\eta,\alpha},$ which we denote
\[f^{\kappa,\eta,\alpha}(t,x,y) := \E \theta^{\kappa,\eta,\alpha}(t,x) \theta^{\kappa,\eta,\alpha}(t,y),\]
solves a deterministic linear PDE. We then use the techniques of parabolic PDE together with functional inequalities which are essentially the Caffarelli-Kohn-Nirenberg inequalities to show a decay estimate of $f^{\kappa,\eta,\alpha}$, from which we get the decay of 
\[\int f^{\kappa,\eta,\alpha}(t,x,x)\ dx = \E \int  \theta^{\kappa,\eta,\alpha}(t,x)^2\ dx = \E \|\theta^{\kappa,\eta,\alpha}(t,\cdot)\|_{L^2(\T^d)}^2.\]

Following our demonstration of anomalous diffusion in the Kraichnan model, we turn our attention to the necessity of the white-in-time property of the flow for anomalous diffusion. In particular, we study whether correlated-in-time versions of the Kraichnan model also exhibit anomalous diffusion. While what is perhaps the most natural correlated-in-time version of the Kraichnan model---the model obtained by mollifying the drift field in time---is inaccessible to our current techniques, we construct three different correlated-in-time models that do not exhibit anomalous diffusion for any fixed positive time correlation. We argue that these models can legitimately be called correlated-in-time versions of the Kraichnan model as their (formal) limit as the time correlation goes to $0$ is the Kraichnan model.

Let us sketch one such model. Fix a correlation time $\epsilon>0$, which we will take to $0$ to recover the Kraichnan model. Split time into intervals of width $\epsilon$. On the first interval, we let $u^\epsilon$ be constant in time and spatially, we let it be a random shear flow, randomly chosen to be oriented vertically or horizontally and with profile given by a random $C^{\alpha-}$ Gaussian function with typical magnitude $\epsilon^{-1/2}$. On all other intervals, we let $u^\epsilon$ be iid copies of the first interval. We argue in Subsection~\ref{ss.white-in-time-limits} that $u^\epsilon$ formally converges to a Kraichnan model as $\epsilon\to0,$ but the shear structure also allows us to show that for positive $\epsilon,$ there is no anomalous diffusion. Thus we get the following theorem (for precise statements, see Section~\ref{s.no-ad}).

\begin{theorem}
    \label{thm.no-ad-intro}
    For any $\epsilon>0$, the advection-diffusion equation associated to $u^\epsilon$ does not exhibit anomalous diffusion for any initial data, but in the formal white-in-time limit, the SPDE associated to the limiting flow $u$ exhibits anomalous diffusion for any initial data with some positive probability.
\end{theorem}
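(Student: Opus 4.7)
The proof splits into two essentially independent parts. For the first (failure of AD at fixed $\epsilon>0$) my plan is to exploit the shear structure of $u^\epsilon$ via the contrapositive of the result in Section~\ref{s.nonunique}: anomalous diffusion implies non-uniqueness of ODE trajectories, so if the flow has unique ODE trajectories then AD is ruled out. For a horizontal shear $u=(v(y),0)$ the ODE is trivially uniquely solvable---$y$ is conserved and $x(t)=x_0+v(y_0)(t-s)$---regardless of the (mere $C^{\alpha-}$) regularity of $v$, because the right-hand side does not depend on $x$. Concatenating across the i.i.d.\ random horizontal/vertical shear intervals preserves this uniqueness, so $u^\epsilon$ almost surely generates a well-defined trajectory flow. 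To convert this into a quantitative non-AD statement for any initial datum $\theta_0$, I would verify the vanishing-viscosity limit $\theta^\kappa_t \to \theta^0_t$ in $L^2(\T^d)$ (a.s.\ in the flow) as $\kappa\to 0$, where $\theta^0_t$ is the unique transport solution with $\|\theta^0_t\|_{L^2}=\|\theta_0\|_{L^2}$; combined with the a priori bound $\|\theta^\kappa_t\|_{L^2}\leq\|\theta_0\|_{L^2}$ and dominated convergence, this yields $\E\|\theta^\kappa_t\|_{L^2}^2\to\|\theta_0\|_{L^2}^2$ for every fixed $t$, contradicting any $\kappa$-uniform exponential decay rate.

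For the second part (AD for the formal white-in-time limit) I would first identify the limit explicitly by computing $\E\, u^\epsilon(t,x)\otimes u^\epsilon(s,y)$: within a single $\epsilon$-interval it equals $\epsilon^{-1}K(x,y)$ for a fixed, spatially $C^{\alpha-}$, shear-sum covariance $K$ (an average of rank-one horizontal and vertical correlations), and vanishes across intervals. The $\epsilon\to 0$ limit is therefore $\delta(t-s)K(x,y)$, the covariance of a white-in-time Gaussian field---a Kraichnan-type flow. If $K$ fits the framework of Subsection~\ref{subsec.spec-kraichnan-drift}, I would invoke Theorem~\ref{thm.ad} directly and upgrade AD-in-expectation to ``AD with positive probability'' via Markov. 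Otherwise, I would re-run the argument outlined in Subsection~\ref{subsec.contribs}: derive via It\^o calculus the deterministic parabolic PDE for $f(t,x,y):=\E\,\theta(t,x)\theta(t,y)$, and extend the Caffarelli--Kohn--Nirenberg-type estimates used in Theorem~\ref{thm.ad} to this anisotropic covariance.

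The main obstacle I expect is the anisotropy and degeneracy of $K$ in the second part: the associated relative-diffusivity tensor $D(z)=K(0,0)+K(z,z)-K(0,z)-K(z,0)$ is essentially diagonal with $D_{11}$ vanishing on $\{z_2=0\}$ and $D_{22}$ on $\{z_1=0\}$, in sharp contrast to the isotropic Kraichnan case in which $D(z)\sim|z|^{2\alpha}\mathrm{I}$. Establishing decay of $f$ under this degenerate diffusion is the technical heart of Part 2 and is presumably what forces weakening ``in expectation'' to ``with some positive probability''---for example, one may be able to push decay of $f$ through only on a positive-probability event that avoids the coordinate axes, and recover a pathwise AD statement from there. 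Part 1, by contrast, should be routine modulo the standard but slightly delicate vanishing-viscosity convergence for rough one-dimensional transport equations.
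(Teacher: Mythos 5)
Your overall strategy coincides with the paper's: Part 1 is handled by exactly the same route (ODE uniqueness for concatenated continuous shears plus the contrapositive of Theorem~\ref{thm.ad-imp-non-unique-ode}), and this part is essentially complete --- indeed your extra vanishing-viscosity step is not even needed, since under the paper's definition of anomalous diffusion the contrapositive of Theorem~\ref{thm.ad-imp-non-unique-ode} already yields $\liminf_{\kappa\to0}\|\theta^\kappa\|_{L^2}(T)=\|\theta_0\|_{L^2}$ for every initial datum on every realization.

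In Part 2, however, there is a genuine gap at precisely the point you flag. You correctly observe that the limiting covariance is the anisotropic shear-sum $D_s$ and that Theorem~\ref{thm.ad} does not apply directly, but the essential new ingredient --- coercivity of the degenerate Dirichlet form $\int |y|^{2\alpha}|\partial_x g|^2 + |x|^{2\alpha}|\partial_y g|^2$ over mean-zero functions --- is left unproven, and your proposed fallback is not viable. The two-point function $f$ (equivalently $g_s^\kappa$) is a \emph{deterministic} solution of a \emph{deterministic} degenerate parabolic equation; there is no probabilistic event on which to ``avoid the coordinate axes,'' since the degeneracy of $a_s^\kappa$ on $\{x=0\}\cup\{y=0\}$ is a spatial feature of the PDE, not a random one. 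The paper resolves this with the anisotropic weighted Poincar\'e and Nash-type inequalities (Propositions~\ref{prop.weighted-poincare-nonradial} and~\ref{prop.est-for-nash-nonradial}), which hold unconditionally for all mean-zero functions on $[-\pi,\pi]^2$ and whose proofs (Appendix~\ref{a.ineq-proofs}) are the real technical content; with them the argument of Section~\ref{s.ad-kraichnan} runs verbatim and gives full decay in expectation, $\E\|\theta^\kappa_t\|_{L^2}^2\le Ce^{-t/C}\|\theta_0\|_{L^2}^2$. Relatedly, your reading of why the conclusion is stated ``with some positive probability'' is off: it is not a weakening forced by the degeneracy, but simply the observation that anomalous diffusion is defined per realization of $u$, so exponential decay in expectation only guarantees (via Fatou/Markov, as you note) a positive-probability set of realizations on which the pathwise $\liminf$ drops.
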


In order to show Theorem~\ref{thm.no-ad-intro} and its analogs for the other models, two additional results will be needed. The first is a result that allows us to show that the correlated-in-time models do not exhibit anomalous diffusion. Since these models are also $C^\alpha$ in space, we need some way of showing a $C^\alpha$ flow doesn't generate anomalous diffusion. Given the roughness of the flow, this is less straightforward than it may seem. To this end, we give a proof of the fact that if a flow generates anomalous diffusion, then the ODE trajectories associated to the flow must be non-unique. In particular, we show the following.

\begin{theorem}
    \label{thm.ad-imp-non-unique-ode}
	Suppose that $u \in L^\infty([0,T] \times \T^d)$ with $\nabla \cdot u = 0$, and $u$ exhibits anomalous diffusion. Then there exists a positive final data $\theta_f$ such that the continuity equation
	\begin{equation*}
	\begin{cases}
		\partial_t \theta - \nabla \cdot (u\theta) = 0,\\
		\theta(T,\cdot) = \theta_f,
	\end{cases}
	\end{equation*}
	has non-unique positive solutions weak solutions in $L^\infty([0,T], L^2(\T^d))$. Thus the backward ODE trajectories for $u$, started from time $T$, are non-unique for a positive measure subset of $\T^d$.
\end{theorem}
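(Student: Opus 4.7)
The plan is to use the anomalous $L^2$-loss together with Ambrosio's superposition principle and the volume preservation of divergence-free Lagrangian flows. From the anomalous diffusion hypothesis, pick $\theta_0 \in L^2(\T^d)$ and $\delta > 0$ such that the advection-diffusion solutions $\theta^\kappa$ with initial datum $\theta_0$ obey $\|\theta^\kappa(T)\|_{L^2}^2 \leq \|\theta_0\|_{L^2}^2 - \delta$ for all small $\kappa$. By the shift $\theta_0 \mapsto \theta_0 + M$ with $M$ large (constants are fixed points of the equation, and since $\int \theta^\kappa$ is conserved in time by divergence-freeness, the $\delta$-gap is preserved), we may assume $\theta_0 > 0$. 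The maximum principle yields uniform $L^\infty_{t,x}$ bounds; weak-$*$ compactness together with an Aubin--Lions-type argument extracts a subsequential limit $\theta^{\kappa_n} \weakto \theta$, a nonnegative $L^\infty$ weak solution of the continuity equation $\partial_t\theta + \nabla\cdot(u\theta) = 0$ with $\theta(0) = \theta_0$ and $\|\theta(T)\|_{L^2}^2 \leq \|\theta_0\|_{L^2}^2 - \delta$ by weak lower semicontinuity. Set $\theta_f := \theta(T)$.

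Next, apply Ambrosio's superposition principle---valid for merely $L^\infty$ divergence-free drifts---to the positive continuity-equation solution $\theta$. This produces a probability measure $\pi$ on $C([0,T]; \T^d)$ concentrated on absolutely continuous integral curves $X$ of $\dot X = u(t, X)$, whose time-$t$ marginal has density $\theta(t, \cdot)$ up to normalization by $\|\theta_0\|_{L^1}$. Disintegrate $\pi$ over evaluation at time $T$ as $\pi = \int \pi_y\, d\mu_T(y)$ with $\mu_T \propto \theta_f\, dx$, where each $\pi_y$ is a probability on trajectories ending at $y$ at time $T$.

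We claim $\pi_y$ fails to be a Dirac mass on a set of $y$ of positive $\mu_T$-measure. Indeed, if $\pi_y = \delta_{X^y}$ for $\mu_T$-a.e.\ $y$, the measurable selection $y \mapsto X^y$ gives a backward-in-time Lagrangian flow for $u$ starting from time $T$; by the divergence-free condition and standard results on Lagrangian flows of $L^\infty$ divergence-free vector fields, this flow preserves Lebesgue measure, forcing $\|\theta(t)\|_{L^2} \equiv \|\theta_f\|_{L^2}$, contradicting $\|\theta(0)\|_{L^2}^2 > \|\theta(T)\|_{L^2}^2$. Hence $\pi_y$ is supported on strictly more than one curve for $y$ in a set of positive $\mu_T$-measure, which has positive Lebesgue measure since $\theta_f \in L^2$ is nonnegative with $\int \theta_f = \int \theta_0 > 0$. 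This is exactly the desired non-uniqueness of backward ODE trajectories from time $T$. For the PDE part, two distinct measurable selections over the non-Dirac fibers yield $\pi^{(1)} \neq \pi^{(2)}$ with common terminal marginal $\mu_T$; their time-$t$ marginal densities are two distinct positive $L^\infty_t L^2_x$ weak solutions of the continuity equation with terminal data $\theta_f$, which via the standard time-reversal identification corresponds to two solutions of the stated backward continuity equation $\partial_t\eta - \nabla\cdot(u\eta) = 0$ with $\eta(T) = \theta_f$.

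The main obstacle is the rigorous application of Ambrosio's superposition principle to a drift with only $L^\infty$ regularity (no Sobolev/BV structure), and the extraction of the volume preservation of the resulting measurable Lagrangian flow from $\nabla \cdot u = 0$ alone; these are delicate but available in the literature. Matching the sign and time-reversal conventions between the forward-transport picture, where the vanishing-viscosity limit naturally lives, and the backward-continuity formulation of the statement requires some routine care, but does not present a conceptual obstacle since both equations share the characteristic integral curves of $u$ up to orientation, and since the argument only uses that the $L^2$-loss is incompatible with a measurable volume-preserving selection.
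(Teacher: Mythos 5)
Your overall ingredients (the anomalous energy drop, positivity of the data, Ambrosio's superposition principle) are the right ones, and the constant-shift trick $\theta_0 \mapsto \theta_0 + M$ is a nice alternative to the paper's splitting into positive and negative parts (though it requires first truncating $\theta_0$ to $L^\infty$, which you should justify by $L^2$-stability of the solution map, uniform in $\kappa$). However, the core of your argument has a genuine gap. From the hypothesis that the disintegration $\pi_y$ is a Dirac mass for $\mu_T$-a.e.\ $y$ you cannot conclude that the selection $y \mapsto X^y(t)$ preserves Lebesgue measure. That hypothesis only concerns the curves charged by the particular superposition measure $\pi$, only for $\theta_f\,dx$-a.e.\ endpoint, and the condition $\nabla\cdot u = 0$ constrains the distributional divergence of $u$, not an arbitrary a.e.-defined measurable selection of its characteristics: the absence of automatic measure preservation for such selections is precisely why regular Lagrangian flows carry a compressibility bound as part of their definition, and it is exactly what fails for rough divergence-free fields. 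All you actually get is $\theta(t)\,dx = \Phi_{t\#}(\theta_f\,dx)$ with $\Phi_t(y) = X^y(t)$ measurable; pushforward conserves mass but not the $L^2$ norm of the density (an ``expanding'' $\Phi_t$ lowers it), so no contradiction with $\|\theta(0)\|_{L^2} > \|\theta(T)\|_{L^2}$ is obtained. Your final step is also gappy: marginals of a measurable selection over the non-Dirac fibers need not be absolutely continuous with $L^2_x$ densities, and two distinct path measures can share all time marginals, so $\pi^{(1)} \ne \pi^{(2)}$ does not yield two distinct positive $L^\infty_t L^2_x$ solutions.

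The paper avoids both problems by inverting your logical order: it first produces \emph{two} distinct positive $L^\infty_t L^2_x$ solutions of the same final-value continuity problem, and only then invokes superposition in its standard contrapositive form (a.e.-unique characteristics imply unique positive solutions, so non-unique positive solutions imply non-unique backward trajectories). The second solution is constructed explicitly as the vanishing-viscosity limit of the \emph{backward} heat equation $\partial_t \tilde\theta^{\kappa} + \kappa\Delta\tilde\theta^{\kappa} - \nabla\cdot(u\tilde\theta^{\kappa}) = 0$ with final data $\theta_f$: its energy is monotone in the opposite time direction, so the limit satisfies $\|\tilde\theta\|_{L^\infty_t L^2_x}^2 \le \|\theta_f\|_{L^2}^2 \le E < \|\theta_0\|_{L^2}^2$, whereas the forward vanishing-viscosity limit attains $\|\theta_0\|_{L^2}^2$ at $t=0$ by weak $L^2$ continuity in time; hence the two solutions differ. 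I would recommend restructuring your proof along these lines: keep your shift-to-positivity step if you add the truncation, but replace the measure-preservation and selection arguments with the backward-diffusion construction of a second solution.
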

The above fact about ODE non-uniqueness is implied by the work~\cite{drivasLagrangianFluctuationDissipation2017} using stochastic analysis, but that paper is written in a more applied style and does not state theorems. While a rigorous proof along those lines could be straightforwardly developed, we provide an independent, non-probabilistic proof. The proof the non-uniqueness of the transport equation is similar to that of~\cite[Theorem 3]{drivas_anomalous_2022}. Our proof follows by applying a version of that argument together with splitting into positive and negative parts to prove non-uniqueness of the transport equation for positive solutions. We then conclude by applying Ambrosio's superposition principle~\cite[Theorem 3.2]{ambrosioTransportEquationCauchy2008}, which effectively says that unique ODE trajectories implies unique positive solutions to the associated continuity equation.

As stated above, we will use Theorem~\ref{thm.ad-imp-non-unique-ode} to show that given flows do not exhibit anomalous diffusion. In particular, we combine it with the fact that in 2D, continuous, autonomous, divergence-free vector fields that vanish nowhere have unique ODE trajectories (as shown in~\cite[Theorem 5.1]{silvestre_loss_2013}) to give the following corollary.
\begin{corollary}
    \label{cor.nonvanishing}
    Suppose $u \in C^0(\T^2)$ with $\nabla\cdot u = 0$ and $u$ is nowhere vanishing. Then $u$ does not exhibit anomalous diffusion.
\end{corollary}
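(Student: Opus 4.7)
The plan is to argue by contradiction, combining Theorem~\ref{thm.ad-imp-non-unique-ode} with the cited uniqueness result of Silvestre--Vicol. Suppose toward contradiction that $u$ does exhibit anomalous diffusion. Since $u$ is autonomous and continuous on the compact torus $\T^2$, it is bounded, hence belongs to $L^\infty([0,T]\times\T^2)$ for every $T>0$. Theorem~\ref{thm.ad-imp-non-unique-ode} then applies and produces, for some $T>0$, a positive measure set $E\subset\T^2$ on which the backward ODE trajectories of $u$ starting at time $T$ are non-unique.

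Next I would reduce to a statement about forward trajectories of an autonomous field, in order to match the hypotheses of the cited ODE uniqueness result. Because $u$ does not depend on time, the substitution $s = T-t$ shows that a backward trajectory of $\dot X_t = u(X_t)$ starting from time $T$ is the same object as a forward trajectory of $\dot Y_s = -u(Y_s)$. The field $-u$ satisfies every hypothesis imposed on $u$ in the corollary: it is continuous, autonomous, divergence-free, and nowhere vanishing on $\T^2$.

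Finally I would invoke \cite[Theorem 5.1]{silvestre_loss_2013} applied to $-u$: continuous, autonomous, divergence-free, nowhere-vanishing vector fields on $\T^2$ have unique ODE trajectories from every initial point. Translated back through the time reversal, this says that the backward trajectories of $u$ starting from time $T$ are unique from every point of $\T^2$, contradicting non-uniqueness on the positive measure set $E$ obtained in the first step. This contradiction forces $u$ to not exhibit anomalous diffusion.

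There is no real obstacle here: the argument is essentially a one-line combination of Theorem~\ref{thm.ad-imp-non-unique-ode} with the Silvestre--Vicol uniqueness theorem. The only minor subtlety is the translation between the ``backward trajectory'' language produced by Theorem~\ref{thm.ad-imp-non-unique-ode} and the ``forward trajectory'' language of the uniqueness result, which is immediate in the autonomous setting via $t\mapsto T-t$.
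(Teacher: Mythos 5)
Your proposal is correct and is exactly the argument the paper intends: the corollary is obtained by combining Theorem~\ref{thm.ad-imp-non-unique-ode} with the Silvestre--Vicol uniqueness theorem for continuous, autonomous, divergence-free, nowhere-vanishing fields on $\T^2$. The time-reversal remark (passing to $-u$ to convert backward trajectories into forward ones) is a correct and worthwhile detail that the paper leaves implicit.
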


The second result needed for Theorem~\ref{thm.no-ad-intro} is anomalous diffusion in an alternative Kraichnan model with a different spatial structure to the flow than the conventional model---that is it is specified by a different covariance matrix than usual. This Kraichnan model appears as the formal white-in-time limit of the flow in Theorem~\ref{thm.no-ad-intro} and is built on shear flows as opposed to generic divergence-free flows. Its precise specification is in Subsection~\ref{ss.shear-kraichnan}. Showing anomalous diffusion in this model follows identically the argument given in Section~\ref{s.ad-kraichnan} for the usual Kraichnan model, except different functional inequalities analogous to the Caffarelli-Kohn-Nirenberg inequalities are needed, namely Proposition~\ref{prop.weighted-poincare-nonradial} and Proposition~\ref{prop.est-for-nash-nonradial}.

\section{Kraichnan model}
\label{s.kraichnan-spec}

The flow specified by the Kraichnan model is a random Gaussian vector field. The flow is taken to be \textit{white-in-time}, which is to say distinct time slices are independent. This makes the flow a.s.\ distributional (as opposed to a classical function) in time, and so we we need a solution theory of SPDE to handle the associated advection-diffusion equation. In this section, we first give a precise specification of the flow and then of the associated SPDE. The white-in-time assumption is in some sense deeply unphysical, as a realistic flow might have very short time correlations, but certainly would not be uncorrelated-in-time. The reason the white-in-time assumption is introduced---despite the technical difficulties associated with the SPDE it induces and despite its unphysical nature---is that it allows us to write down a closed (deterministic) equation for the equal-time, two-point correlation function (as well as equations for the higher order correlation functions, though that is not used here), as is noted in~\cite{falkovichParticlesFieldsFluid2001}. We introduce the equation for the correlation function in this section. This equation is a degenerate parabolic equation that we can apply PDE estimates to in order to compute the decay of the expected $L^2$ norm of passive scalars. As such, we can translate all the analytic difficulty of the problem from analyzing a complex stochastic equation to analyzing a relatively simple degenerate diffusion equation. This certainly is not the case if one were to take the flow to be correlated-in-time, thus this section demonstrates how this argument depends essentially on the white-in-time property of the flow.

\subsection{Specification of the random drift}
\label{subsec.spec-kraichnan-drift}

We take the specification of the Kraichnan model similarly to~\cite{falkovichParticlesFieldsFluid2001}. We take $u$ to be the stationary zero-mean Gaussian field on $\T^d$ with covariance given as
\[\E u^{\eta,\alpha}_i(s,x) u^{\eta,\alpha}_j(t,y) = D^{\eta,\alpha}_{ij}(x-y) \delta(t-s),\]
where we give the Fourier transform of $D^{\eta,\alpha}_{ij}$,
\[
\hat D^{\eta,\alpha}_{ij}(k) = 
\begin{cases}
 \left(\delta_{ij} - \frac{k_i k_j}{|k|^2} \right) |k|^{-(d+2\alpha)} \rho(\eta |k|) & k \ne 0 \\
 0 & k =0,
\end{cases}\]
where $\rho$ is a smooth decreasing function such that $\rho(0) = 1$ and $\rho$ vanishes faster than any polynomial at $\infty.$ The paradigmatic examples of $\rho$ are
\[\rho(t) = e^{-t}\ \text{or}\ \rho(t) = e^{-t^2},\]
though of course that are many other admissible choices. We note that we let all constants freely depend on $\rho$. The term $\delta_{ij} - \frac{k_ik_j}{|k|^2}$ in $\hat D^{\eta,\alpha}_{ij}(k)$ is introduced in order to ensure $u$ is divergence free (in particular one can compute that $\E \nabla \cdot u^{\eta,\alpha}(s,x) \nabla \cdot u^{\eta,\alpha}(t,y) = 0$, so $\nabla \cdot u^{\eta,\alpha} = 0$ a.s.).

The two free parameters, $\alpha$ and $\eta$, both control regularity. The regime we will be interested in will be $\alpha$ fixed and $\eta \to 0$. In this regime, we see that the cutoff only effects very large $k$, so $\eta$ governs the \textit{spatial small scale regularity} of $u$. In particular, it ensures that for any positive $\eta$ that the field is spatially smooth. Physically, one should view $\eta$ as the dissipation scale induced by the viscosity of the advecting flow. Heuristically, $u$ is spatially smooth on length scales well below $\eta.$  

The other parameter $\alpha$ then controls the \textit{large scale regularity} of $u$. In particular, one should imagine that on length scales much bigger than $\eta$, $u$ ``looks'' $C^{\alpha-}$.\footnote{We use the notation $C^{\alpha-} := \bigcap_{\beta<\alpha} C^\beta$} In particular, Kolmogorov-Chentsov theorem gives that for $\eta=0$, $u$ is a.s.\ spatially $C^{\alpha-}$.

\subsection{Specification of the SPDE}
\label{subsec.spec-kraichnan-spde}

We now turn our attention to the equation solved by the scalar 
advected by the stochastic drift field. Recall the usual drift-diffusion equation, say for a smooth deterministic flow $v$,
\[\partial_t \theta - \kappa \Delta \theta + v \cdot \nabla \theta =0.\]
We would like to put our stochastic field $u$ in for $v$ and have that be the equation for $\theta$. Unfortunately, even though this equation is completely linear, the interpretation of such an equation with a white-in-time drift field is non-trivial. The whiteness-in-time elevates the equation from a more friendly \textit{random PDE} to a fully fledged \textit{stochastic PDE}. This brings with it a fairly large layer of technicalities just to do the usually simple existence theory. An important wrinkle is that the noise is acting \textit{multiplicatively}, in that instead of there being a white-in-time stochastic additive forcing, the stochasticity is in the drift, which acts multiplicatively against $\theta$. The presence of multiplicative noise creates a distinction between the It\^o interpretation of the equation and Stratonovich interpretation.

Taking a diversion into SDEs to illustrate the point in a simpler setting, let us consider the SDE 
\begin{equation}
\label{eq.formal-sde}
\dot X^j(t) = \sum_{i=1}^n f^j_i(X(t)) \xi^i(t),\end{equation}
where the $\xi^i$ are standard white noises (the ``time derivatives'' of standard Brownian motions). This is the setting of multiplicative noise in the SDE setting. The equation is purposefully written informally to illustrate the interpretative difficulties. The usual way that this sort of equation interpreted mathematically is in the It\^o sense, usually written
\[dX^j_t = \sum_{i=1}^n f^j_i(X_t) dW^i_t,\]
and a solution is such that for each $t$
\[X^j_t = X^j_0 + \sum_{i=1}^n\int_0^t  f^j_i(X_s) dW^i_s,\]
where the integral is an It\^o integral.

Another reasonable way to try to interpret~\eqref{eq.formal-sde} would be through mollification. One could replace the distributions $\xi^i$ with $\xi^i_\epsilon := \eta_\epsilon * \xi^i$ where $\eta_\epsilon$ is a standard family of mollifiers. Then for each $\epsilon>0$, we get a well defined random ODE which we could for the stochastic process $X^j_\epsilon(t)$. Then we take the limit as $\epsilon \to 0$. 

One may expect that this process would recover the usual It\^o solution to the SDE. \textit{This turns out to be false in general.} What the solutions $X_\epsilon$ to the mollified equations converge to is generically the solution to the \textit{Stratonovich SDE}. This SDE is usually written as
\[dX_t^j =\sum_{i=1}^n f_i^j(X_t) \circ dW_t^i,\]
and solutions $X_t^j$ are such that for every $t,$
\[X_t^j = X_0^j +  \sum_{i=1}^n \int_0^t f_i^j(X_s) \circ dW_t^i,\]
where the integral is interpreted as the Stratonovich integral. The distinction between the It\^o and Stratonovich integrals effectively amounts to a difference in the convention for computing the ``Riemann sums'' for these integrals, with the It\^o convention corresponding to a left Riemman sum and the Stratonovich to a midpoint Riemann sum.

Since the It\^o SDE is not recovered by mollifying the noise, for many physical models the natural SDE model is the Stratonovich SDE, as one is generically ambivalent between very short time correlations and white-in-time correlations. On the other hand, it turns out that It\^o calculus is often mathematically more convenient than Stratonovich calculus. Fortunately, Stratonovich SDEs can be phrased as equivalent It\^o SDEs (and vice versa). The It\^o to Stratonovich conversion gives that the following SDEs are equivalent
\[dX^j_t = b^j(X_t) dt + \sum_{i=1}^n f^j_i(X_t) \circ dW^i_t \]
and
\begin{equation}
    \label{eq.strat-to-ito}
    dX^j_t = \left(b^j(X_t) + \frac{1}{2} \sum_{k,i=1}^n \partial_k f^j_i(X_t) f^k_i(X_t)\right)dt + \sum_{i=1}^n f^j_i(X_t) dW^i_t,
\end{equation}
where the first is a Stratonovich SDE and the second an It\^o SDE. In particular, we note they differ by a deterministic drift term. For a more complete discussion, see \cite{evansIntroductionStochasticDifferential2012}.

Returning to the Kraichnan SPDE, the presence of multiplicative noise forces us to choose a convention---It\^o or Stratonovich---in defining the equation. The above discussion motivated that the Stratonovich convention is the natural one for this equation. The Stratonovich convention is also what is universally used in the literature on the Kraichnan model. Using the notation of SDEs, we write the Kraichnan SPDE as
\begin{equation}
\label{eq.strat-spde}
d\theta^{\kappa,\eta,\alpha}_t = \kappa \Delta \theta^{\kappa,\eta,\alpha}_t - u^{\eta,\alpha} \odot \
\nabla \theta^{\kappa,\eta,\alpha}_t.
\end{equation}
The equivalent It\^o SPDE is the following
\begin{equation}
\label{eq.ito-spde}
d\theta^{\kappa,\eta,\alpha}_t = \left(\tfrac{1}{2}D(0) +\kappa I\right): \nabla^2 \theta^{\kappa,\eta,\alpha}_t - u^{\eta,\alpha} \cdot \nabla \theta^{\kappa,\eta,\alpha}_t.
\end{equation}
We see it involves a correction term similar to the SDE case. For a derivation of the Stratonovich to It\^o conversion for the Kraichnan model, see~\cite{dunlapQuenchedLocalLimit2022} or~\cite[Section 2.3]{galeati_convergence_2020}.\footnote{As the latter reference makes particularly clear, one way of computing this correction is to view the SPDE as an infinite system of SDEs driven by iid Brownian motions and use the Stratonovich-to-It\^o correction described above for SDEs to compute the correction for the SPDE. This is straightforward but laborious and unenlightening, so we refrain from repeating the argument here.}

The existence theory for this sort of SPDE is technical but well developed. The details are not relevant to the present study. Unique solutions can be found in the generalized solution sense due to Kunita \cite{kunitaFirstOrderStochastic1984}. This solution theory is developed for this exact model (on $\R^d$ instead of $\T^d$, but everything follows similarly) in \cite{dunlapQuenchedLocalLimit2022}. An alternative $L^2$-based solution theory is developed in~\cite{galeati_convergence_2020}.

What we will need for our study of the Kraichnan model is the PDE for the equal-time two-point correlation
\[f^{\kappa,\eta,\alpha}(t,x,y) := \E \theta^{\kappa,\eta,\alpha}(t,x) \theta^{\kappa,\eta,\alpha}(t,y).\]
We provide the derivation with formal It\^o calculus---suppressing temporarily the superscripts.
\begin{align*}
    \partial_t f(t,x,y) &= \E d (\theta(t,x) \theta(t,y))
    \\&= \E d\theta_t(x) \theta_t(y) + \E \theta_t(x) d\theta_t(y) + \E \<d\theta_t(x), d\theta_t(y)\>
    \\&= \left(\tfrac{1}{2}D(0) + \kappa \right):(\nabla_x^2 + \nabla_y^2) \E \theta(t,x) \theta(t,y) +  \<u_t(x), u_t(y)\>_{ij}\partial_{x_i} \partial_{y_j} \E  \theta(t,x)  \theta(t,y)
    \\&= \left(\tfrac{1}{2}D(0) + \kappa \right):(\nabla_x^2 + \nabla_y^2) f(t,x,y) +   D_{ij}(x-y)\partial_{x_i} \partial_{y_j} f(t,x,y).
\end{align*}
This computation is rigorously justified in \cite{dunlapQuenchedLocalLimit2022}.

We note the equation for $\theta$ is translation invariant, so if we start with random initial data with translation invariant law $\theta_0(x) \stackrel{d}{=} \theta_0(x+v)$ for any $v \in \R^d,$ then $\theta$ will remain translation invariant in law, in particular, $f(t,x,y)$ will invariant under the translation $f(t,x,y) = f(t,x+v,y+v)$ for any $v \in \R^d$. Thus
\[f(t,x,y) = f(t,x-y,0) =: g(t,x-y).\]
Then we note that $g$ solves
\[
\partial_t g = ( 2\kappa I +  D(0)- D(r)): \nabla^2 g = \nabla \cdot (2\kappa I +D(0) - D(r)) \nabla g,\]
where we use that $D$ is divergence-free.

We summarize the consequence of the above discussion as the following proposition.

\begin{proposition}
    Let $u$ be the white-in-time Gaussian field specified above. Then the Stratonovich SPDE
    \[d\theta_t^{\kappa,\eta,\alpha} = \kappa \Delta \theta_t^{\kappa,\eta,\alpha} - u^{\eta,\alpha} \odot \nabla \theta_t^{\kappa,\eta,\alpha},\]
    is formally equivalent to the It\^o SPDE
    \[d\theta^{\kappa,\eta,\alpha}_t = \left(\tfrac{1}{2}D^{\eta,\alpha}(0) +\kappa I\right): \nabla^2 \theta^{\kappa,\eta,\alpha}_t - u^{\eta,\alpha} \cdot \nabla \theta^{\kappa,\eta,\alpha}_t.\]
    This SPDE has unique solutions in the sense of Kunita (or in the sense of energy solutions). Letting 
    \[f^{\kappa,\eta,\alpha}(t,x,y) := \E \theta^{\kappa,\eta,\alpha}(t,x) \theta^{\kappa,\eta,\alpha}(t,y),\]
    then $f$ solves the PDE 
    \[\partial_t f^{\kappa,\eta,\alpha}(t,x,y) =\left(\tfrac{1}{2}D^{\eta,\alpha}(0) + \kappa \right):(\nabla_x^2 + \nabla_y^2) f^{\kappa,\eta,\alpha}(t,x,y) +   D^{\eta,\alpha}_{ij}(x-y)\partial_{x_i} \partial_{y_j} f(t,x,y).\]
    In the case that $\theta_0$ is translation invariant in law, then so is $\theta^{\kappa,\eta,\alpha}$, and thus $f^{\kappa,\eta,\alpha}$ is translation invariant, so
    \[\E \theta^{\kappa,\eta,\alpha}(t,x) \theta^{\kappa,\eta,\alpha}(t,y) = f^{\kappa,\eta,\alpha}(t,x,y) = g^{\kappa,\eta,\alpha}(t,x-y)\]
    where $g^{\kappa,\eta,\alpha}$ solves
    \[\partial_t g^{\kappa,\eta,\alpha} = \nabla \cdot a^{\kappa,\eta,\alpha}(x) \nabla g^{\kappa,\eta,\alpha}\]
    with 
    \begin{equation}
    \label{eq.a-def}
    a^{\kappa,\eta,\alpha}(r) := 2 \kappa I +D^{\eta,\alpha}(0) - D^{\eta,\alpha}(r).
    \end{equation}
\end{proposition}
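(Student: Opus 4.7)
The plan is to treat the four assertions of the proposition in turn, since most of the ingredients have been sketched in the preceding exposition. For the Stratonovich-to-It\^o equivalence, I would first diagonalize the covariance of $u^{\eta,\alpha}$ in a Fourier basis, so that the SPDE becomes an (infinite) system of linear SDEs with multiplicative coefficients of the form $-e_k\cdot\nabla$ driven by iid standard Brownian motions; applying the finite-dimensional conversion formula~\eqref{eq.strat-to-ito} mode-by-mode and resumming via $\sum_k e_k\otimes e_k = D^{\eta,\alpha}(0)$ (the first-order piece of the correction vanishes because $\nabla\cdot e_k = 0$) should recover exactly the It\^o drift $\tfrac{1}{2}D^{\eta,\alpha}(0):\nabla^2\theta$ appearing in~\eqref{eq.ito-spde}. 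Existence and uniqueness would then be quoted from Kunita's generalized solution theory in the case $\eta>0$, when $u^{\eta,\alpha}$ is spatially smooth, and from the $L^2$-based framework of~\cite{galeati_convergence_2020} in the case $\eta=0$.

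The central calculation is deriving the PDE for $f^{\kappa,\eta,\alpha}$. I would apply It\^o's product rule to $\theta_t(x)\theta_t(y)$ using the It\^o form~\eqref{eq.ito-spde}, take expectations to kill the martingale increments from the $-u^{\eta,\alpha}\cdot\nabla\theta$ term, and isolate the quadratic covariation $\langle d\theta_t(x),d\theta_t(y)\rangle$. The covariation is read off directly from the covariance structure of $u^{\eta,\alpha}$, producing $D^{\eta,\alpha}_{ij}(x-y)\partial_{x_i}\theta_t(x)\partial_{y_j}\theta_t(y)\,dt$, and commuting the derivatives with the expectation yields the mixed term in the claimed equation. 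For translation invariance, I would exploit the stationarity of the law of $u^{\eta,\alpha}$: since its covariance depends only on $x-y$, the field $u^{\eta,\alpha}(\cdot,\cdot+v)$ has the same law as $u^{\eta,\alpha}$ for every $v\in\T^d$, and uniqueness in law transports the assumed invariance of $\theta_0$ to $\theta^{\kappa,\eta,\alpha}_t$ and hence to $f$. Setting $g(t,r):=f(t,r,0)$ and using $\partial_{x_i} = -\partial_{y_i} = \partial_{r_i}$ when acting on $g(t,x-y)$, I would collapse the $f$-equation to $\partial_t g = (2\kappa I + D^{\eta,\alpha}(0) - D^{\eta,\alpha}(r)):\nabla^2 g$, and rewrite it in divergence form using $\nabla\cdot D^{\eta,\alpha}=0$, which is itself a direct consequence of the transverse-projector factor $\delta_{ij}-k_ik_j/|k|^2$ in $\hat D^{\eta,\alpha}$.

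The main obstacle is the rigorous justification of It\^o's product rule when the solutions are only Kunita or energy solutions rather than classical ones, since $u^{\eta,\alpha}$ is distributional in time and $\theta^{\kappa,\eta,\alpha}_t$ is a priori only $L^2_x$-regular; in particular, one must make sense of the cross-variation identity at this low regularity. Rather than reproducing this delicate argument, I would invoke~\cite{dunlapQuenchedLocalLimit2022, galeati_convergence_2020}, where exactly this computation is carried out in the requisite generality, so that the present proposition ultimately reduces to a careful consolidation of results already available in the literature.
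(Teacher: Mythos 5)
Your proposal follows essentially the same route as the paper, which proves this proposition by the preceding discussion: the mode-by-mode Stratonovich-to-It\^o conversion (sketched in the paper's footnote), the formal It\^o product rule computation for $f$ with rigor deferred to \cite{dunlapQuenchedLocalLimit2022, galeati_convergence_2020}, translation invariance of the law, and the divergence-free property of $D^{\eta,\alpha}$ to pass to divergence form. The only small imprecision is that killing the first-order part of the It\^o correction uses both $\nabla\cdot e_k=0$ and the constancy of $\sum_k e_k\otimes e_k = D^{\eta,\alpha}(0)$, not the divergence-free property alone.
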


\section{Anomalous diffusion for the Kraichnan model}
\label{s.ad-kraichnan}

In this section, we keep track of explicit dependence of all constants on $\alpha,\eta,\kappa$; abstract analytic constants $C$ depend only on the dimension $d$ (and implicitly the cutoff function $\rho$). We are interested in anomalous diffusion, so we are interested in the $L^2$ diffusion in the limit as $\kappa \to 0$ for fixed $\alpha$. We will have to take a simultaneous limit as $\eta \to 0$ though in order to get the diffusion anomaly. This is because the advecting field is very smooth on scales well below $\eta$, and we need roughness on small scales in order to get the diffusion anomaly. Thus we are interested in the diagonal limits $\eta,\kappa \to 0$. Anomalous diffusion under the proper class of diagonal limits will be a consequence of the explicit bounds in terms of $\eta, \kappa.$

It is worth noting here that this argument works only in the case that $\alpha <1,$ that is the spatial structure of the field must be less regular than Lipschitz. Thus we see that the anomalous diffusion in the Kraichnan model is generated both by the white-in-time nature of the field \textit{and its spatial roughness.} A white-in-time but spatially smooth field would not work.

Throughout, we assume that $\theta^{\kappa,\eta,\alpha}$ has initial data that is translation invariant in law. Note that 
\[g^{\kappa,\eta,\alpha}(t,0) = \E \theta^{\kappa,\eta,\alpha}(t,0)^2 = \E \theta^{\kappa,\eta,\alpha}(t,x)^2 = \frac{1}{C}\E \int \theta^{\kappa,\eta,\alpha}(t,x)^2\ dx = \frac{1}{C}\E \|\theta^{\kappa,\eta,\alpha}(t,\cdot)\|_{L^2(\T^d)}^2.\]
Thus we can show (expected) $L^2$ diffusion of $\theta^{\kappa,\eta,\alpha}$ by showing time decay of $g^{\kappa,\eta,\alpha}(t,0),$ which we now take as our primary goal. Recall that
\[\partial_t g^{\kappa,\eta,\alpha} =  \nabla \cdot a^{\kappa,\eta,\alpha}(x) \nabla g^{\kappa,\eta,\alpha},\]
so $g$ solves a divergence-form diffusion equation. Thus to show the desired decay, we need control on the diffusion matrix $a^{\kappa,\eta,\alpha}.$ The following proposition gives the desired control. We defer the computationally intensive proof---which is similar to computations that have appeared previously in the literature, such as in~\cite{eyink_existence_1996-tex-fixed}---to Appendix~\ref{a.diffusion-matrix-computation}. Note that throughout we make the notational identification of $\T^d = [-\pi,\pi]^d.$

\begin{proposition}
 \label{prop.diffusion-matrix-bound}
    There exists $c(d) >0$ such that for any $\alpha \in (0,1), \beta \in [\alpha,1], \eta \in [0,c), \kappa \geq 0, w \in \R^d,$
        \begin{equation} 
        \label{eq.matrix-lower-bound-beta}
        w \cdot a^{\kappa,\eta,\alpha}(x) w \geq c (1 \land \eta^{2\beta (\alpha-1)} \kappa^{1-\beta}) |x|^{2\beta} |w|^2,\end{equation}
        in particular,
        \begin{equation}
        \label{eq.matrix-lower-bound-square}
        w \cdot a^{\kappa,\eta,\alpha}(x) w \geq c |x|^2 |w|^2.
        \end{equation}
\end{proposition}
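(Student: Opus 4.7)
The plan is to use the Fourier representation of $D^{\eta,\alpha}$ to extract pointwise lower bounds on the structure function $D^{\eta,\alpha}(0)-D^{\eta,\alpha}(x)$ and then interpolate. Expanding in Fourier series,
\[
w \cdot (D^{\eta,\alpha}(0) - D^{\eta,\alpha}(x)) w = \sum_{k\ne 0} \Big(|w|^2 - \tfrac{(k\cdot w)^2}{|k|^2}\Big)\frac{\rho(\eta|k|)}{|k|^{d+2\alpha}}\,(1-\cos(k\cdot x)),
\]
and every summand is nonnegative, so any partial sum of the right-hand side gives a lower bound.

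The first step is to produce two complementary shell-based lower bounds. For $|x| \gtrsim \eta$ I would localize to the dyadic shell $\{k : |k| \sim 1/|x|\}$; on this shell $\rho(\eta|k|) \gtrsim 1$, each summand has $|k|^{-(d+2\alpha)} \sim |x|^{d+2\alpha}$, and the number of lattice sites is $\sim |x|^{-d}$, giving a shell total of order $|x|^{2\alpha}$ once one verifies that a positive fraction of the directions $k$ in the shell are simultaneously transverse to $w$ (so that $\sin^2\angle(k,w) \gtrsim 1$) and not too aligned with $x$ (so that $1-\cos(k\cdot x) \gtrsim 1$). For $|x| \lesssim \eta$ I would localize instead to $\{k : |k|\sim 1/\eta\}$, where $\rho(\eta|k|)$ still contributes a positive constant; here $k\cdot x$ is small, so $(1-\cos(k\cdot x)) \sim (k\cdot x)^2 /2 \sim |x|^2/\eta^2$ on typical directions, and the same counting argument yields a shell total of order $\eta^{2\alpha-2}|x|^2$. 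Together with the trivial contribution $2\kappa |w|^2$ from $2\kappa I$, this produces the pointwise bounds
\begin{align*}
w \cdot a^{\kappa,\eta,\alpha}(x) w &\geq c\, |x|^{2\alpha}\,|w|^2 && \text{for } |x|\gtrsim \eta, \\
w \cdot a^{\kappa,\eta,\alpha}(x) w &\geq c\,(2\kappa + \eta^{2\alpha-2}|x|^2)\,|w|^2 && \text{for } |x|\lesssim \eta.
\end{align*}

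The second step is interpolation. In the regime $|x| \gtrsim \eta$, since $\beta \geq \alpha$ and $|x| \leq \sqrt{d}\,\pi$ on $\T^d$, one has $|x|^{2\alpha} = |x|^{2\beta}\cdot|x|^{2(\alpha-\beta)} \geq c\,|x|^{2\beta}$, which supplies the ``$1$'' branch of the $1\wedge$ prefactor in \eqref{eq.matrix-lower-bound-beta}. In the regime $|x| \lesssim \eta$, weighted AM--GM applied to the two positive summands gives
\[
2\kappa + \eta^{2\alpha-2}|x|^2 \;\geq\; (1-\beta)(2\kappa) + \beta\,\eta^{2\alpha-2}|x|^2 \;\geq\; c\,\kappa^{1-\beta}\eta^{2\beta(\alpha-1)}|x|^{2\beta},
\]
which supplies the other branch. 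Combining the two regimes and taking the minimum of the two prefactors yields \eqref{eq.matrix-lower-bound-beta}. The bound \eqref{eq.matrix-lower-bound-square} is then the specialization $\beta = 1$: the prefactor becomes $1 \wedge \eta^{2(\alpha-1)}$, and since $\alpha<1$ and $\eta \leq c \leq 1$ we have $\eta^{2(\alpha-1)} \geq 1$, so the minimum equals $1$.

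I expect the main obstacle to be the lattice-counting/angular-averaging step used to convert each dyadic-shell computation into a rigorous pointwise lower bound, uniformly in $x$, $w$, and in the shell scale. Concretely, one must show that in each frequency annulus a positive density of lattice directions simultaneously avoids the hyperplane $w^\perp$ and avoids the set $\{k : k\cdot x \in 2\pi\Z\}$; on the torus this has to be done while tracking the scales $|x|$ and $\eta$ across their full admissible range. Once these two pointwise lower bounds are in hand, the interpolation and the reduction of \eqref{eq.matrix-lower-bound-square} to \eqref{eq.matrix-lower-bound-beta} are elementary.
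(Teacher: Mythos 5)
Your proposal follows essentially the same route as the paper's proof in Appendix~A: lower-bound the Fourier sum for $w\cdot(D^{\eta,\alpha}(0)-D^{\eta,\alpha}(x))w$ by restricting to a frequency shell at scale $1/|x|$ when $|x|\gtrsim\eta$ and at scale $1/\eta$ when $|x|\lesssim\eta$, handle the lattice/angular-density issue by showing a positive fraction of directions contributes, and then combine with the $2\kappa$ term via weighted AM--GM (the paper instead minimizes $\eta^{2\alpha-2}|x|^{2-2\beta}+\kappa|x|^{-2\beta}$, which is the same computation). One small wording slip: on the shell $|k|\sim 1/|x|$ the dangerous directions are those nearly \emph{orthogonal} to $x$ (where $k\cdot x\approx 0$), not those ``too aligned'' with it, though your later formulation in terms of avoiding $\{k : k\cdot x\in 2\pi\Z\}$ is the correct one.
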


\begin{remark}
    The above expression~\eqref{eq.matrix-lower-bound-beta} is a bit complicated as we keep track of explicit constants in $\kappa,\eta,\alpha$ and introduce a free parameter $\beta$. This constant dependence turns out to be reasonably simple in the final result (Theorem~\ref{thm.ad}), but for ease of reading, one can keep in mind the somewhat natural case that $\beta = \alpha, \kappa \approx \eta^{2\alpha}.$ We see in this case we get the bound that 
    \[w \cdot a^{\eta^{2\alpha},\eta,\alpha} w \geq c |x|^{2\alpha} |w|^2.\]
\end{remark}

These bounds~\eqref{eq.matrix-lower-bound-beta} and~\eqref{eq.matrix-lower-bound-square} are all the control we will need to prove the desired decay. Our goal is to show a uniform rate of time decay of $g^{\kappa,\eta,\alpha}(t,0)$. Since, for fixed $\kappa>0$, the matrix $a^{\kappa,\eta,\alpha}(x)$ is uniformly elliptic, classical parabolic theory gives that the solution $g^{\kappa,\eta,\alpha}$ is continuous. As such, it suffices to show the $L^\infty$ decay of $g^{\kappa,\eta,\alpha}$. Our proof will proceed directly along the usual lines: first we show the $L^2$ decay of $g^{\kappa,\eta,\alpha}$ and then we show an $L^2$ bound for the fundamental solution. These together then imply an $L^\infty$ bound on $g^{\kappa,\eta,\alpha}.$

The proofs will also be essentially the usual arguments of parabolic theory but with different functional inequalities than those used in the proof for uniformly elliptic matrices. The only parabolic estimate is the energy identity,
\[\frac{d}{dt} \frac{1}{2} \|g^{\kappa,\eta,\alpha}\|_{L^2(\T^d)}^2 = -\int \nabla g^{\kappa,\eta,\alpha} \cdot a^{\kappa,\eta,\alpha}(x) \nabla g^{\kappa,\eta,\alpha}\ dx.\]
Normally, one uses the Poincar\'e inequality for the $L^2$ decay and the original argument of Nash uses a special case of the Gagliardo-Nirenberg interpolation inequalities to get the $L^2$ control of the fundamental solution.

Instead of having access to the $L^2$ norm of the gradient of the solution---as is the case for uniformly elliptic diffusion matrices---we only have access to a version weighted by a power of $|x|$, 
\[\int |x|^{2\beta} |\nabla g^{\kappa,\eta,\alpha}|^2\ dx.\]
Thus instead of using the inequalities suggested above, we use weighted versions, which are effectively the Caffarelli-Kohn-Nirenberg inequalities~\cite{caffarelliFirstOrderInterpolation1984}.

In particular we need a version of the Poincar\'e inequality to do the $L^2$ estimate. This is provided by the following. Note the proposition is stated for $0$-mean functions on the hypercube and as such holds \textit{a fortiori} for $0$-mean functions on the torus.

\begin{proposition}
    \label{prop.weighted-poincare-base}
    Let $d \geq 2$, $D:= [-\pi,\pi]^d,$ and $g : D \to \R$ such that $\int g\ dx = 0$. Then there exists $C(d) < \infty$ such that
    \[\|g\|_{L^2(D)} \leq C \||x| \nabla g\|_{L^2(D)}.\]
\end{proposition}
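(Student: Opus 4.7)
The difficulty is that the weight $|x|^2$ degenerates at the origin, so standard Poincar\'e on $D$ does not apply. My plan is to reduce the inequality to a one-dimensional weighted Poincar\'e along rays from the origin, then close using a Poincar\'e inequality on $S^{d-1}$. The one-dimensional lemma I need is: for any $R>0$ and any $h \in H^1((0,R])$ with $\int_0^R h(r) r^{d-1}\, dr = 0$,
\begin{equation*}
\int_0^R h(r)^2 r^{d-1}\, dr \leq C \int_0^R h'(r)^2 r^{d+1}\, dr,
\end{equation*}
with $C$ independent of $R$. The substitution $u = (r/R)^d$ (which makes both sides scale as $R^d$) reduces this to: for $H \in H^1((0,1])$ with $\int_0^1 H = 0$, $\int_0^1 H^2\, du \leq 4 \int_0^1 u^2 (H')^2\, du$. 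To prove it, set $F(u) := \int_0^u H$, so $F(0)=F(1)=0$. Integration by parts gives $\int H^2 = -\int F H'$; Cauchy--Schwarz gives $|\int F H'| \leq \|F/u\|_{L^2}\|u H'\|_{L^2}$; and the classical 1D Hardy inequality (valid since $F(0)=0$) gives $\|F/u\|_{L^2} \leq 2\|F'\|_{L^2} = 2\|H\|_{L^2}$. Combining yields $\|H\|_{L^2} \leq 2 \|u H'\|_{L^2}$.

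Next, parametrize $D$ in polar coordinates $x = r\omega$ with $\omega \in S^{d-1}$ and $r \in [0, \rho(\omega)]$, where $\rho(\omega) := \pi/\|\omega\|_\infty \in [\pi, \pi\sqrt{d}]$. For each $\omega$ set $h_\omega(r) := g(r\omega)$ (so $|h_\omega'(r)|^2 \leq |\nabla g(r\omega)|^2$) and let $c(\omega) := (d/\rho(\omega)^d)\int_0^{\rho(\omega)} h_\omega(r) r^{d-1}\, dr$ be the weighted radial mean of $h_\omega$. Since $h_\omega - c(\omega)$ has weighted mean zero on $[0, \rho(\omega)]$, applying the one-dimensional lemma pointwise in $\omega$ and integrating over $S^{d-1}$ yields
\begin{equation*}
\int_D \big(g(x) - c(x/|x|)\big)^2\, dx \leq C \int_D r^2 (\partial_r g)^2\, dx \leq C \int_D |x|^2|\nabla g|^2\, dx.
\end{equation*}

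It remains to bound $\int_{S^{d-1}} c(\omega)^2\, d\sigma$, which (since $\rho$ is bounded above and below) controls $\int_D c(x/|x|)^2\, dx$. The global constraint $\int_D g = 0$ is exactly $\int_{S^{d-1}} c(\omega)\rho(\omega)^d\, d\sigma = 0$, so combining with Poincar\'e on $S^{d-1}$ reduces the task to bounding $\int_{S^{d-1}} |\nabla_\omega c|^2\, d\sigma$. Differentiating $c(\omega)$ in $\omega$ produces an interior term $\propto (d/\rho^d)\int_0^\rho r^d (\tau \cdot \nabla g(r\omega))\, dr$, which by Cauchy--Schwarz is bounded pointwise by $C\rho^{-d} \int_0^\rho r^{d+1}|\nabla g|^2\, dr$, and a boundary term $\propto (g(\rho\omega) - c(\omega))\, \nabla_\omega \rho$. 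For the latter, $|\nabla_\omega \rho|$ is bounded on $S^{d-1}$ (since $\rho = \pi/\|\cdot\|_\infty$ is piecewise smooth), and the pointwise trace identity $H(1) - \int_0^1 H\, du = \int_0^1 u H'(u)\, du$ together with Cauchy--Schwarz gives the pointwise bound $|h_\omega(\rho) - c(\omega)|^2 \leq C \rho^{-d} \int_0^\rho (h_\omega')^2 r^{d+1}\, dr$. Both contributions integrate in $\omega$ to $C\||x|\nabla g\|_{L^2(D)}^2$, and combining the estimates yields the claim.

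\emph{Main obstacle.} The main technical work is in the angular estimate for $c(\omega)$: if $D$ were a ball, $\rho$ would be constant in $\omega$ and this step would be immediate, but for the cube the $\omega$-dependence of $\rho$ produces a boundary correction whose control requires the sharp pointwise trace bound from the 1D inequality. The invocation of the classical 1D Hardy inequality is the crux of the whole argument, as it simultaneously handles the weighted Poincar\'e along each ray and the pointwise trace estimate needed to close the angular step.
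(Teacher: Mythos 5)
Your argument is correct, and it is a genuinely different proof from the one in the paper. The paper argues by compactness and contradiction: it normalizes a putative counterexample sequence $g_n$ with $\|g_n\|_{L^2}=1$ and $\||x|\nabla g_n\|_{L^2}\to 0$, extracts a weak limit, shows the limit vanishes (its distributional gradient is zero away from the origin), and then derives a contradiction by combining the Euclidean Caffarelli--Kohn--Nirenberg inequality $\|v\|_{L^2(\R^d)}\le C\||x|\nabla v\|_{L^2(\R^d)}$ for compactly supported $v$ (applied to a cutoff of $g_n$) with Rellich--Kondrachov compactness on the annular region away from the origin. You instead give a direct, quantitative proof: a one-dimensional weighted Poincar\'e inequality along rays (reduced by the substitution $u=(r/R)^d$ to the classical Hardy inequality, with the scaling checked correctly and a constant $4/d^2$ independent of $R$), followed by a Poincar\'e inequality on $S^{d-1}$ for the radial averages $c(\omega)$, whose angular gradient you control by splitting into an interior term and a boundary term coming from the $\omega$-dependence of $\rho(\omega)=\pi/\|\omega\|_\infty$; your pointwise trace bound $|g(\rho\omega)-c(\omega)|^2\le C\rho^{-d}\int_0^\rho (h_\omega')^2 r^{d+1}\,dr$ is exactly what is needed there, and the requirement $d\ge 2$ enters through the connectedness of $S^{d-1}$. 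The paper's route is shorter and leans on CKN as a black box but is non-constructive; yours is longer but elementary, self-contained (only 1D Hardy and spherical Poincar\'e), yields an explicit constant in principle, and in fact reproves the relevant case of CKN on the cube rather than importing it. The only points to make explicit in a full write-up are that $\rho$ is Lipschitz on $S^{d-1}$ (so $c$ is Lipschitz for smooth $g$ and the spherical Poincar\'e applies), that the weighted mean-zero condition $\int_{S^{d-1}}c\,\rho^d\,d\sigma=0$ suffices for Poincar\'e on the sphere since $\rho^d$ is bounded above and below, and a closing density argument to pass from smooth $g$ to general $g$ with finite right-hand side.
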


We also will need the following weighted interpolation inequality for the estimate on the fundamental solution.

\begin{proposition}
    \label{prop.est-for-nash}
    Let $d \geq 2$, $\beta \in (0,1)$, and $g : \T^d \to \R,$ then there exists $C(d)< \infty$ such that 
    \[\|g\|_{L^2(\T^d)} \leq C (\| |x|^\beta \nabla g\|_{L^2(\T^d)}^a + \|g\|_{L^2(\T^d)}^a) \|g\|_{L^1(\T^d)}^{1-a}.\]
    with
    \[a = \frac{d}{d+2-2\beta} \in (0,1).\]
\end{proposition}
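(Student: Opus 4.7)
The plan is to deduce this weighted Nash-type inequality from a weighted Sobolev embedding combined with H\"older interpolation. Set $q := 2d/(d-2+2\beta)$; a direct algebraic check shows that $\tfrac{1}{2} = (1-a) + \tfrac{a}{q}$ for $a = d/(d+2-2\beta)$ as in the statement. Consequently H\"older's inequality gives $\|g\|_{L^2(\T^d)} \leq \|g\|_{L^1(\T^d)}^{1-a} \|g\|_{L^q(\T^d)}^a$, and so it suffices to establish the weighted Sobolev-type estimate
\[
\|g\|_{L^q(\T^d)} \leq C\big(\||x|^\beta \nabla g\|_{L^2(\T^d)} + \|g\|_{L^2(\T^d)}\big);
\]
once this is in hand, the elementary inequality $(x+y)^a \leq x^a + y^a$ for $a \in (0,1)$ yields the claim.

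To prove the weighted Sobolev estimate on the torus, I would invoke the Caffarelli-Kohn-Nirenberg inequality on $\R^d$~\cite{caffarelliFirstOrderInterpolation1984}, which in the relevant special case asserts that for every $u \in C^\infty_c(\R^d)$,
\[
\|u\|_{L^q(\R^d)} \leq C \||x|^\beta \nabla u\|_{L^2(\R^d)},
\]
with $q$ as above; the exponent is exactly the one forced by scaling, and the admissibility constraints reduce to $d \geq 2$ and $\beta \in (0,1)$. To transfer this to $\T^d$, fix a smooth cutoff $\chi$ with $\chi \equiv 1$ on $\{|x| \leq 1/4\}$ and $\chi$ supported in $\{|x| < 1/2\}$ (identifying $\T^d$ with $[-\pi,\pi]^d$), and decompose $g = \chi g + (1-\chi) g$. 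Since $\chi g$ is compactly supported near the origin, the $\R^d$ inequality applies and gives
\[
\|\chi g\|_{L^q} \leq C \||x|^\beta \nabla(\chi g)\|_{L^2} \leq C\big(\||x|^\beta \nabla g\|_{L^2} + \|g\|_{L^2}\big),
\]
where the second bound uses the pointwise estimate $|x|^\beta |\nabla \chi| \leq C$. Since $(1-\chi) g$ is supported where $|x| \geq 1/4$, the weight $|x|^\beta$ is bounded below on its support, so one can drop the weight and apply the ordinary Sobolev embedding $H^1(\T^d) \hookrightarrow L^q(\T^d)$ (valid since $q \leq 2d/(d-2)$ whenever $\beta \geq 0$, with the standard modification when $d=2$) to obtain the analogous bound. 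Adding the two pieces yields the desired weighted Sobolev estimate.

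The main obstacle is the clean passage from the $\R^d$ Caffarelli-Kohn-Nirenberg inequality to the $\T^d$ statement with the correct additive correction---ensuring that only $\|g\|_{L^2}$, and not a weighted norm or a gradient, appears on the right-hand side. The cutoff decomposition above is arranged precisely for this purpose, and the product rule error $g \nabla \chi$ is absorbed into the $\|g\|_{L^2}$ term. The remaining algebra (the scaling calculation fixing the exponent, and the final H\"older interpolation) is elementary and essentially forced by dimensional analysis.
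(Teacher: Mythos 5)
Your proposal is correct, and it reaches the inequality by a genuinely different route than the paper. The paper extends $g$ periodically, multiplies by a cutoff $\chi$ that equals $1$ on all of $[-\pi,\pi]^d$ and vanishes outside $[-3\pi/2,3\pi/2]^d$, and applies the \emph{multiplicative} (Nash-type) form of the Caffarelli--Kohn--Nirenberg inequality, $\|v\|_{L^2(\R^d)} \leq C \||x|^\beta \nabla v\|_{L^2(\R^d)}^a \|v\|_{L^1(\R^d)}^{1-a}$, directly to $\chi g$; the commutator term $|x|^\beta g \nabla \chi$ is bounded by $\|g\|_{L^2}$ exactly as in your argument, and the conclusion drops out in one step. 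You instead cut off \emph{near the origin}, use the additive Sobolev endpoint form $\|u\|_{L^q} \leq C\||x|^\beta \nabla u\|_{L^2}$ with $q = 2d/(d-2+2\beta)$ on the inner piece, handle the outer piece (where the weight is bounded below) by the ordinary embedding $H^1(\T^d)\hookrightarrow L^q(\T^d)$, and then recover the multiplicative form by H\"older interpolation of $L^2$ between $L^1$ and $L^q$; your exponent bookkeeping $\tfrac12=(1-a)+\tfrac{a}{q}$ checks out and the subadditivity step $(x+y)^a\le x^a+y^a$ is fine. Your decomposition is arguably more transparent about where the weight matters, while the paper's version is shorter and avoids the intermediate $L^q$ space entirely. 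One caveat worth recording: the proposition asserts a constant depending only on $d$, and your route passes through the endpoint exponent $q=2d/(d-2+2\beta)$, which in $d=2$ tends to $\infty$ as $\beta\to 0$, so the constant in your intermediate Sobolev inequality (and hence in your final bound) degenerates in that limit; the multiplicative CKN form used by the paper limits to the classical Nash inequality at $\beta=0$ and is the natural formulation if one wants uniformity in $\beta$. Since the proposition is applied with $\beta\in[\alpha,1)$ and $\alpha$ may be small, this is the one point at which the two arguments are not interchangeable without further comment.
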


Inequalities of this form are most often stated for compactly supported functions on $\R^d.$ We provide proofs of the versions we need in Appendix~\ref{a.ineq-proofs}. The arguments are included for completeness as no references for the needed cases ($0$-mean and periodic) could be found, but no claim to novelty is made.

With these inequalities in hand, together with the bounds~\eqref{eq.matrix-lower-bound-square} and ~\eqref{eq.matrix-lower-bound-beta}, we are ready to prove the $L^\infty$ decay of $g^{\kappa,\nu}.$ We proceed by first showing exponential-in-time $L^2$ decay and then controlling the $L^\infty$ norm by the $L^2$ norm. First, the $L^2$ bound, which starts with the usual energy estimate.

\begin{proposition}[Energy estimate]
    \label{prop.weighted-energy}
    Suppose that $g$ solves 
    \begin{equation}
    \label{eq.degenerate-diff-eq}
    \partial_t g - \nabla \cdot a^{\kappa,\eta,\alpha} \nabla g = 0
    \end{equation}
    with $\int g(0,x)\ dx =0$ and $a^{\kappa,\eta,\alpha}$ as defined in~\eqref{eq.a-def}. Then there exists $c(d) >0$ such that for all $\eta<c, \kappa \geq 0, \alpha \in (0,1)$
    we have the decay
    \[\|g(t,\cdot)\|_{L^2(\T^d)} \leq e^{-c t} \|g(0,\cdot)\|_{L^2(\T^d)}.\]
\end{proposition}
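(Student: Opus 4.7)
The plan is to combine the standard parabolic energy identity with the pointwise lower bound on the diffusion matrix and the weighted Poincaré inequality, and then close the estimate via Gr\"onwall.

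First I would verify that the mean of $g$ is conserved under the flow. Since the equation \eqref{eq.degenerate-diff-eq} is in divergence form and posed on the torus (no boundary), integrating by parts gives $\frac{d}{dt}\int g(t,x)\,dx = \int \nabla \cdot a^{\kappa,\eta,\alpha}\nabla g\,dx = 0$, so the hypothesis $\int g(0,x)\,dx = 0$ propagates to all $t > 0$. This is important because the weighted Poincar\'e inequality (Proposition~\ref{prop.weighted-poincare-base}) requires the function to have mean zero.

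Next I would write the energy identity. Multiplying \eqref{eq.degenerate-diff-eq} by $g$ and integrating in space yields
\[
\frac{d}{dt}\,\tfrac{1}{2}\|g(t,\cdot)\|_{L^2(\T^d)}^2 = -\int_{\T^d} \nabla g \cdot a^{\kappa,\eta,\alpha}(x)\nabla g\,dx.
\]
Then I would insert the $\beta = 1$ pointwise lower bound \eqref{eq.matrix-lower-bound-square} from Proposition~\ref{prop.diffusion-matrix-bound}, which gives $w \cdot a^{\kappa,\eta,\alpha}(x) w \geq c |x|^2 |w|^2$ uniformly in the parameters (provided $\eta$ is small enough). This bound is what makes the dissipation independent of $\kappa,\eta,\alpha$: we get
\[
\int_{\T^d} \nabla g \cdot a^{\kappa,\eta,\alpha}(x)\nabla g\,dx \geq c \int_{\T^d} |x|^2 |\nabla g|^2\,dx = c\,\||x|\nabla g\|_{L^2(\T^d)}^2.
\]

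Finally, since $g(t,\cdot)$ has mean zero, Proposition~\ref{prop.weighted-poincare-base} gives $\|g(t,\cdot)\|_{L^2(\T^d)}^2 \leq C\||x|\nabla g(t,\cdot)\|_{L^2(\T^d)}^2$. Substituting this into the energy identity produces the differential inequality
\[
\frac{d}{dt}\|g(t,\cdot)\|_{L^2(\T^d)}^2 \leq -\frac{2c}{C}\|g(t,\cdot)\|_{L^2(\T^d)}^2,
\]
and Gr\"onwall yields the claimed exponential decay with a rate constant depending only on $d$. There is no genuine obstacle here --- the content of the argument has been front-loaded into Propositions~\ref{prop.diffusion-matrix-bound} and~\ref{prop.weighted-poincare-base}. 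The only point requiring a bit of attention is identifying $\T^d$ with $[-\pi,\pi]^d$ so that the weight $|x|^2$ in the matrix lower bound and the weight $|x|^2$ in the Poincar\'e inequality refer to the same quantity; since the a.e.~singularity of the weight at $x=0$ is benign (it is locally integrable in $d \geq 2$ and the inequality is stated for functions in the appropriate weighted Sobolev space), no further care is needed.
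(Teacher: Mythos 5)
Your proposal is correct and follows exactly the paper's argument: mean conservation, the energy identity, the lower bound~\eqref{eq.matrix-lower-bound-square}, the weighted Poincar\'e inequality of Proposition~\ref{prop.weighted-poincare-base}, and Gr\"onwall. Nothing to add.
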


\begin{proof}
    Note that the equation is mean-preserving, so that $\int g(t,x)\ dx =0 $ for all $t\geq 0$. We then compute
    \begin{align*}
        \frac{d}{dt}  \|g\|_{L^2(\T^d)}^2 = -2\int \nabla g \cdot a^{\kappa,\eta,\alpha} \nabla g \leq -c\int |x|^{2} |\nabla g|^2 \leq -c \|g\|_{L^2(\T^d)}^2,
    \end{align*}
    where we use the bound~\eqref{eq.matrix-lower-bound-square} for the first inequality and Proposition~\ref{prop.weighted-poincare-base} for the last inequality. Then the Gr\"onwall inequality gives the proposition.
\end{proof}

Thus we have exponential $L^2$ decay of $g^{\kappa,\eta,\alpha}$ with a uniform rate, but as discussed above, we really want $L^\infty$ decay of $g^{\kappa,\eta,\alpha}$. To go from $L^2$ to $L^\infty$, we get $L^2$ control of the fundamental solution, which is provided by the following proposition.

\begin{proposition}[Nash estimate]
    \label{prop.weighted-nash}
    Let $\Phi(t,x,y)$ be the fundamental solution to the equation~\eqref{eq.degenerate-diff-eq} started at $y$, i.e.\ 
    \[\begin{cases}
        \partial_t \Phi - \nabla_x \cdot a^{\kappa,\eta,\alpha}(x) \nabla_x \Phi = 0\\
        \Phi(0,\cdot,y) = \delta_y.
    \end{cases}\]
    then there exists $C(d)$ such that for all $\eta < C^{-1},\kappa > 0, \alpha \in (0,1), \beta \in [\alpha,1),$ and for all $x,y$,
    \[\|\Phi(t,x,\cdot)\|_{L^2_y(\T^d)} + \|\Phi(t,\cdot,y)\|_{L^2_x(\T^d)} \leq \Big(\frac{C}{1-\beta}\Big)^{\frac{d}{4(1-\beta)}}\Big( \eta^{\frac{d\beta (1-\alpha)}{2(1-\beta)}} \kappa^{-d/4} +1\Big)t^{-{\frac{d}{4(1-\beta)}}} + C^{\frac{d}{4(1-\beta)}},\]
\end{proposition}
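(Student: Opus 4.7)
The approach is Nash's strategy adapted to degenerate diffusions: run the energy identity, lower bound the dissipation using Proposition~\ref{prop.diffusion-matrix-bound}, replace the usual Gagliardo--Nirenberg step by the weighted Caffarelli--Kohn--Nirenberg-type inequality of Proposition~\ref{prop.est-for-nash}, and integrate the resulting Nash-type ODE.

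The first step is a reduction to a single norm. Because $a^{\kappa,\eta,\alpha}(x)$ is symmetric and time-independent, the generator is self-adjoint on $L^2(\T^d)$, so $\Phi(t,x,y)=\Phi(t,y,x)$, and it suffices to estimate $\|\Phi(t,\cdot,y)\|_{L^2_x}$ uniformly in $y$. I fix $y$, set $g(t,\cdot):=\Phi(t,\cdot,y)$ (approximating $\delta_y$ by a mollification and passing to a limit at the end to justify the formal manipulations), and note that the divergence-form structure together with positivity of the heat kernel force $\|g(t,\cdot)\|_{L^1}\equiv 1$ for all $t\ge 0$.

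The heart of the argument is the following. Let $N(t):=\|g(t,\cdot)\|_{L^2(\T^d)}^2$, and let $\lambda:=c(1\land \eta^{2\beta(\alpha-1)}\kappa^{1-\beta})$ denote the lower-bound constant from Proposition~\ref{prop.diffusion-matrix-bound}. The energy identity and \eqref{eq.matrix-lower-bound-beta} give
\[\frac{dN}{dt}=-2\int\nabla g\cdot a^{\kappa,\eta,\alpha}\nabla g\,dx\;\le\; -2\lambda\int |x|^{2\beta}|\nabla g|^2\,dx.\]
Applying Proposition~\ref{prop.est-for-nash} with $\|g\|_{L^1}=1$, squaring, and rearranging the resulting inequality $\sqrt{N}\le C(D^{a/2}+N^{a/2})$ (where $D:=\int|x|^{2\beta}|\nabla g|^2\,dx$) produces the dichotomy
\[\text{either}\quad N(t)\le (2C)^{1/(1-a)}\quad\text{or}\quad D(t)\ge c'\,N(t)^{1/a},\qquad a=\frac{d}{d+2-2\beta}.\]
In the latter (unbounded) regime, combining with the energy estimate yields the Nash-type ODE $dN/dt\le -c'\lambda N^{1/a}$; using $N(0)=\infty$ and $1/a-1=2(1-\beta)/d$, direct integration of $(1-p)^{-1}\frac{d}{dt}N^{1-p}\le -c'\lambda$ gives
\[N(t)\le \Bigl(\tfrac{d}{2(1-\beta)c'\lambda t}\Bigr)^{d/(2(1-\beta))}.\]

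To finish, I unpack $\lambda^{-1}=c^{-1}(1\lor \eta^{2\beta(1-\alpha)}\kappa^{-(1-\beta)})$, which raised to the power $d/(4(1-\beta))$ gives $c^{-d/(4(1-\beta))}(1\lor\eta^{d\beta(1-\alpha)/(2(1-\beta))}\kappa^{-d/4})$, take the square root of the $N$-estimate to pass to $\|g\|_{L^2}$, and combine the power-law bound from the unbounded regime with the constant floor $(2C)^{1/(2(1-a))}\le C^{d/(4(1-\beta))+1/2}$ from the bounded regime. This yields exactly the stated inequality after absorbing numerical factors into $C$. The main obstacle is the additive $\|g\|_{L^2}^a$ term in the weighted CKN inequality: it prevents a single clean Nash ODE from driving $N$ all the way to zero, and is responsible both for the additive constant $C^{d/(4(1-\beta))}$ at the end and for the $(1-\beta)^{-1}$ blow-up of the prefactor as $\beta\uparrow 1$, entering through the $d/(2(1-\beta))$ exponent in the integrated ODE.
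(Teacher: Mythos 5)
Your proposal is correct and follows essentially the same route as the paper: symmetry reduction to one variable, the energy identity with the lower bound \eqref{eq.matrix-lower-bound-beta}, the weighted inequality of Proposition~\ref{prop.est-for-nash} with $\|g\|_{L^1}\equiv 1$, the resulting dichotomy between a bounded regime and a Nash-type differential inequality, and integration via the substitution $N^{-(2-2\beta)/d}$ (which is exactly the paper's auxiliary function $g(t)$). Your accounting of where the $(1-\beta)^{-1}$ prefactor and the additive constant $C^{d/(4(1-\beta))}$ come from also matches the paper's argument.
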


\begin{proof}
    Note that since $a^{\kappa,\eta,\alpha}$ is symmetric, $\Phi$ is symmetric in $x,y$, 
    \[\Phi(t,x,y) = \Phi(t,y,x).\]
    As such we just prove the estimate on $\|\Phi(t,\cdot,y)\|_{L^2_x(\T^d)}.$

    Fix some $y \in \T^d$, let $\phi(t,x) := \Phi(t,x,y).$ Note that $\int \phi(t,x)\ dx = \|\phi(t,\cdot)\|_{L^1(\T^d)} = 1$. We then write the usual energy identity
    \[\frac{d}{dt} \|\phi\|_{L^2(\T^d)}^2 = - 2 \int \nabla \phi \cdot a^{\kappa,\eta,\alpha} \nabla \phi \leq - c  (1 \land \eta^{2\beta (\alpha-1)} \kappa^{1-\beta})\||x|^\beta \nabla \phi\|_{L^2(\T^d)}^2,\]
    where we use~\eqref{eq.matrix-lower-bound-beta}. Then by Proposition~\ref{prop.est-for-nash}
    \[\|\phi\|_{L^2(\T^d)} \leq C (\||x|^\beta \nabla \phi\|_{L^2(\T^d)}^a + \|\phi\|_{L^2(\T^d)}^a) \|\phi\|_{L^1(\T^d)}^{1-a} =C \||x|^\beta \nabla \phi\|_{L^2(\T^d)}^a + C\|\phi\|_{L^2(\T^d)}^a,\]
    with
    \[a = \frac{d}{d+2-2\beta}.\]
    Thus
    \[\||x|^\beta \nabla \phi\|_{L^2(\T^d)}^a \geq c \|\phi\|_{L^2(\T^d)} - \|\phi\|_{L^2(\T^d)}^a.\]
    Thus since $a<1$, there is some $C(d)< \infty$ such that if $\|\phi\|_{L^2(\T^d)} \geq C$, then
    \[c\|\phi\|_{L^2(\T^d)} - \|\phi\|_{L^2(\T^d)}^a \geq c \|\phi\|_{L^2(\T^d)} - \frac{c}{2} \|\phi\|_{L^2(\T^d)} \geq c\|\phi\|_{L^2(\T^d)},\]
    so
    \[\||x|^\beta \nabla \phi\|_{L^2(\T^d)} \geq  c \|\phi\|_{L^2(\T^d)}^{1/a}\]
    and so, for $\|\phi\|_{L^2(\T^d)}^2 \geq C$,
    \[\frac{d}{dt} \|\phi\|_{L^2(\T^d)}^2 \leq - c  (1 \land \eta^{2\beta (\alpha-1)} \kappa^{1-\beta})\|\phi\|_{L^2(\T^d)}^{2/a} = - c (1 \land \eta^{2\beta (\alpha-1)} \kappa^{1-\beta})(\|\phi\|_{L^2(\T^d)}^2)^{1 + \frac{2-2\beta}{d}}.\]
    Let
    \[g(t) := (\|\phi\|_{L^2(\T^d)}^2)^{- \frac{2-2\beta}{d}}.\]
    Note then that for $g(t) \leq c(d)$ (as we need $\|\phi\|_{L^2(\T^d)} \geq C$), we have that
    \[\frac{d}{dt} g(t) = - \frac{2-2\beta}{d} (\|\phi\|_{L^2(\T^d)}^2)^{-1 - \frac{2-2\beta}{d}}\frac{d}{dt} \|\phi\|_{L^2(\T^d)}^2 \geq c  (1-\beta) (1 \land \eta^{2\beta (\alpha-1)} \kappa^{1-\beta}).\]
    Note also that $\lim_{t \to 0} \|\phi\|_{L^2(\T^d)}^2(t) = \infty,$
    so $\lim_{t \to 0} g(t) = 0.$
    Then we have that
    \[g(t) \geq  \min(c(1-\beta) (1 \land \eta^{2\beta (\alpha-1)} \kappa^{1-\beta})t,c),\]
    where we get the min as the differential inequality stops being valid for $g(t) \geq c$. Thus we have that
    \[\|\phi\|_{L^2(\T^d)} \leq \Big(\frac{C}{1-\beta}\Big)^{\frac{d}{4(1-\beta)}}\Big( \eta^{\frac{d\beta (1-\alpha)}{2(1-\beta)}} \kappa^{-d/4} +1\Big)t^{-{\frac{d}{4(1-\beta)}}} + C^{\frac{d}{4(1-\beta)}},\]
    allowing us to conclude.
\end{proof}

We can then combine the above two results to give $L^\infty$ decay of solutions to the degenerate parabolic equation.

\begin{proposition}[$L^\infty$ decay of solutions]
    \label{prop.l-infty-decay}
    Suppose that $g$ solves 
    \[\partial_t g - \nabla \cdot a^{\kappa,\eta,\alpha} \nabla g = 0\]
    with $\int g(0,x)\ dx =0.$ Then there exists $C(d) >0$ such that for all $\alpha \in (0,1)$, $\eta,\kappa \leq C^{-1}, \kappa> 0,$
    we have for all $t>0,$
    \[\|g(t,\cdot)\|_{C^0(\T^d)} \leq \exp\Big(\frac{C}{1-\alpha} \Big(\frac{\log \kappa}{\log \eta} \lor 1\Big)\Big)e^{-t/C} \|g(0,\cdot)\|_{L^\infty(\T^d)},\]
    and also for all $t\geq \frac{C}{1-\alpha} \Big(\frac{\log \kappa}{\log \eta} \lor 1)$
    \[\|g(t,\cdot)\|_{C^0(\T^d)} \leq\exp\Big(\frac{C}{1-\alpha} \Big(\frac{\log \kappa}{\log \eta} \lor 1\Big)\Big)e^{-t/C} \|g(0,\cdot)\|_{L^2(\T^d)}.\]
\end{proposition}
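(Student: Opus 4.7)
The plan is to split at a threshold time $T_0 := C_0 := \frac{C}{1-\alpha}\bigl(\frac{\log \kappa}{\log\eta} \lor 1\bigr)$, handling $t \leq T_0$ by the uniform-parabolic maximum principle and $t \geq T_0$ by combining Propositions~\ref{prop.weighted-energy} and~\ref{prop.weighted-nash} through the fundamental solution. The constant $C$ in the definition of $C_0$ will be enlarged as needed at the end of the argument.

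For $t \leq T_0$, positivity of $\kappa$ yields $a^{\kappa,\eta,\alpha}(x) \geq 2\kappa I$, so the equation is uniformly parabolic and the classical maximum principle gives $\|g(t,\cdot)\|_{L^\infty(\T^d)} \leq \|g(0,\cdot)\|_{L^\infty(\T^d)}$. Since $\exp(C_0) e^{-t/C} \geq \exp(C_0(1-1/C)) \geq 1$ once $C \geq 1$, the first claimed bound holds throughout this range.

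For $t \geq T_0$, write $g(t,x) = \int_{\T^d} \Phi(T_0, x, y)\, g(t-T_0, y)\,dy$ and apply Cauchy--Schwarz to obtain
\begin{equation*}
|g(t,x)| \leq \|\Phi(T_0, x, \cdot)\|_{L^2_y(\T^d)} \cdot \|g(t-T_0, \cdot)\|_{L^2(\T^d)}.
\end{equation*}
Proposition~\ref{prop.weighted-energy} controls the second factor by $e^{-c(t-T_0)} \|g(0,\cdot)\|_{L^2}$. For the first, invoke Proposition~\ref{prop.weighted-nash} with the choice $\beta^* := \alpha \lor \frac{\sigma}{2(1-\alpha)+\sigma}$, where $\sigma := \log \kappa/\log\eta$ (interpreted as $0$ when $\eta = 0$). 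This is the threshold value of $\beta \in [\alpha, 1)$ that forces $\eta^{d\beta(1-\alpha)/(2(1-\beta))}\kappa^{-d/4} \leq 1$, so the middle factor in the Nash estimate becomes $O(1)$; it also satisfies $1 - \beta^* \geq c(1-\alpha)/(1\lor\sigma)$, giving the exponent bound $\gamma := d/(4(1-\beta^*)) \leq Cd(1\lor\sigma)/(1-\alpha)$. Enlarging the constant in $C_0$ so that $T_0 \geq C/(1-\beta^*)$ (which is possible because $C/(1-\beta^*) \leq C^2(1\lor\sigma)/(1-\alpha) \leq C_0$) kills the $t^{-\gamma}$ contribution, while $C^\gamma = \exp(\gamma \log C) \leq \exp(C_0)$ after one further enlargement. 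Thus $\|\Phi(T_0, x, \cdot)\|_{L^2_y} \leq \exp(C_0)$, and combining gives
\begin{equation*}
\|g(t,\cdot)\|_{L^\infty} \leq \exp(C_0 + c T_0)\, e^{-c t} \|g(0,\cdot)\|_{L^2},
\end{equation*}
which, after absorbing $c T_0 \leq c C_0$ into $C_0$ and renaming constants, is the second claimed estimate. The first estimate for $t \geq T_0$ then follows from the trivial embedding $\|g(0,\cdot)\|_{L^2} \leq (2\pi)^{d/2}\|g(0,\cdot)\|_{L^\infty}$.

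The main technical hurdle is the choice of $\beta^*$: in Proposition~\ref{prop.weighted-nash} the two factors $(C/(1-\beta))^{d/(4(1-\beta))}$ and $\eta^{d\beta(1-\alpha)/(2(1-\beta))}\kappa^{-d/4}$ scale in opposing ways as $\beta$ varies over $[\alpha, 1)$, and $\beta^*$ is the unique interpolation point that balances them so that the resulting prefactor is of order $\exp(C_0)$. Verifying the case split $\sigma \leq 2\alpha$ (where $\beta^* = \alpha$ suffices) versus $\sigma > 2\alpha$ (where one uses the fractional choice), and confirming that $T_0 = C_0$ simultaneously dominates both $C/(1-\beta^*)$ and $\gamma \log C$, is the only delicate bookkeeping in the argument.
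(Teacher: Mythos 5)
Your proposal is correct and follows essentially the same route as the paper's proof: the Duhamel splitting $g(t)=\Phi(\tau)\,g(t-\tau)$ with Cauchy--Schwarz, Proposition~\ref{prop.weighted-energy} for the $L^2$ factor, Proposition~\ref{prop.weighted-nash} for the kernel, and the identical optimal choice $\beta^*=\alpha\lor\frac{\sigma}{2(1-\alpha)+\sigma}$ (the paper writes it as $\frac{\log\kappa}{\log\eta^{2(1-\alpha)}+\log\kappa}\lor\alpha$, with the same case split at $\kappa=\eta^{2\alpha}$). The only cosmetic difference is that you take the intermediate time $\tau=T_0$ and absorb $(C/(1-\beta^*))^{\gamma}T_0^{-\gamma}\leq 1$, whereas the paper takes $\tau=1/(1-\beta)$ to cancel that factor directly; both bookkeepings close.
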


\begin{proof}
    We have that the matrix $a^{\kappa,\eta,\alpha}$ is uniformly elliptic for any $\kappa>0$, so the solution $g$ is continuous for any $t>0$. As such it suffices to bound the $L^\infty$ norm.

    The first inequality is a direct consequence of the second inequality and the maximum principle. Let us prove then the second inequality. Using Proposition~\ref{prop.weighted-energy} and Proposition~\ref{prop.weighted-nash}, we have for any $0 < \tau < t,$
     \begin{align*}    
    \|g(t,x)\|_{L^\infty(\T^d)} &= \Big\|\int g(t-\tau,y) \Phi(\tau,x,y)\ dy\Big\|_{L^\infty_x(\T^d)}
    \\&\leq \|\Phi(\tau,x,y)\|_{L^\infty_x L^2_y} \|g(t-\tau,\cdot)\|_{L^2(\T^d)} 
    \\&\leq \Big(\Big(\frac{C}{1-\beta}\Big)^{\frac{d}{4(1-\beta)}}\Big( \eta^{\frac{d\beta (1-\alpha)}{2(1-\beta)}} \kappa^{-d/4} +1\Big)e^{\tau/C}\tau^{-{\frac{d}{4(1-\beta)}}} + C^{\frac{d}{4(1-\beta)}}e^{\tau/C}\Big) e^{-t/C} \|g(0,\cdot)\|_{L^2(\T^d)}.
    \end{align*}
    Taking $\tau = \frac{1}{1-\beta}$, we get that
    \begin{align*}    
    \|g(t,x)\|_{L^\infty(\T^d)} &\leq C^{\frac{1}{1-\beta}}\Big(\eta^{\frac{d\beta (1-\alpha)}{2(1-\beta)}} \kappa^{-d/4} +1 \Big) e^{-t/C} \|g(0,\cdot)\|_{L^2(\T^d)}
    \\&=C^{\frac{1}{1-\beta}}\Big(\eta^{\frac{d(1-\alpha)}{2(1-\beta)}}  (\eta^{2(1-\alpha)}\kappa)^{-d/4} +1 \Big) e^{-t/C} \|g(0,\cdot)\|_{L^2(\T^d)}.
    \end{align*}
    Optimizing while respecting the constraint that $\beta \geq \alpha$, we take
    \[\beta = \frac{\log \kappa}{\log \eta^{2(1-\alpha)} + \log \kappa} \lor \alpha,\]
    and so if $\kappa \geq \eta^{2 \alpha}$, we take $\beta = \alpha$ and so get for $t \geq \frac{1}{1-\alpha},$
    \[ \|g(t,x)\|_{L^\infty(\T^d)} \leq C^{\frac{1}{1-\alpha}}\Big(\eta^{\frac{d\alpha}{2}} \kappa^{-d/4} +1 \Big) e^{-t/C} \|g(0,\cdot)\|_{L^2(\T^d)} \leq  C^{\frac{1}{1-\alpha}} e^{-t/C} \|g(0,\cdot)\|_{L^2(\T^d)}.\]
    Otherwise, if $\kappa \leq \eta^{2\alpha},$ we get that for $t \geq 1 + \frac{1}{2(1-\alpha)} \frac{\log \kappa}{\log \eta}$
    \[ \|g(t,x)\|_{L^\infty(\T^d)} \leq C^{1 + \frac{1}{1-\alpha}\frac{\log \kappa}{\log \eta}} e^{-t/C} \|g(0,\cdot)\|_{L^2(\T^d)}.\]
    We see then these can be combined to give that for $t \geq \frac{C}{1-\alpha} \Big(\frac{\log \kappa}{\log \eta} \lor 1)$
    \[\|g(t,x)\|_{L^\infty(\T^d)} \leq \exp\Big(\frac{C}{1-\alpha} \Big(\frac{\log \kappa}{\log \eta} \lor 1\Big)\Big)e^{-t/C} \|g(0,\cdot)\|_{L^2(\T^d)},\]
    thus we conclude.
\end{proof}

This proposition easily implies Theorem~\ref{thm.ad}.

\begin{proof}[Proof of Theorem~\ref{thm.ad}]
    All the propositions above are for initial data that are translation invariant in law. Letting $\theta_0 \in L^2(\T^d)$ deterministic, by adding an independent random uniform translation to $\theta_0$, we can get a random initial data $\tilde \theta_0$ that is translation invariant in law, and further
    \[\E \tilde \theta_0^2(y)= \frac{1}{|\T^d|}\int_{\T^d} \theta_0^2(x+y)\ dx = c \|\theta_0\|_{L^2(\T^d)}^2.\]

    Consider $g^{\kappa,\eta,\alpha}$ associated to the initial data $\tilde \theta_0$. Note that
    \[\int g^{\kappa,\eta,\alpha}(0,x)\ dx = \E \tilde \theta_0(0) \int \tilde \theta_0(x)\ dx = 0.\]
    Note also that
    \[\abs{g^{\kappa,\eta,\alpha}(0,x)} = \abs{\E \tilde \theta_0(0) \tilde \theta_0(x)} \leq \tfrac{1}{2} \E \tilde \theta_0(0)^2 + \tfrac{1}{2} \E \tilde \theta_0(x)^2  = c \|\theta_0\|_{L^2(\T^d)}^2.\]
    Let $\theta^{\kappa,\eta,\alpha}_t$ be the solution the Kraichnan SPDE~\ref{eq.spde-thm} with initial data $\theta_0$ and $\tilde \theta^{\kappa,\eta,\alpha}_t$ the solution with initial data $\tilde \theta_0$. Note that $\tilde \theta^{\kappa,\eta,\alpha}_t$ is just an independent uniform random translation of $\theta^{\kappa,\eta,\alpha}_t$
    Thus, using Proposition~\ref{prop.l-infty-decay}, we have that
    \begin{align*}\E \| \theta^{\kappa,\eta,\alpha}_t\|_{L^2(\T^d)}^2 
    &= \E \|\tilde \theta^{\kappa,\eta,\alpha}_t\|_{L^2(\T^d)}^2 
    \\&= C g^{\kappa,\eta,\alpha}(t,0) \leq C\|g^{\kappa,\eta,\alpha}(t,\cdot)\|_{C^0(\T^d)} 
    \\&\leq \exp\Big(\frac{C}{1-\alpha} \Big(\frac{\log \kappa}{\log \eta} \lor 1\Big)\Big)e^{-t/C} \|g(0,\cdot)\|_{L^\infty(\T^d)} 
    \\&\leq  \exp\Big(\frac{C}{1-\alpha} \Big(\frac{\log \kappa}{\log \eta} \lor 1\Big)\Big)e^{-t/C} \|\theta_0\|_{L^2(\T^d)}^2,
    \end{align*}
    as desired.
\end{proof}

\section{Anomalous diffusion implies non-uniqueness of backward ODE trajectories}
\label{s.nonunique}

In this section we give a new proof that anomalous diffusion implies non-uniqueness of ODE trajectories of the underlying field. This fact is of some independent interest, but it is in particular used here to show that some of the correlated-in-time models constructed below do not exhibit anomalous diffusion.

Let us first give the a very broad definition of what we mean for a vector field to exhibit anomalous diffusion.

\begin{definition}
	Let $u \in L^\infty([0,T] \times \T^d)$, $\nabla \cdot u =0$, $\theta_0 \in L^2(\T^d)$, and define $\theta^\kappa : [0,T] \times \T^d$ to be the unique solution to
	\begin{equation*}
	\begin{cases}
		\partial_t \theta^\kappa - \kappa \Delta \theta^\kappa - \nabla \cdot (u \theta^\kappa) = 0\\
		\theta^\kappa(0,\cdot) = \theta_0.
	\end{cases}
	\end{equation*}
	We say that $u$ \textit{exhibits anomalous diffusion with initial data $\theta_0$} if 
	\[\liminf_{\kappa\to0} \|\theta^\kappa\|_{L^2(\T^d)}(T) < \|\theta_0\|_{L^2(\T^d)}.\]
	We say that $u$ \textit{exhibits anomalous diffusion} if it exhibits anomalous diffusion for some initial data.
\end{definition}

We note that under this definition, Theorem~\ref{thm.ad} gives that the Kraichnan model, for any (non-constant) initial data exhibits anomalous diffusion with positive probability for sufficiently large $T$. With this definition, we restate Theorem~\ref{thm.ad-imp-non-unique-ode}.
\begingroup
\def\thetheorem{\ref{thm.ad-imp-non-unique-ode}}
\begin{theorem}
Suppose that $u \in L^\infty([0,T] \times \T^d)$ with $\nabla \cdot u = 0$, and $u$ exhibits anomalous diffusion. Then there exists a positive final data $\theta_f$ such that the continuity equation
	\begin{equation*}
	\begin{cases}
		\partial_t \theta - \nabla \cdot (u\theta) = 0,\\
		\theta(T,\cdot) = \theta_f,
	\end{cases}
	\end{equation*}
	has non-unique positive solutions weak solutions in $L^\infty([0,T], L^2(\T^d))$. Thus the backward ODE trajectories for $u$, started from time $T$, are non-unique for a positive measure subset of $\T^d$.
\end{theorem}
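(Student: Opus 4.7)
The plan is to construct two distinct non-negative weak solutions of the continuity equation with the same final data by taking vanishing-diffusivity limits of the forward and backward advection-diffusion equations, and then to invoke Ambrosio's superposition principle to extract non-uniqueness of the backward ODE trajectories.

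First I would reduce to strictly positive initial data. If $\theta_0$ exhibits anomalous diffusion then so does $\theta_0 + M$ for any constant $M \in \R$: the shifted field $\theta^\kappa + M$ solves the same advection-diffusion equation, the divergence-free drift preserves means, and a direct expansion gives $\|\theta^\kappa(T) + M\|_{L^2}^2 - \|\theta_0 + M\|_{L^2}^2 = \|\theta^\kappa(T)\|_{L^2}^2 - \|\theta_0\|_{L^2}^2$, so anomalous diffusion is invariant under adding constants. Choosing $M$ large we may assume $\theta_0 > 0$, whence the parabolic maximum principle gives $\theta^\kappa > 0$. Extracting a subsequence $\kappa_n \to 0$ along which $\|\theta^{\kappa_n}(T)\|_{L^2} \to L < \|\theta_0\|_{L^2}$ and $\theta^{\kappa_n} \rightharpoonup \theta$ weakly-$\ast$ in $L^\infty_t L^2_x$, one passes to the limit in the weak formulation (the drift term is continuous against test functions $u\cdot\nabla\varphi$, and the parabolic term $\kappa_n \Delta \theta^{\kappa_n}$ vanishes distributionally) to find that $\theta$ is a non-negative weak solution of the continuity equation with $\theta(0,\cdot) = \theta_0$. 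A standard argument places $\theta$ in $C^0_t L^2_{x,\mathrm{w}}$ after modification on a null set, so $\theta_f := \theta(T,\cdot)$ is well-defined in $L^2(\T^d)$, satisfies $\theta_f \geq 0$, and by weak lower semicontinuity $\|\theta_f\|_{L^2} \leq L < \|\theta_0\|_{L^2}$.

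Next I would construct a second solution by running advection-diffusion backwards from $\theta_f$. Set $\tilde u(\tau,x) := -u(T-\tau,x)$, also bounded and divergence-free, and solve $\partial_\tau \rho^\kappa - \kappa \Delta \rho^\kappa + \tilde u \cdot \nabla \rho^\kappa = 0$ with $\rho^\kappa(0,\cdot) = \theta_f$. The maximum principle yields $\rho^\kappa \geq 0$, and the standard energy identity for divergence-free drift gives $\|\rho^\kappa(\tau,\cdot)\|_{L^2} \leq \|\theta_f\|_{L^2}$ for all $\tau \in [0,T]$. Along a further subsequence (still denoted $\kappa_n$) one has $\rho^{\kappa_n} \rightharpoonup \rho$ weakly-$\ast$ in $L^\infty_\tau L^2_x$, with $\rho$ a non-negative weak solution of the backward continuity equation, $\rho(0,\cdot) = \theta_f$, and $\|\rho(T,\cdot)\|_{L^2} \leq \|\theta_f\|_{L^2}$ by weak lower semicontinuity. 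Then $\tilde\theta(t,x) := \rho(T-t,x)$ is a non-negative weak solution of the forward continuity equation in $L^\infty_t L^2_x$ with $\tilde\theta(T,\cdot) = \theta_f$. The two solutions $\theta$ and $\tilde\theta$ share the same final data but are distinct, since $\|\tilde\theta(0,\cdot)\|_{L^2} \leq \|\theta_f\|_{L^2} < \|\theta_0\|_{L^2} = \|\theta(0,\cdot)\|_{L^2}$.

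This produces non-unique positive weak solutions of the continuity equation with final data $\theta_f$. To pass to non-uniqueness of backward ODE trajectories, I would invoke Ambrosio's superposition principle (\cite[Theorem 3.2]{ambrosioTransportEquationCauchy2008}) in its time-reversed form: every non-negative weak solution of the continuity equation lifts to a probability measure on continuous path space concentrated on backward integral curves of $u$ emanating at time $T$ from a point distributed according to $\theta_f$. If backward ODE trajectories were unique from $\theta_f$-a.e.\ starting point (equivalently, Lebesgue-a.e., since $\theta_f \in L^2$ is absolutely continuous with respect to Lebesgue measure), the lift would be uniquely determined by $\theta_f$, forcing $\theta = \tilde\theta$; the contradiction shows that the set of points at which backward trajectories are non-unique has positive Lebesgue measure. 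The main technical obstacle is the careful weak-limit passage — verifying that the limits solve the continuity equation in the weak sense and possess meaningful time traces at $t = 0$ and $t = T$ — together with the correct backward-time application of Ambrosio's principle; both are classical in the DiPerna-Lions/Ambrosio framework for bounded divergence-free drifts, so no genuinely new ingredient is required beyond the two weak lower-semicontinuity inequalities above.
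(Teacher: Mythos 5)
Your overall strategy is the same as the paper's: produce one non-negative weak solution as the vanishing-diffusivity limit of the forward advection--diffusion equation (which loses $L^2$ norm by the anomalous diffusion hypothesis), produce a second one as the vanishing-diffusivity limit of the backward final-value equation started from the resulting final data (which cannot regain $L^2$ norm), distinguish the two by comparing $L^2$ norms at time $0$, and conclude via Ambrosio's superposition principle. The weak-limit and time-trace issues you flag are exactly the ones the paper handles (via an extension of $u$ past time $T$ and a weak-$L^2$-continuity lemma for weak solutions), and your time-reversed construction of $\tilde\theta$ and your application of the superposition principle are correct.

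The one genuine gap is the positivity reduction. You write ``choosing $M$ large we may assume $\theta_0>0$,'' but the hypothesis only gives $\theta_0\in L^2(\T^d)$, which need not be essentially bounded below, so no finite shift makes it non-negative. The paper avoids this by splitting $\theta_0=\theta_0^+-\theta_0^-$, running the entire limit argument for the pair of non-negative solutions, and concluding that at least one of the two resulting final-value problems has non-unique positive solutions. Your constant-shift device is cleaner where it applies (a single pair of solutions and an explicit $\theta_f$), and it can be rescued: the solution operator is an $L^2$-contraction uniformly in $\kappa$ for divergence-free drift, so anomalous diffusion is stable under $L^2$-small perturbations of the initial data; one may therefore first replace $\theta_0$ by a bounded truncation $\theta_0^n$ with $2\|\theta_0^n-\theta_0\|_{L^2}$ smaller than the dissipation gap, and only then shift by a constant. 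As written, however, the step fails for general $L^2$ data and needs this truncation (or the paper's positive/negative decomposition) inserted. The rest checks out, including your observation that adding a constant preserves the anomalous-diffusion property because the equation preserves the spatial mean.
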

\addtocounter{theorem}{-1}
\endgroup

In the above theorem, we reference weak solutions to the continuity equation with final data. For reference, we define weak solutions with initial or final data.

\begin{definition}
    Let $u \in L^\infty([0,T] \times \T^d).$ We say that $\theta \in L^\infty([0,T], L^2(\T^d))$ is a weak solution to the continuity equation
    \begin{equation*}
        \begin{cases}
            \partial_t \theta - \nabla \cdot (u\theta) =0\\
            \theta(0,\cdot) = \theta_0,
        \end{cases}
    \end{equation*}
    if for every $\phi \in C_c^\infty([0,T) \times \T^d)$, we have that 
    \[\int -\partial_t \phi \theta + u \cdot \nabla \phi \theta\ dxdt - \int \theta_0(x) \phi(0,x)\ dx =0.\]
    Similarly, $\theta$ is a weak solution to the continuity equation
        \begin{equation*}
        \begin{cases}
            \partial_t \theta - \nabla \cdot (u\theta) =0\\
            \theta(T,\cdot) = \theta_f,
        \end{cases}
    \end{equation*}
    if for every $\phi \in C_c^\infty((0,T] \times \T^d)$, we have that 
    \[\int -\partial_t \phi \theta + u \cdot \nabla \phi \theta\ dxdt + \int \theta_0(x) \phi(T,x)\ dx =0.\]
\end{definition}

We will want to use the following properties of weak solution in proof of Theorem~\ref{thm.ad-imp-non-unique-ode}.

\begin{lemma}
    \label{lem.properties-of-weak-solutions-cont}
    Let $u \in L^\infty([0,T] \times \T^d)$ and let $\theta \in L^\infty_t L^2_x$ be a weak solution to the initial value problem
    \begin{equation*}
    \begin{cases}
        \partial_t \theta - \nabla\cdot (u\theta) = 0\\
        \theta(0,\cdot) = \theta_0.
    \end{cases}
    \end{equation*}
    Then 
    \begin{enumerate}
        \item We have the representation
        \[\theta = \theta_0 + \int_0^t \nabla \cdot (u\theta),\]
        with the integral interpreted as Bochner integral over $H^{-1}$.
        \item $\theta \in C([0,T]; L^2_w(\T^d)),$ after modifying on a measure zero set.
        \item For any $\phi \in C^\infty(\R \times \T^d)$ and any $0 \leq s<r \leq T$, we have that
        \[\int_s^r \int -\partial_t \phi \theta + u \cdot \nabla \phi \theta\ dxdt + \int \phi(r,x) \theta(r,x) - \phi(s,x) \theta(s,x)\ dx,\]
        when using the continuous-in-time representation of $\theta$.
    \end{enumerate}
\end{lemma}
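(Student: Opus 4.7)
The three items all follow from the $H^{-1}$-valued integral representation in (1), so I would establish (1) first and extract (2) and (3) as consequences.

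For (1), the idea is to test the weak formulation against separated functions $\phi(t,x) = \chi(t)\psi(x)$, with $\chi \in C_c^\infty([0,T))$ and $\psi \in C^\infty(\T^d)$. Writing $h_\psi(t) := \int \theta(t)\psi\,dx$ and $g_\psi(t) := \int u(t)\cdot\nabla\psi\,\theta(t)\,dx$, both in $L^\infty([0,T])$, the weak formulation reduces to the scalar identity
\[-\int_0^T \chi'(t) h_\psi(t)\,dt + \int_0^T \chi(t) g_\psi(t)\,dt = \chi(0)\int \theta_0\psi\,dx,\]
which says $h_\psi$ is a weak solution to $h' = -g_\psi$ on $[0,T)$ with $h(0) = \int \theta_0\psi$. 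Since $g_\psi$ is bounded, the unique such solution is the continuous function $H_\psi(t) := \int \theta_0\psi - \int_0^t g_\psi(s)\,ds$, and so $h_\psi = H_\psi$ a.e., with the null set a priori depending on $\psi$. Selecting a countable dense subset $\{\psi_n\} \subset H^1(\T^d)$, taking the union of the associated null sets, and extending by $H^1$-density using the uniform bound $\|\nabla\cdot(u\theta)\|_{L^\infty_t H^{-1}_x} \lesssim \|u\|_{L^\infty}\|\theta\|_{L^\infty_t L^2_x}$, produces the Bochner-integral representation $\theta(t) = \theta_0 + \int_0^t \nabla\cdot(u\theta)(s)\,ds$ in $H^{-1}(\T^d)$ for a.e.\ $t$.

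Statement (2) is then essentially immediate: the right side of (1) is Lipschitz $[0,T]\to H^{-1}$, so modifying $\theta$ on a null set gives $\theta \in C([0,T];H^{-1}) \cap L^\infty([0,T];L^2)$. This upgrades to $\theta \in C([0,T]; L^2_w(\T^d))$ via the standard weak-compactness argument: for $t_n \to t$, the $L^2$-bounded sequence $\theta(t_n)$ has weakly convergent $L^2$-subsequences, and uniqueness of the $H^{-1}$ limit forces every such weak $L^2$ limit to equal $\theta(t)$. For (3), I would test the weak formulation against $\phi(t,x)\chi_\epsilon(t)$, where $\chi_\epsilon$ is a smooth temporal cutoff equal to $1$ on $[s+\epsilon,r-\epsilon]$ and $0$ outside $[s,r]$; expanding $\partial_t(\phi\chi_\epsilon)$ and sending $\epsilon\to 0$, the terms involving $\chi_\epsilon'$ concentrate near $t=s$ and $t=r$ and, by the weak continuity from (2), converge to the boundary terms $\int \phi(s)\theta(s)\,dx - \int \phi(r)\theta(r)\,dx$, yielding the claimed identity (the boundary cases $s=0$ and $r=T$ are handled, respectively, by directly including the initial-data term and by taking a limit from $r<T$ using weak continuity).

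The main technical point is the density bookkeeping in (1): each scalar testing only produces an equation on a $\psi$-dependent null set, and to conclude an $H^{-1}$-valued identity one must union countably many such null sets and extend by $H^1$-density using the uniform bound on $\nabla\cdot(u\theta)$. Once (1) is in hand, (2) and (3) are routine functional-analytic manipulations.
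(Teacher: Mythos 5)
Your proposal is correct, and parts (2) and (3) proceed essentially as in the paper: the upgrade from $C([0,T];H^{-1})$ plus $L^\infty_t L^2_x$ to $C([0,T];L^2_w)$ via uniqueness of subsequential weak limits is the paper's argument verbatim, and your temporal-cutoff argument for (3) matches the paper's (the paper first restricts to tensor products $\psi(t)\gamma(x)$ and then approximates a general $\phi$ by linear combinations of tensors, but working directly with general $\phi$ as you do is equally fine since $t\mapsto\phi(t,\cdot)$ is strongly $L^2$-continuous). The genuine difference is in part (1). The paper develops a vector-valued Lebesgue differentiation lemma: it defines weak Lebesgue points of an $L^1([0,T];X)$ function for $X$ with separable dual, shows almost every time is one, and then tests the weak formulation at each such time $r$ with a piecewise-linear temporal cutoff $\psi^\epsilon$ whose derivative is $-\tfrac{1}{2\epsilon}\mathbf{1}_{[r-\epsilon,r+\epsilon]}$, so that the time-derivative term becomes exactly the average recovering $\int\phi\,\theta(r)\,dx$. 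You instead reduce to the scalar ODE $h_\psi'=-g_\psi$ for each fixed spatial test function, invoke uniqueness of bounded weak solutions of a scalar ODE, and then handle the $\psi$-dependence of the exceptional null set by a countable dense family in $H^1$ plus the uniform bound $\|\nabla\cdot(u\theta)\|_{L^\infty_t H^{-1}_x}\lesssim\|u\|_{L^\infty}\|\theta\|_{L^\infty_t L^2_x}$. Both routes are standard and complete; yours avoids the vector-valued Lebesgue point machinery at the cost of the density bookkeeping you correctly flag as the main technical point, while the paper's weak Lebesgue point lemma handles all test functions simultaneously at each good time. (One cosmetic remark: in your sketch of (3) the sign attached to the concentrated $\chi_\epsilon'$ terms depends on which side of the identity you place them; as written the limit should produce $+\int\phi(r)\theta(r)-\int\phi(s)\theta(s)$ on the left-hand side so that the total vanishes, which is what the claimed identity requires.)
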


A proof using standard tools is provided for the reader's convenience in Appendix~\ref{a.weak-continuity}. With this in hand, we are prepared to prove Theorem~\ref{thm.ad-imp-non-unique-ode}.

\begin{proof}[Proof of Theorem~\ref{thm.ad-imp-non-unique-ode}]
	The implication from non-unique positive weak solutions in $L^\infty([0,T], L^2(\T^d))$ to non-unique backward ODE trajectories is effectively direct from Ambrosio's superposition principle~\cite[Theorem 3.2]{ambrosioTransportEquationCauchy2008}. For a careful application of the superposition principle to show the desired non-uniqueness, see~\cite[Proof of Theorem 1.3]{bruePositiveSolutionsTransport2021}.

    In order to show the non-uniqueness to the transport equation, we will first construct a strictly diffusive solution using the assumed anomalous diffusion. Then we will take the final data from that solution and use that to construct a new solution to the continuity equation with that final data as the vanishing viscosity limit of a \textit{backward heat equation}. Then this solution will have norm nonincreasing \textit{backward-in-time,} i.e.\ norm nondecreasing forward-in-time. While the original solution constructed from the anomalous diffusion will have a norm which decreases forward-in-time, thus showing they are two distinct solutions to the continuity equation with the same final data. Some care will need to be taken in giving positive solutions, which is necessary for the application of Ambrosio's superposition principle. This will be done by considering the positive and negative parts of the initial data to the drift-diffusion equation separately.

    Let $\theta_0$ be the initial data for which $u$ exhibits anomalous diffusion. Before proceeding, for technical convenience, we extend $u$ to $[0,2T] \times \T^d$ by $u|_{[T,2T]} = 0$ and let $\theta^\kappa$ be the unique solution to
    \begin{equation*}
	\begin{cases}
		\partial_t \theta^\kappa - \kappa \Delta \theta^\kappa - \nabla \cdot (u \theta^\kappa) = 0,\\
		\theta^\kappa(0,\cdot) = \theta_0,
	\end{cases}
	\end{equation*}
    on $[0,2T] \times \T^d$.

    Then, by the definition of exhibiting anomalous diffusion, we have that
	\[\liminf_{\kappa\to0} \|\theta^\kappa\|_{L^2_x}(T) < \|\theta_0\|_{L^2}.\]
	Let $\theta_0^+, \theta_0^-$, denote the positive in negative parts of $\theta_0$, so that
	\[\theta_0 = \theta_0^+ - \theta_0^-.\]
	Let $\theta^{\kappa,+}, \theta^{\kappa,-}$ be the unique solutions to the above drift-diffusion equation with $\kappa$ diffusivity but with initial data $\theta_0^+$ and $\theta_0^-$ respectively, so that 
	\[\theta^\kappa = \theta^{\kappa,+} - \theta^{\kappa,-}.\]
	Using weak compactness and taking subsequences, let $\kappa_j \to 0$ such that
	\[\theta^{\kappa_j,+} \stackrel{L^\infty_tL^2_x}{\weakto} \theta^+;\qquad \theta^{\kappa_j,-} \stackrel{L^\infty_tL^2_x}{\weakto} \theta^-;\qquad \|\theta^{\kappa_j}\|_{L^2_x}^2(T) \to E < \|\theta_0\|_{L^2}^2.\]
	Define 
	\[\theta := \theta^+ - \theta^- \in L^\infty_t L^2_x.\]
	Note that, by monotonicity of energy for $\kappa>0$ and by norms only dropping in limits, we have that, for a.e.\ $t \geq T$, 
	\[\|\theta\|_{L^2_x}^2(t) \leq E.\]
    Using the $L^2_w$ continuous representation provided by Lemma~\ref{lem.properties-of-weak-solutions-cont}, we then get the above inequality for every $t \geq T$.\footnote{This is the only place we use the extension to $[0,2T]$.} In particular,
    \[\|\theta\|_{L^2_x}^2(T) \leq E.\]
    
	Note that $\theta^+, \theta^-$ are weak solutions to the continuity equation
	\[\partial_t \gamma - \nabla \cdot (u\gamma) =0\]
	with initial data $\theta_0^+, \theta_0^-$ respectively. Further, since positivity is preserved under weak limits, we have that $\theta^+, \theta^-$ are positive solutions to the continuity equation.

    Then, by Lemma~\ref{lem.properties-of-weak-solutions-cont} and taking the continuous representations, we have that $\theta^+, \theta^-$ also solve the above continuity equation on $[0,T] \times \T^d$ with final data $\theta^+_f := \theta^+(T,\cdot), \theta^-_f := \theta^-(T,\cdot)$ respectively.

    For each $\kappa>0$, define $\tilde \theta^{\kappa,+}$ as the unique solution to the final value problem
	\begin{equation*}
		\begin{cases}
		\partial_t \tilde \theta^{\kappa,+} + \kappa \Delta \tilde \theta^{\kappa,+} - \nabla \cdot (u \tilde \theta^{\kappa,+}) = 0,\\
		\tilde \theta^{\kappa,+}(T,\cdot) = \theta^+_f,
		\end{cases}
	\end{equation*}
	and similarly define $\tilde \theta^{\kappa,-}$.

    Using weak compactness and taking a subsequential limit, we get for some $\kappa_j \to0$
	\[\tilde \theta^{\kappa_j,+} \stackrel{L^\infty_tL^2_x}{\weakto}\tilde \theta^+,\]
	so that $\tilde \theta^+ \in L^\infty([0,T],L^2(\T^d))$ is such that $\theta^+$ solves
	\[\partial_t \tilde \theta^+ + \nabla \cdot (u \tilde \theta^\kappa) = 0\]
	on $(0,T]$ with final data $\theta^+_f,$ and we analogously get $\tilde \theta^-$.

    Thus we see that $\theta^+, \theta^-$ solve the same final value problems for the continuity equation as $\tilde \theta^+, \tilde \theta^-$ respectively and, since weak limits preserve positivity, they are both positive solutions to the continuity equations. Thus to conclude, it suffices to show that  $(\theta^+, \theta^-) \ne (\tilde \theta^+, \tilde \theta^-)$. In particular, we show that
	\[\theta := \theta^+ - \theta^- \ne \tilde \theta^+ - \tilde \theta^- =: \tilde \theta.\]

    Note, letting 
	\[\tilde \theta^\kappa := \tilde \theta^{\kappa, +} - \tilde \theta^{\kappa,-},\]
	we get that
	\begin{equation*}
	\begin{cases}
		\partial_t \tilde \theta^\kappa  + \kappa \Delta \tilde \theta^\kappa - \nabla \cdot (u \tilde \theta^\kappa) =0\\
		\tilde \theta^\kappa(T,\cdot) = \theta_f := \theta^+_f - \theta^-_f = \theta(T,\cdot). 
	\end{cases}
	\end{equation*}
	We further have that
	\[\tilde \theta^{\kappa_j} \stackrel{L^\infty_t L^2_x}{\weakto} \tilde \theta.\]
	Then by the monotonicity of energy in the equation for $\tilde \theta^\kappa$, we have that for any $t \in [0,T]$ and any $\kappa$
	\[\|\tilde \theta^\kappa\|_{L^2_x}^2(t) \leq \|\tilde \theta^\kappa\|_{L^2_x}(T)^2 = \|\theta_f\|_{L^2_x}^2(T) \leq  E.\]
	Then by the fact that weak limits can only drop norms, we have that
	\[\|\tilde \theta\|^2_{L^\infty_t L^2_x} \leq E < \|\theta_0\|_{L^2}^2.\]

    On the other hand $\|\theta\|^2_{L^\infty_t L^2_x} = \|\theta_0\|_{L^2}^2$. To see this, it's first clear by weak limits only decreasing norms that $\|\theta\|^2_{L^\infty_t L^2_x}  \leq \|\theta_0\|_{L^2}^2$. On the other hand, we have that $\theta$ is continuous in $L^2_w$. So taking $t_j \to 0$ such that
    \[\|\theta(t_j)\|_{L^2} \leq \|\theta\|_{L^\infty_t L^2_x},\]
    we have that $\theta(t_j) \stackrel{L^2}{\weakto} \theta_0$, so 
    \[\|\theta_0\|_{L^2} \leq \limsup \|\theta(t_j)\|_{L^2} \leq \|\theta\|_{L^\infty_t L^2_x},\]
    so we get equality.

    Thus $\theta \ne \tilde \theta$, so in particular $(\theta^+, \theta^-) \ne (\tilde \theta^+, \tilde \theta^-),$ thus the continuity equation has some positive final data such that there are non-unique positive solutions.
\end{proof}

\section{Correlated-in-time models and lack of anomalous diffusion}

\label{s.no-ad}

We now turn our attention to the construction and analysis of correlated-in-time variants of the Kraichnan model. As we saw above, the white-in-time correlation of the Kraichnan model makes its analysis much simpler as we can get closed equations for the multi-point equal-time correlations. No such tool will work once we introduce non-trivial time correlations in the advecting field $u$. 

Let us outline one heuristic for understanding the meaningful distinction in diffusive behavior between the white-in-time nature of the Kraichnan model and correlated-in-time models. It was shown in~\cite{drivasLagrangianFluctuationDissipation2017} that anomalous diffusion happens if and only if the advecting field exhibits spontaneous stochasticity, the property that the limiting behavior of the SDE drift-diffusion in the vanishing noise limit remains non-deterministic. In other words, particle trajectories perturbed by arbitrarily small noise will grow to have a finite variance in finite time independent of the size of the perturbing noise. This in turn---heuristically at least---is related to finite-time separation of nearby particles being transported by the advecting field, independent of the size of the initial separation.

For the Kraichnan model, since on each time slice the drift field is entirely independent of the previously seen drift field, we can think of two nearby particles as receiving correlated kicks, where the correlation depends only on the separation of the particles and is independent of their histories. As such, the only thing that determines the rate at which nearby particles separate is the rate of decay of the correlation of the kicks these particles receive. This is seen in the above proof, as we needed $D(0) - D(x)$ to grow fast enough. Thus the Kraichnan model will always give anomalous diffusion, provided the spatial field is rough enough, since the roughness of the field is precisely related to the rate of decay of the correlations.

On the other hand, in a correlated-in-time model no such analysis is available to us. The actual spatial and temporal structure of the advecting field come to play a much greater role. For example, the advecting field could be ``fluid-like'', in the sense that it is (approximately) self-advecting, with the different modes transporting each other. Alternatively, it could be (locally) frozen, taken to be piecewise constant in time. The distinction between these choices cannot be seen in the white-in-time limit, but they can lead to meaningfully different dynamics of the advected passive scalar as well as advected particles. In particular, having a frozen-in-time field will mean that sweeping effects, the presence of slowly varying and large magnitude modes, will cause an advected particle to rapidly pass over fast oscillating modes, causing averaging of the fast modes. On the other hand, this effect should not be present in a model that has the fast modes being advected by the slow modes.

These rough heuristics suggest that the white-in-time nature of the Kraichnan model is likely vital to the generation of the diffusion anomaly. In further demonstration of this idea, we provide three examples of correlated-in-time models for which there is no anomalous diffusion despite their (formal) white-in-time limit having a diffusion anomaly.

\subsection{Correlated-in-time models and their formal white-in-time limits}

\label{ss.white-in-time-limits}

Before proceeding to the specific models, we will first explain the general class of models we will be studying and in what sense they are appropriately called correlated-in-time version of a white-in-time model.

The models we will be considering will be piecewise constant in time fields, scaled so that they formally converge to a white noise. In particular, for each model we will fix some mean-zero distribution over spatially varying fields and let $u_j$ be iid fields from that distribution. Then we let
\[u^\epsilon(t,x) := \epsilon^{-1/2}u_{\lceil t/\epsilon \rceil}(x).\]
Note that these are scaled so that their time integral from $0$ to $t$ has variance proportional to $t$, just as a Brownian motion does. Further, one can check that the multitime covariance of $u^\epsilon$ approximates a $\delta$ as $\epsilon \to 0$.

We will show that these correlated-in-time models we construct do not exhibit anomalous diffusion. In particular, we will show for fixed $\epsilon>0$, the $L^2(\T^d)$ norm becomes constant in the limit $\kappa \to 0$ of vanishing molecular diffusivity. We will also be showing the associated white-in-time model will exhibit anomalous diffusion, as they will be variants of the Kraichnan model studied above.

For these examples to be compelling, we need to argue that white-in-time Kraichnan model variants really correspond to the $\epsilon \to 0$ limit of the correlated-in-time models. The convergence of an evolution equation driven by a correlated-in-time version of a white noise to the SDE driven by the actual white noise is known as a Wong-Zakai theorem, after the original investigation of this limit by Wong and Zakai~\cite{wongConvergenceOrdinaryIntegrals1965,wongRiemannStieltjesApproximationsStochastic1969}. It is worth recalling that, as we noted above, we generically expect the limiting SDE to be driven by Stratonovich noise rather than It\^o noise. This fits our needs well as the Kraichnan model is stated with Stratonovich noise.

The most common version of a Wong-Zakai theorem is for the simplest colorings of the noise, e.g.\ by taking the correlated-in-time version of the noise to be the mollification of the white noise or taking it to be piecewise constant Gaussians. We will be working with somewhat more general correlated-in-time models, where the noise is taken to be piecewise constant in time, but is not necessarily Gaussian.

The statement of a Wong-Zakai-type theorem for SDEs where the noise is being generated by a non-Gaussian distribution is given in~\cite{breuillardRandomWalksRough2009}, where they call it a Wong-Zakai-Donsker-type theorem, in that we are simultaneously getting the Donsker-type convergence of a non-Gaussian random walk to a Brownian motion and the Wong-Zakai convergence of the solutions to the stochastic evolution equations. In particular, that result applies for finite dimensional SDEs and says that the white-in-time limit (taken in the same way we are taking ours) converges to the Stratonovich SDE driven by a Gaussian noise that is white-in-time and has the same spatial covariance as the distribution generating the correlated-in-time noise.

What we'd really want is a Wong-Zakai-Donsker-type result for the Kraichnan SPDE. While there are some results on Wong-Zakai theorems for SPDEs, a result that also includes the Donsker invariance part, allowing for non-Gaussian correlated-in-time noises, seems not currently known, and it's investigation is certainly beyond the scope of our current study. As such, let us take the result for finite dimensional SDEs to be sufficient motivation to say that Kraichnan model with the same spatial covariance is the appropriate white-in-time model of the correlated-in-time models we construct here.

Lastly, before moving on to the construction of the models, let us note that one may consider more complicated schemes for introducing time correlations to the Kraichnan model. Of particular interest is allowing different scales to have different time correlations, so that highly oscillatory modes have shorter time correlations. Models of this sort are considered in~\cite{chaves_lagrangian_2003}. It seems likely that if the large frequency modes have short enough correlations in time, then the model exhibits anomalous diffusion, but a rigorous analysis of these models is beyond our current reach.

\subsection{First example: a spatially smooth model}

Here, and in the following sections, we will disregard the parameter $\eta$ that was considered above, taking it always to be $0$ for notational and conceptual simplicity.

Note that for any fixed $\alpha \in (0,1),$ the Kraichnan drift, as specified in subsection~\ref{subsec.spec-kraichnan-drift} with $\eta=0$, can be written using its sine and cosine series as
\[u^0(t,x) = \sum_j c_j f_j(x) dW_t^j,\]
where the $f_j$ are smooth, $\sup_j \|f_j\|_{L^\infty(\T^d)} < \infty, \sum_j c_j^2 =1,$ and the $W_t^j$ are standard, independent Brownian motions.\footnote{This is effectively just the Fourier representation of $u,$ and the $f_j$ just scaled products of $\sin$ and $\cos$ with the scale factor taken so that we can make $c_j^2$ sum to $1$. The fact that $u$ is taken to be real, that we want  $c_j^2$ sum to $1$, and the vector indices floating around all make it unwieldy to write this representation explicitly, but it's not hard to see it exists.} Note then that $u^0$ has the covariance
\[\E u^0(t,x) u^0(s,y) = \delta(t-s) \sum_j c_j^2 f_j(x) f_j(y).\]

Define the random field $\mu$ such that for each $j,$ 
\[\P(\mu = f_j) = c_j^2.\]
Since $\sum_j c_j^2 = 1$ and the $f_j$ are distinct, this completely determines $\mu$. Let $Z$ be an independent standard normal. Let, for each $k$, let the random field $u_k$ be the an independent and identically distributed copy of the product $Z\mu$. Note then that
\[\E u_k(x) = 0\]
and
\[\E u_k(x) u_k(y) = \sum_j c_j^2 f_j(x) f_j(y) \E Z^2 = \sum_j c_j^2 f_j(x) f_j(y),\]
which is the same as the spatial covariance of $u^0$. Then we define the correlated in time drift field $u^\epsilon$ by
\[u^\epsilon(t,x) := \epsilon^{-1/2}u_{\lceil t/\epsilon \rceil}(x).\]
From the discussion above, we see that the usual ($\eta =0$) Kraichnan model is the (formal) $\epsilon \to 0$ limit of the model given by the correlated-in-time drift field $u^\epsilon$. As such, we have from Theorem~\ref{thm.ad} that the white-in-time limit of $u^\epsilon$ exhibits anomalous diffusion.

On the other hand, consider the correlated in time model for fixed positive $\epsilon$ and up to a fixed time $T$. Then we see that, for each realization of $u^\epsilon$, it is piecewise constant on the $T/\epsilon$ intervals of length $\epsilon$. On each of these intervals, $u^\epsilon$ is just one of the $f_j$, as such it is spatially smooth. Thus $u^\epsilon$, on each realization separately, is a piecewise constant in time on a finite set of disjoint intervals and spatially smooth on each of these intervals. In particular, for each realization, $u^\epsilon \in L^\infty_t W^{1,\infty}_x$. It is thus easy to see that, for fixed $\epsilon >0$, there is no anomalous diffusion in the $\kappa \to 0$ limit. This discussion can be summarized in the following proposition.

\begin{theorem}
    Let $u^\epsilon$ be the random field defined as above. Then, for each $\epsilon>0$ and for each realization, $u^\epsilon$ is spatially smooth. In particular, it is in $L^\infty_t W^{1,\infty}_x$. As such, no realization of $u^\epsilon$ for positive $\epsilon$ exhibits anomalous diffusion. On the other hand, $u^\epsilon$ has the (formal) white-in-time limit $u^0$, the Kraichnan drift field, which does exhibit anomalous diffusion.
\end{theorem}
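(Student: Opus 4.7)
The theorem has three distinct claims, and I would handle the first and third essentially by inspection. For spatial smoothness (Part 1), the construction gives directly that on each interval $((k{-}1)\epsilon, k\epsilon]$ one has $u^\epsilon = \epsilon^{-1/2} Z_k \mu_k$, where $\mu_k$ is almost surely equal to one of the smooth basis functions $f_j$. Since $[0,T]$ contains only $\lceil T/\epsilon \rceil$ such intervals, I would extract the realization-wise bound
\[\|u^\epsilon(\omega)\|_{L^\infty_t W^{1,\infty}_x([0,T] \times \T^d)} \leq \epsilon^{-1/2} \max_{k \leq \lceil T/\epsilon \rceil}\bigl(|Z_k(\omega)|\,\|\mu_k(\omega)\|_{W^{1,\infty}}\bigr),\]
which is almost surely finite. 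For the identification of the formal white-in-time limit (Part 3), I would verify that the spatial covariance of each increment, $\E\, u_k(x) u_k(y) = \sum_j c_j^2 f_j(x) f_j(y)$, matches that of the Kraichnan drift $u^0$ read off from its Fourier representation, and then invoke the Wong-Zakai-Donsker heuristic described in Subsection~\ref{ss.white-in-time-limits} together with Theorem~\ref{thm.ad} applied to $u^0$.

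The substance is in Part 2: for each fixed $\epsilon>0$ and almost every realization $\omega$, I claim the advection-diffusion equation driven by $u^\epsilon(\omega)$ admits no anomalous diffusion, and I would prove this via the classical vanishing-viscosity argument for Lipschitz flows. For smooth initial data $\theta_0$, the energy identity reads $\tfrac{d}{dt}\tfrac{1}{2}\|\theta^\kappa\|_{L^2}^2 = -\kappa \|\nabla \theta^\kappa\|_{L^2}^2$; differentiating the equation in space and using $\nabla\cdot u^\epsilon = 0$ produces, piece by piece across the finitely many time intervals and then concatenating, a Gronwall bound
\[\|\nabla \theta^\kappa_t\|_{L^2}^2 \leq \|\nabla \theta_0\|_{L^2}^2 \exp\bigl(Ct\|u^\epsilon\|_{L^\infty_t W^{1,\infty}_x}\bigr)\]
uniformly in $\kappa$. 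Integrating the energy identity then gives total dissipation over $[0,T]$ of order $\kappa$, vanishing in the limit. For general $\theta_0 \in L^2$ I would approximate by smooth $\theta_0^n$ and exploit the $L^2$ contraction $\|\theta^\kappa(T)-\theta^{\kappa,n}(T)\|_{L^2} \leq \|\theta_0-\theta_0^n\|_{L^2}$ (valid for any $\kappa \geq 0$ since $u^\epsilon$ is bounded and divergence-free) to transfer the conclusion $\liminf_{\kappa\to 0}\|\theta^\kappa(T)\|_{L^2} \geq \|\theta_0\|_{L^2}$, which combined with the trivial upper bound from contraction gives equality.

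I do not anticipate a serious obstacle: this is the standard vanishing-viscosity argument for Lipschitz flows applied one realization at a time. The only point worth flagging is that the Lipschitz seminorm of $u^\epsilon$ diverges as $\epsilon \to 0$---which is precisely what is required for the formal white-in-time limit to still exhibit anomalous diffusion---but for each fixed positive $\epsilon$ the bound is finite and the energy argument closes cleanly.
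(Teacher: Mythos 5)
Your proposal is correct and follows essentially the same route as the paper: each realization of $u^\epsilon$ is piecewise constant in time over finitely many intervals and equal to a (scaled) smooth $f_j$ on each, hence lies in $L^\infty_t W^{1,\infty}_x$, the white-in-time limit is identified by matching spatial covariances, and anomalous diffusion of the limit is Theorem~\ref{thm.ad}. The only difference is that the paper dismisses the Lipschitz-implies-no-anomalous-diffusion step as ``easy to see,'' whereas you supply the standard Gr\"onwall/energy-dissipation argument and the $L^2$-contraction approximation for rough data --- a correct and complete filling-in of that step.
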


\subsection{Anomalous diffusion in an alternative Kraichnan model}

\label{ss.shear-kraichnan}

For the next two examples, the white-in-time model will not quite be the usual Kraichnan model which we analyzed above. The examples are built on shears, as such the spatial structure of $u$ will be that of the sum of two random shears, 
\[u_s(x,y) := f(x) e_y + g(y) e_x,\]
where $f,g$ are then centered random $C^\alpha$ functions $\T \to \R$. We take the covariance of $f$ to be
\[\E f(x)f(x') = D_f(x-x')\]
with
\[\hat D_f(k) =  \frac{1_{k \ne 0}}{|k|^{1+2\alpha}}.\]
$g$ is then taken to be independent and identically distributed to $f$. Then note that
\[D_s(x-w,y-z) := \E u^i(x,y) u^j(w,z) = \begin{pmatrix} D_f(y-z) & 0 \\ 0 & D_f(x-w)\end{pmatrix}.\]
Note that $u$ is constructed so that it spatially looks like the sum of a vertical and horizontal shear, each of $C^{\alpha-}$ regularity. We then consider the version of the Kraichnan flow to have the spatial covariance given by $D_s$ and to be white-in-time.

Then note that, suppressing the dependence on $\alpha$ and defining $g_s^\kappa$ exactly as $g^{\kappa,\eta,\alpha}$ was defined in Section~\ref{s.kraichnan-spec} but using the Kraichnan model with the drift given by this spatial covariance, we get that
\[\partial_t g_s^\kappa - \nabla \cdot a_s^\kappa \nabla g_s^\kappa = 0,\]
with
\[a_s^\kappa(x,y) := D_s(0,0) - D_s(x,y) +2\kappa I.\]
Note then that, similar to the bounds in Appendix~\ref{a.diffusion-matrix-computation}, we can compute that for $x \in [-\pi,\pi],$ 
\begin{equation}
    \label{eq.f-diff-ineq}
   D_f(0) - D_f(x) \geq c_\alpha |x|^{2\alpha}. 
\end{equation}
As such, uniformly in $\kappa$, we have that
\[\nabla g^\kappa \cdot a_s^\kappa \nabla g^\kappa \geq c_\alpha \left( |x|^{2\alpha} |\partial_y g^\kappa|^2 + |y|^{2\alpha} |\partial_x g^\kappa|^2\right).\]
Thus we can rerun the proof of anomalous diffusion in the Kraichnan model with this spatial covariance, using the exact same proof and getting the exact same results, provided we can replace the weighted Sobolev inequalities given in Propositions~\ref{prop.weighted-poincare-base} and~\ref{prop.est-for-nash} with the appropriate weighted Sobolev inequalities with nonradial weights. The necessary inequalities are provided by the following.

\begin{proposition}
    \label{prop.weighted-poincare-nonradial}
    Let $\gamma \in [0,1)$, $D:= [-\pi,\pi]^2,$ and $g : D \to \R$ such that $\int g(x)\ dx = 0$. Then there exists $C(\gamma) < \infty$ such that
    \[\|g\|_{L^2(D)} \leq C\left(\||x|^\gamma \partial_y g\|_{L^2(D)} + \||y|^\gamma \partial_x g\|_{L^2(D)}\right).\]
\end{proposition}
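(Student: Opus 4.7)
The strategy is to mimic the proof of Proposition~\ref{prop.weighted-poincare-base}, decomposing $g$ along the two coordinate directions, applying (weighted) 1D Poincar\'e-type inequalities in each slice, and combining. First I would decompose
\[
g(x,y) = g_1(x) + g_2(y) + g_3(x,y),
\]
where $g_1(x) := \fint_{-\pi}^\pi g(x,y)\, dy$, $g_2(y) := \fint_{-\pi}^\pi g(x,y)\, dx$, and $g_3 := g - g_1 - g_2$. Using $\int g = 0$, both $g_1$ and $g_2$ are mean-zero on $\T$, while $g_3$ satisfies $\int g_3(x,y)\, dy = 0$ for every $x$ and $\int g_3(x,y)\, dx = 0$ for every $y$. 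The three pieces are pairwise orthogonal in $L^2(D)$, giving $\|g\|_{L^2}^2 = \|g_1\|_{L^2}^2 + \|g_2\|_{L^2}^2 + \|g_3\|_{L^2}^2$, so it suffices to bound each piece separately.

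For the residual $g_3$, the slicewise mean-zero property permits the unweighted 1D Poincar\'e in each direction. Applying it in $y$ for each fixed $x$ gives $\int g_3(x,y)^2\, dy \leq C \int |\partial_y g_3(x,y)|^2\, dy$; multiplying by $|x|^{2\gamma}$, integrating in $x$, and using $|\partial_y g_3| \leq |\partial_y g| + |g_2'(y)|$ yields
\[
\int |x|^{2\gamma} g_3^2\, dx\, dy \leq C \||x|^\gamma \partial_y g\|_{L^2(D)}^2 + C(\gamma) \|g_2'\|_{L^2(\T)}^2.
\]
A symmetric estimate with the roles of $x$ and $y$ swapped gives $\int |y|^{2\gamma} g_3^2\, dx\, dy \leq C \||y|^\gamma \partial_x g\|_{L^2(D)}^2 + C(\gamma) \|g_1'\|_{L^2(\T)}^2$. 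Summing these and using that the weight $|x|^{2\gamma} + |y|^{2\gamma}$ has locally integrable reciprocal on $D$ for $\gamma \in [0,1)$ (via a polar-coordinate computation), together with an interpolation/H\"older argument, controls $\|g_3\|_{L^2(D)}^2$ by the right-hand side of the proposition plus $\|g_1'\|^2 + \|g_2'\|^2$.

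For the marginals $g_1, g_2$, the 1D Poincar\'e inequality reduces matters to estimating $\|g_i'\|_{L^2(\T)}^2$, where $g_1'(x) = \fint \partial_x g(x,y)\, dy$. In the range $\gamma < 1/2$, a weighted Cauchy-Schwarz with measure $|y|^{-2\gamma}\, dy$ (integrable on $\T$ in this range) gives $|g_1'(x)|^2 \leq C(\gamma) \fint |y|^{2\gamma} |\partial_x g(x,y)|^2\, dy$, and integrating closes the bound.

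The main obstacle is the case $\gamma \in [1/2, 1)$, where $|y|^{-2\gamma}$ is no longer integrable and the naive Cauchy-Schwarz approach breaks down. Note that this is a genuine two-dimensional phenomenon: the purely one-dimensional analogue $\|h\|_{L^2(\T)}^2 \leq C \int |y|^{2\gamma} |h'|^2\, dy$ fails for $\gamma \geq 1/2$ (as one sees by testing against a regularization of $\mathrm{sgn}(y)$ supported near $0$). Resolving this case requires the joint use of both weighted derivative norms: near a degeneracy set $\{|y| \leq \delta\}$ where $|y|^\gamma \partial_x g$ is weak, the $|x|^\gamma \partial_y g$ norm controls the transverse oscillation of $g$, while on $\{|y| > \delta\}$ the weight $|y|^{2\gamma} \geq \delta^{2\gamma}$ gives unweighted control of $\partial_x g$. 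Optimizing over $\delta$ (or equivalently, using a weighted Muckenhoupt-type argument, noting that $|x|^{2\gamma} + |y|^{2\gamma}$ is an $A_2$ weight on $\R^2$ for all $\gamma \in [0,1)$) closes the estimate. This interpolation between the two weighted norms is where the bulk of the technical work lies.
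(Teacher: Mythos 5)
There is a genuine gap, in fact two. First, the case $\gamma\in[1/2,1)$ for the marginals is not resolved: you correctly observe that the one--dimensional weighted Poincar\'e inequality fails there, but the proposed fixes do not obviously close the estimate. Splitting at $\{|y|\le\delta\}$ and ``optimizing over $\delta$'' runs into the problem that on the bad strip you would need to control $\fint_{|y|\le\delta}\partial_x g(x,y)\,dy$ using $\||x|^\gamma\partial_y g\|_{L^2}$, which after differentiating in $x$ requires mixed second derivatives you do not have; and the Muckenhoupt/$A_2$ theory gives Poincar\'e inequalities with the \emph{same} weight on both sides (or on the full gradient), not an unweighted left-hand side controlled by two derivative terms carrying \emph{different} degenerate weights, which is exactly the structure here. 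Second, even your estimate for $g_3$ is incomplete: the slicewise Poincar\'e only yields control of $\int\bigl(|x|^{2\gamma}+|y|^{2\gamma}\bigr)g_3^2$, and local integrability of the reciprocal weight upgrades this (via Cauchy--Schwarz) only to an $L^1$ bound on $g_3$, not $L^2$. Near the origin both weights degenerate simultaneously, neither slicing direction gives unweighted gradient control there, and interpolating $L^2$ between $L^1$ and $L^q$ requires an a priori $L^q$ bound with $q>2$ --- which is precisely the missing ingredient, not a routine H\"older step. So both of the places where you defer to ``the bulk of the technical work'' are where the actual content of the proposition lives.

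For comparison, the paper takes a different route that is designed to supply exactly this gain of integrability. It first proves a weighted Ladyzhenskaya/Gagliardo--Nirenberg-type inequality for trace-zero functions on a quadrant, of the form $\|g\|_{L^p}\le C\|g\|_{L^\infty_xL^{1,\gamma}_y}^{1/2}\|g\|_{L^{1,\gamma}_xL^\infty_y}^{1/2}$ with the mixed norms controlled by $\|\partial_y g\|_{L^{1,\gamma}_xL^1_y}$ and $\|\partial_x g\|_{L^1_xL^{1,\gamma}_y}$ via the fundamental theorem of calculus; applying this to $|g|^{\gamma'}$ and choosing exponents yields $\|g\|_{L^q}\le C\||x|^\gamma\partial_y g\|_{L^2}^{1/2}\||y|^\gamma\partial_x g\|_{L^2}^{1/2}$ for all $q<2\gamma^{-1}$ (note $2\gamma^{-1}>2$). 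The supercritical integrability is what makes the degenerate region near the axes harmless. The trace-zero hypothesis is then removed by cutoff arguments and a compactness/contradiction argument to pass to mean-zero functions, and the proposition is the case $q=2$. If you want to salvage your decomposition, you would still need an inequality of this mixed-norm type to handle both the marginals for $\gamma\ge1/2$ and the corner region, at which point the decomposition into $g_1,g_2,g_3$ is no longer doing any work.
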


\begin{proposition}
    \label{prop.est-for-nash-nonradial}
    Let $\gamma \in [0,1)$, $D:= [-\pi,\pi]^2,$ and $g : D \to \R$ such that $\int g(x)\ dx = 0$. Then there exists $C(\gamma) < \infty$ such that 
    \[\|g\|_{L^2(D)} \leq C \left(\||x|^\gamma \partial_y g\|_{L^2(D)} + \||y|^\gamma \partial_x g\|_{L^2(D)}\right)^{1-a} \|g\|_{L^1(D)}^{a}.\]
    with
    \[a = \frac{1-\gamma}{2} \in (0,1).\]
\end{proposition}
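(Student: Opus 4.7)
The strategy mirrors the radial case (Proposition~\ref{prop.est-for-nash}), combining a weighted-to-unweighted gradient trade via H\"older with the standard Gagliardo--Nirenberg interpolation on $\T^2$, then rebalancing exponents using the weighted Poincar\'e inequality of Proposition~\ref{prop.weighted-poincare-nonradial}.

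The first step is to bound the unweighted $L^p$ gradient by the weighted $L^2$ gradient for $p$ small enough. Writing $|\partial_y g|^p = (|x|^\gamma |\partial_y g|)^p \cdot |x|^{-\gamma p}$ and using H\"older with conjugate exponents $2/p$ and $2/(2-p)$ gives
\[\|\partial_y g\|_{L^p(D)} \leq C_{\gamma,p}\, \||x|^\gamma \partial_y g\|_{L^2(D)},\]
provided $\int_D |x|^{-2\gamma p/(2-p)}\, dx\, dy < \infty$. Since $|x|$ depends only on the first coordinate, the integrability condition is essentially one-dimensional and reduces to $p(1+2\gamma) < 2$. The symmetric bound holds for $\partial_x g$ with the $|y|^\gamma$ weight, so for admissible $p$ one has $\|\nabla g\|_{L^p(D)} \lesssim \||x|^\gamma \partial_y g\|_{L^2(D)} + \||y|^\gamma \partial_x g\|_{L^2(D)}$.

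The second step invokes the standard 2D Gagliardo--Nirenberg inequality applied to the zero-mean function $g$:
\[\|g\|_{L^2(D)} \leq C \|\nabla g\|_{L^p(D)}^{\theta_p} \|g\|_{L^1(D)}^{1-\theta_p}, \qquad \theta_p = \frac{p}{3p-2},\]
which when combined with the previous step yields a family of weighted Nash-type inequalities parametrized by $p$. To finish, I would interpolate such an inequality with the weighted Poincar\'e inequality of Proposition~\ref{prop.weighted-poincare-nonradial} (the degenerate endpoint with exponents $(1,0)$) via $A^{\alpha} B^{1-\alpha}$, choosing $\alpha$ so that the $\|g\|_{L^1}$ exponent matches the target $(1-\gamma)/2$. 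The required algebra shows this works as long as $\theta_p \leq (1+\gamma)/2$, which can be arranged by a judicious choice of $p$.

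The main obstacle lies in the regime $\gamma \geq 1/2$, where the H\"older admissibility range $p(1+2\gamma) < 2$ collapses to $p \leq 1$ and Gagliardo--Nirenberg degenerates ($\theta_p \to 1$, wiping out the $L^1$ factor entirely). Handling this range requires a refinement adapted to the weights: for instance, a dyadic decomposition of $D$ into strips $\{|x| \sim 2^{-k}\}$ (and the analogous strips in $y$) on which the weight is nearly constant, application of the unweighted Nash inequality on each strip, and a careful assembly of the resulting bounds that uses the weighted Poincar\'e inequality to control the cross-strip interactions. This dyadic assembly is the technically delicate part of the argument.
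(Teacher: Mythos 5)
There is a genuine gap, and it is not confined to the regime $\gamma \ge 1/2$ that you flag. Track the exponents: your H\"older step forces $p < 2/(1+2\gamma)$, and $\theta_p = p/(3p-2)$ is decreasing in $p$, so the best you can do is $\theta_p > \tfrac{1}{2(1-\gamma)}$. But $\tfrac{1}{2(1-\gamma)} > \tfrac{1+\gamma}{2}$ for \emph{every} $\gamma \in (0,1)$ (cross-multiplying, this is $1 > 1-\gamma^2$), so the condition $\theta_p \le (1+\gamma)/2$ that you assert "can be arranged by a judicious choice of $p$" is never satisfiable when $\gamma>0$; it is tight only at $\gamma=0$. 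Taking the geometric mean with the weighted Poincar\'e inequality only \emph{lowers} the exponent on $\|g\|_{L^1}$ further, and on a bounded domain one cannot raise that exponent afterwards without a lower bound on $\|g\|_{L^1}$. So even for $\gamma<1/2$ your argument yields a strictly weaker inequality than the stated one (exponent $\tfrac{1-2\gamma}{2-2\gamma} < \tfrac{1-\gamma}{2}$ on $\|g\|_{L^1}$ at best), and for $\gamma\ge 1/2$ --- which is exactly the range needed downstream when $\alpha$ is close to $1$ --- the method degenerates entirely and your proposed dyadic repair is not carried out.

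The root cause is that the isotropic H\"older step $\|\partial_y g\|_{L^p} \lesssim \||x|^\gamma \partial_y g\|_{L^2}$ throws away the anisotropic structure: the weight $|x|^\gamma$ degenerates only on a line and is always paired with the transverse derivative $\partial_y$. The paper exploits this with a weighted Ladyzhenskaya-type argument, $\|g\|_{L^p} \lesssim \|g\|_{L^{1,\gamma}_x L^\infty_y}^{1/2} \|g\|_{L^\infty_x L^{1,\gamma}_y}^{1/2}$, each factor controlled by the correspondingly weighted derivative, followed by a Moser-type substitution $g \mapsto |g|^{\text{power}}$ to upgrade the exponent. This produces the embedding $\|g\|_{L^q} \lesssim \||x|^\gamma \partial_y g\|_{L^2} + \||y|^\gamma \partial_x g\|_{L^2}$ for all $q < 2/\gamma$ (your route tops out at $q < 1/\gamma$, which is the decisive factor of $2$), after which the proposition is just Lebesgue interpolation of $L^2$ between $L^{(\gamma+1)/\gamma}$ and $L^1$, giving exactly $a = (1-\gamma)/2$. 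If you want to salvage your outline, you need to replace the H\"older step with an estimate of this anisotropic type; the unweighted Gagliardo--Nirenberg inequality alone cannot recover the loss.
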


Proofs of these inequalities are also provided in Appenidx~\ref{a.ineq-proofs}. Similar to Propositions~\ref{prop.weighted-poincare-base} and~\ref{prop.est-for-nash}, proofs are provided for completeness as no reference could be found, though it is possible inequalities of this sort are known to experts. A version of Proposition~\ref{prop.weighted-poincare-nonradial} stated on $\R^2$ for compactly supported function instead of zero-mean functions is given in the post~\cite{fedjahttps://mathoverflow.net/users/1131/fedjaAnswerCaffarelliKohnNirenbergtypeInequality2023}.

With these propositions in hand, the proof of the anomalous diffusion in this version of the Kraichnan model goes through verbatim as in Section~\ref{s.ad-kraichnan}, with~\eqref{eq.f-diff-ineq} in place of Proposition~\ref{prop.diffusion-matrix-bound}, Proposition~\ref{prop.weighted-poincare-nonradial} in place of Proposition~\ref{prop.weighted-poincare-base}, and Proposition~\ref{prop.est-for-nash-nonradial} in place of Proposition~\ref{prop.est-for-nash}.

Note additionally that in all of the proofs in Section~\ref{s.ad-kraichnan}, the only thing that was needed was control on $D(0) - D(x)$, and as such adding a constant matrix to $D$, which is equivalent to adding a constant drift times a temporal white-noise to $u$, doesn't affect the result.

From this discussion, we get the following proposition.
\begin{proposition}
    Fix $\alpha \in (0,1).$ Let $D_f : \T \to \R$ be defined by
    \begin{equation}    
    \label{eq.Df-def}
    \hat D_f(k) = \frac{1_{k \ne 0}}{|k|^{1+2\alpha}}
    \end{equation}
    and let $f,g : \R \times \T$ be independent centered Gaussian fields with covariance
    \[\E f(t,x) f(s,x') = \E g(t,x) g(s,x') = \delta(t-s) D_f(x-x').\]
    Let $u : \R \times \T^2$ be the centered Gaussian field given by the sum of the two shears that $f,g$ generate together with a constant drift times a white noise $dB_t$, so that
    \[u(t,x) = f(t,x) e_y + g(t,y) e_x + vdB_t,\]
    for some $v \in \R^d.$ Then $u$ has covariance given by 
    \begin{align}
    \notag
    \E u(t,x,y) \otimes u(t',x',y') &= \delta(t-t') \left(\begin{pmatrix} D_f(y-y') &0\\ 0 & D_f(x-x') \end{pmatrix} + v \otimes v\right)
    \\&=:\delta(t-s)\left(D_s(x-x', y-y') + v \otimes v\right).    \label{eq.Ds-def}
    \end{align}
    The Kraichnan SPDE associated to $u$ exhibits anomalous diffusion. In particular, there exists $C(\alpha,d)$ such that for all $\kappa>0$ and for any $\theta_0 \in L^2(\T^d)$ such that $\int \theta_0(x)\ dx =0,$
    if $\theta^{\kappa} : \T^d \to \R$ is the random function solving the Kraichnan SPDE 
    \[
    \begin{cases}
    d\theta^{\kappa}_t = \kappa \Delta \theta^{\kappa}_t - u \odot \nabla \theta^{\kappa}_t\\
    \theta^\kappa_0 = \theta_0,    
    \end{cases}
    \]
    then we have the estimate
    \begin{equation*} \E \|\theta^{\kappa}_t\|_{L^2(\T^d)}^2 \leq C e^{-t/C} \|\theta_0\|_{L^2(\T^d)}^2. \end{equation*}
\end{proposition}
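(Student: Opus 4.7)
The plan is to transcribe Section~\ref{s.ad-kraichnan} essentially verbatim with the three ingredients swapped as indicated in the paragraph preceding the proposition. First, since $u$ is white-in-time, the It\^o computation of Subsection~\ref{subsec.spec-kraichnan-spde} goes through unchanged---only the precise form of the spatial covariance changes---producing a closed deterministic parabolic equation for $f^\kappa_s(t,x,y) := \E\theta^\kappa(t,x)\theta^\kappa(t,y)$. Reducing to translation-invariant initial data by an independent uniform random shift of $\theta_0$, exactly as in the proof of Theorem~\ref{thm.ad}, I would set $g^\kappa_s(t,r) := f^\kappa_s(t,r,0)$ and obtain $\partial_t g^\kappa_s = \nabla\cdot a^\kappa_s \nabla g^\kappa_s$ with mean-zero initial data, where
\[
a^\kappa_s(x,y) = 2\kappa I + D_s(0,0) - D_s(x,y) = 2\kappa I + \begin{pmatrix} D_f(0)-D_f(y) & 0 \\ 0 & D_f(0)-D_f(x) \end{pmatrix};
\]
the constant rank-one contribution $v\otimes v$ cancels inside the difference, reflecting the remark that adding a constant matrix to the covariance has no effect on the correlation equation.

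Using~\eqref{eq.f-diff-ineq}, I would then establish the degenerate ellipticity bound
\[
w\cdot a^\kappa_s(x,y) w \;\geq\; c_\alpha\big(|y|^{2\alpha} w_1^2 + |x|^{2\alpha} w_2^2\big), \qquad w\in\R^2,
\]
playing the role of Proposition~\ref{prop.diffusion-matrix-bound} in the shear geometry. The energy identity then reads
\[
\frac{d}{dt}\|g^\kappa_s\|_{L^2(\T^2)}^2 \;\leq\; -2c_\alpha\big(\||y|^{\alpha}\partial_x g^\kappa_s\|_{L^2}^2 + \||x|^{\alpha}\partial_y g^\kappa_s\|_{L^2}^2\big),
\]
and combining with the non-radial weighted Poincar\'e inequality (Proposition~\ref{prop.weighted-poincare-nonradial} with $\gamma=\alpha$) followed by Gr\"onwall yields the analog of Proposition~\ref{prop.weighted-energy}: exponential $L^2$ decay $\|g^\kappa_s(t,\cdot)\|_{L^2}\leq e^{-c_\alpha t}\|g^\kappa_s(0,\cdot)\|_{L^2}$ at a rate depending only on $\alpha$.

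To upgrade to $L^\infty$ decay I would repeat the Nash argument of Proposition~\ref{prop.weighted-nash} with Proposition~\ref{prop.est-for-nash-nonradial} replacing Proposition~\ref{prop.est-for-nash}. Applied to the fundamental solution $\Phi(t,\cdot,y)$---which has $L^1$ norm equal to $1$---the combination of the energy identity with the non-radial weighted interpolation inequality ($a=(1-\alpha)/2$) produces an ODE of the form $\tfrac{d}{dt}\|\Phi\|_{L^2}^2 \leq -c_\alpha(\|\Phi\|_{L^2}^2)^{2/(1+\alpha)}$, valid once $\|\Phi\|_{L^2}$ exceeds an $\alpha$-dependent threshold. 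Integrating this differential inequality yields a polynomial bound $\|\Phi(t,\cdot,y)\|_{L^2}\leq C_\alpha(t^{-(1+\alpha)/(2(1-\alpha))}+1)$ with constant independent of $\kappa$. Writing $g^\kappa_s(t,x) = \int \Phi(\tau,x,y)\, g^\kappa_s(t-\tau,y)\,dy$, choosing $\tau$ of order one, and combining with the exponential $L^2$ decay yields $\|g^\kappa_s(t,\cdot)\|_{C^0(\T^2)}\leq C_\alpha e^{-t/C_\alpha}\|g^\kappa_s(0,\cdot)\|_{L^\infty}$. Translating back via $\E\|\theta^\kappa_t\|_{L^2}^2 = C\,g^\kappa_s(t,0)$ and $\|g^\kappa_s(0,\cdot)\|_{L^\infty}\leq C\|\theta_0\|_{L^2}^2$---exactly as at the end of the proof of Theorem~\ref{thm.ad}---completes the argument.

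The main substantive obstacle in the adaptation lies not in any of the steps above (which are essentially structural transcriptions of Section~\ref{s.ad-kraichnan}) but in establishing the two non-radial weighted inequalities Propositions~\ref{prop.weighted-poincare-nonradial} and~\ref{prop.est-for-nash-nonradial}. These are genuinely stronger than their radial counterparts because each weight $|y|^{2\alpha}$ or $|x|^{2\alpha}$ degenerates on an entire coordinate axis rather than at a single point, and the two weights control orthogonal components of the gradient; nonetheless they are proved in Appendix~\ref{a.ineq-proofs} and may be invoked as black boxes here. An incidental simplification compared to Theorem~\ref{thm.ad} is that the absence of a microscale cutoff $\eta$ removes all joint-limit bookkeeping, so all constants depend only on $\alpha$.
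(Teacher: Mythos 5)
Your proposal follows the paper's own argument essentially verbatim: the paper likewise proves this proposition by rerunning Section~\ref{s.ad-kraichnan} with the pointwise bound~\eqref{eq.f-diff-ineq} in place of Proposition~\ref{prop.diffusion-matrix-bound}, Proposition~\ref{prop.weighted-poincare-nonradial} in place of Proposition~\ref{prop.weighted-poincare-base}, and Proposition~\ref{prop.est-for-nash-nonradial} in place of Proposition~\ref{prop.est-for-nash}, noting as you do that the constant drift only shifts $D$ by a constant matrix and so cancels in $D(0)-D(r)$. Your exponent bookkeeping in the Nash step and the final reduction via random translation both match the paper's treatment.
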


\subsection{Second example: randomly oriented shears}
\label{ss.random-shears}

We are now ready to provide our second example of a correlated-in-time model which fails to exhibit anomalous diffusion despite its white-in-time limit being anomalously diffusive. In this model, we again take the drift field to be piecewise constant in time and iid on each distinct time interval. Roughly, the distribution for each time interval is given by 1) choosing a random $C^{\alpha-}$ Gaussian shear flow 2) independently randomly orienting it to be horizontal or vertical.

We will show that this distribution will have the model described in Subsection~\ref{ss.shear-kraichnan} as it's formal white-in-time limit (in the sense described in Subsection~\ref{ss.white-in-time-limits}). But for any fixed positive time correlation, any finite time interval will just consist of finitely many continuous shears, which will have unique ODE trajectories, and so will not exhibit anomalous diffusion by Theorem~\ref{thm.ad-imp-non-unique-ode}.

To be more precise, let $D_f$ be defined by~\eqref{eq.Df-def} and let $f: \T^d \to \R$ by the centered Gaussian field with covariance given by
\[\E f(x) f(y) = D_f(x-y).\]
Then let $f_j, j \in \N$ be iid copies of $f$. Let $B_j$ be iid Bernoulli random variables, that is $\P(B_j = 1) = \P(B_j = 0) = 1/2$. Then we define
\[u_j(x,y) := \sqrt{2}B_jf_j(x) e_y + \sqrt{2}(1-B_j) f_j(y) e_x,\]
and
\[u^\epsilon(t,x,y) := \epsilon^{-1/2} u_{\lceil t/\epsilon \rceil}(x,y).\]
Then we note that
\[\E u_j(x,y) \otimes u_j(x',y') = D_s(x-x', y-y'),\]
with $D_s$ as given by~\eqref{eq.Ds-def}. Note then that this spatial covariance is then the same as the model given in Subsection~\ref{ss.shear-kraichnan}, and as such the white-in-time limit of this model is anomalously diffusive.

On the other hand, it is a simple exercise to show ODE trajectories are unique in this model for each positive $\epsilon$ (note that the shears on each time interval are continuous). As such, by Theorem~\ref{thm.ad-imp-non-unique-ode}, for each positive $\epsilon$, there is no anomalous diffusion.

We have then shown the following proposition.

\begin{theorem}
    Let $u^\epsilon$ be the random field defined as above. Then, for each $\epsilon>0$ and for each realization, $u^\epsilon$ has unique ODE trajectories. As such, no realization of $u^\epsilon$ for positive $\epsilon$ exhibits anomalous diffusion. On the other hand, $u^\epsilon$ has as its (formal) white-in-time limit the drift field given in Subsection~\ref{ss.shear-kraichnan}, which does exhibit anomalous diffusion.
\end{theorem}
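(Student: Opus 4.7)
The theorem decomposes naturally into three assertions: (a) uniqueness of ODE trajectories for every realization of $u^\epsilon$ with $\epsilon>0$, (b) absence of anomalous diffusion, and (c) identification of the white-in-time limit with the shear Kraichnan model of Subsection~\ref{ss.shear-kraichnan}. The plan is to dispatch (a) by direct integration, deduce (b) from (a) via the contrapositive of Theorem~\ref{thm.ad-imp-non-unique-ode}, and settle (c) by computing the spatial covariance of the increments $u_j$ and invoking the Wong--Zakai--Donsker heuristic described in Subsection~\ref{ss.white-in-time-limits} together with the anomalous diffusion result already established for the shear Kraichnan model.

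For (a): fix a realization and a finite time horizon $T>0$. On each interval $((j-1)\epsilon, j\epsilon\,]$ the field $u^\epsilon$ is time-independent and equal to either $\sqrt{2}\epsilon^{-1/2}f_j(x) e_y$ or $\sqrt{2}\epsilon^{-1/2}f_j(y) e_x$, depending on $B_j$. Since $\hat D_f(k) = 1_{k\ne 0}/|k|^{1+2\alpha}$, Kolmogorov--Chentsov gives that $f_j$ is almost surely $C^{\alpha-}$, hence bounded and continuous. For a horizontal shear with continuous profile, the ODE $\dot x=0$, $\dot y=f_j(x)$ has the explicit, manifestly unique solution $(x_0, y_0 + tf_j(x_0))$, and the vertical case is symmetric. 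Concatenating across the finitely many intervals in $[0,T]$ yields uniqueness of trajectories globally.

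For (b): boundedness of $f_j$ on every realization shows $u^\epsilon \in L^\infty([0,T]\times\T^2)$, and divergence-freeness is preserved under the piecewise-constant construction. If $u^\epsilon$ exhibited anomalous diffusion for some initial datum, Theorem~\ref{thm.ad-imp-non-unique-ode} would force non-uniqueness of backward ODE trajectories from time $T$ on a positive measure subset of $\T^2$, directly contradicting (a). Therefore no realization admits anomalous diffusion.

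For (c): using independence of $B_j$ from $f_j$ and $\mathbb{E} B_j^2 = \mathbb{E}(1-B_j)^2 = 1/2$, $\mathbb{E} B_j(1-B_j) = 0$, a direct computation yields
\[\mathbb{E}\,u_j(x,y)\otimes u_j(x',y') = \begin{pmatrix} D_f(y-y') & 0 \\ 0 & D_f(x-x') \end{pmatrix} = D_s(x-x', y-y'),\]
matching~\eqref{eq.Ds-def}. The scaling $u^\epsilon = \epsilon^{-1/2} u_{\lceil t/\epsilon\rceil}$ is precisely the piecewise-constant scaling discussed in Subsection~\ref{ss.white-in-time-limits}, for which the Wong--Zakai--Donsker heuristic identifies the formal $\epsilon\to 0$ limit as the Stratonovich SPDE driven by a white-in-time Gaussian field with spatial covariance $D_s$. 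This is exactly the shear Kraichnan model of Subsection~\ref{ss.shear-kraichnan}, and the proposition proved there gives anomalous diffusion in the limit. The main obstacle is conceptual rather than technical: one must accept the Wong--Zakai--Donsker passage for the Kraichnan SPDE as a formal limit (the rigorous SPDE version being beyond current reach, as the paper notes); everything else is routine bookkeeping.
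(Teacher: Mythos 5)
Your proposal is correct and follows essentially the same route as the paper: explicit integration of the piecewise-constant shear ODEs gives trajectory uniqueness, the contrapositive of Theorem~\ref{thm.ad-imp-non-unique-ode} rules out anomalous diffusion, and the covariance computation identifies the formal white-in-time limit with the shear Kraichnan model of Subsection~\ref{ss.shear-kraichnan}. The paper leaves the ODE uniqueness as ``a simple exercise,'' and your explicit solution $(x_0,\, y_0 + t\sqrt{2}\epsilon^{-1/2} f_j(x_0))$ together with concatenation over the finitely many intervals is exactly the intended argument.
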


\subsection{Third example: sum of shears with a mean drift}

In the previous two examples, we were able to construct correlated-in-time models that did not exhibit anomalous diffusion by splitting the spatial distribution of the white-in-time model into distinct pieces and then having the correlated-in-time model only have one of those pieces active on each time interval. In the first model, we split into countably many smooth pieces. As such, for each realization at finite $\epsilon$, there were only finitely many ``scales'' interacting. In the second model, we split into only two H\"older continuous pieces, but by splitting into shears, we had for each realization and each finite $\epsilon$, there was only finitely many interactions between the horizontal shearing and the vertical shearing. In both these examples, we get anomalous diffusion in the white-in-time limit and this can heuristically seen as a consequence of their being infinitely many interactions between the different scales or different shear components in this limit, while for any finite amount of time correlation there are only finitely many such interactions.

In this final example, we don't split the field into distinct pieces causing there to be only finite interactions between distinct shear components. Yet we still can show that there fails to be anomalous diffusion at positive time correlation. As such, this example in some sense gives a more complex mechanism for the failure of the correlated-in-time model to be anomalously diffusive. Thus it provides fairly robust evidence that the presence or absence of anomalous diffusion in temporally correlated fields is a much more delicate property than it is for temporally uncorrelated fields.

In this example, we take the spatial structure of the field to be the sum of two shears together with a mean drift. We will show uniqueness of ODE trajectories for this field and as such will not get anomalous diffusion for the correlated-in-time model, but the white-in-time model will be the same as that of~\ref{ss.shear-kraichnan}

Let us now construct the spatial field we will be using. Note that, using a sine and cosine series, we can write the random $C^\alpha$ function $f: \T^d \to \R$ from Subsection~\ref{ss.random-shears} as 
\[f(x) = \sum_j c_j\phi_j(x) Z_j,\]
where the $Z_j$ are iid standard normal random variables, the $c_j$ are sequence of coefficients going to $0$ and the $\phi_j$ are the sines and cosines at the integer modes. In this example, we will need the random shear to be uniformly bounded, so we cannot build our example on Gaussians. We instead replace the standard normals $Z_j$ with $-1,1$ valued Bernoulli random variables $B_j$, so that $\E B_j = 0, \E B_j^2 =1$.\footnote{It is not hard to verify the below construction can be done with any bounded random variable provided it's centered and has unit variance.} So we let $g: \T \to\R$ be the random field given by
\[g(x) := \sum_j c_j \phi_j(x) B_j.\]
Then it's direct to verify that $g$ has the same covariance as $f$. We also have that for $\alpha>1/2,$ $g$ is uniformly bounded, as the $c_j$ are summable and so 
\[|g(x)| \leq \sum_j |c_j| =: K < \infty.\]

Then let $g_j, h_j, j \in \N$ be iid copies of $g$ and let $X_j$ be a sequence of iid $-1,1$ valued Bernoulli random variables. Then we let $u_j: \T^2 \to \R$ be defined by
\[u_j(x,y) = g_j(x) e_y + h_j(y) e_x + 2KX_j(e_x + e_y).\]
Then we note that 
\[\E u_j(x,y) \otimes u_j(x',y') = D_s(x-x', y-y') + 4K^2\begin{pmatrix} 1 & 1 \\ 1 &1\end{pmatrix}.\]
We let
\[u^\epsilon(t,x,y) := \epsilon^{-1/2} u_{\lceil t/\epsilon \rceil}(x,y).\]
Then the $\epsilon \to 0$ limit formally gives the Kraichnan model described in Subsection~\ref{ss.shear-kraichnan} with $v = 2K(e_x + e_y)$. As such, the white-in-time limit is anomalously diffusive.

So now we have to argue that for each positive $\epsilon$ and each realization, that $u^\epsilon$ fails to exhibit anomalous diffusion. Note that each $u_j$ is the sum of two shears and is constructed with a mean drift such that neither coordinate ever vanishes. Thus for any fixed $\epsilon$, $u^\epsilon$ is piecewise constant in time on finitely many intervals and on each interval nowhere vanishes. Thus from Corollary~\ref{cor.nonvanishing} we get that for positive $\epsilon$, $u^\epsilon$ does not exhibit anomalous diffusion.

\begin{theorem}
    Let $u^\epsilon$ be the random field defined as above. Then, no realization of $u^\epsilon$ for positive $\epsilon$ exhibits anomalous diffusion. On the other hand, $u^\epsilon$ has as its (formal) white-in-time limit the drift field given in Subsection~\ref{ss.shear-kraichnan}, which does exhibit anomalous diffusion.
\end{theorem}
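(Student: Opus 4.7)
The white-in-time limit claim follows directly from the covariance computation already carried out in the setup: for each fixed $j$, $\E u_j \otimes u_j = D_s(x-x', y-y') + 4K^2 \bigl(\begin{smallmatrix} 1 & 1 \\ 1 & 1 \end{smallmatrix}\bigr)$, and since $2K(e_x + e_y) \otimes 2K(e_x + e_y) = 4K^2 \bigl(\begin{smallmatrix} 1 & 1 \\ 1 & 1 \end{smallmatrix}\bigr)$, the formal $\epsilon \to 0$ limit is exactly the shear Kraichnan flow of Subsection~\ref{ss.shear-kraichnan} with additional constant drift $v = 2K(e_x + e_y)$. The proposition in that subsection then gives anomalous diffusion of the limiting SPDE.

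The substantive step is to show that for every $\epsilon>0$ and every realization, $u^\epsilon$ fails to exhibit anomalous diffusion. Fix $T > 0$ and a realization; then $u^\epsilon$ is piecewise constant in time on the $N = \lceil T/\epsilon \rceil$ intervals $I_j := [(j-1)\epsilon, j\epsilon)$, and on $I_j$ equals $\epsilon^{-1/2} u_j(x,y)$. The key observation is that $u_j$ is nowhere vanishing: the $x$-component is $h_j(y) + 2KX_j$, and since $|h_j(y)| \leq K$ uniformly while $|2KX_j|= 2K$, we get $|h_j(y) + 2KX_j| \geq K > 0$, and similarly for the $y$-component. Moreover $u_j$ is autonomous, continuous, and divergence-free (shears plus a constant). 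Thus on each $I_j$, $u^\epsilon$ satisfies the hypotheses of Corollary~\ref{cor.nonvanishing}.

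To conclude absence of anomalous diffusion over $[0,T]$, I would concatenate. Apply Corollary~\ref{cor.nonvanishing} on each $I_j$: for any initial data $\phi_0 \in L^2(\T^2)$ at time $(j-1)\epsilon$, the solution $\phi^\kappa$ to the advection-diffusion equation on $I_j$ satisfies $\|\phi^\kappa(j\epsilon)\|_{L^2} \to \|\phi_0\|_{L^2}$ as $\kappa \to 0$. Since there are only finitely many intervals, iterating this and using that $\|\theta^\kappa\|_{L^2}$ is nonincreasing in time (so upper bounds at intermediate times propagate) yields $\lim_{\kappa\to 0} \|\theta^\kappa(T)\|_{L^2} = \|\theta_0\|_{L^2}$ for every $\theta_0 \in L^2(\T^2)$. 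Thus no initial data exhibits anomalous diffusion.

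The main technical point to be careful about is the concatenation: strictly speaking, Corollary~\ref{cor.nonvanishing} as stated considers initial data in $L^2$, and we must verify it applies with input that itself arises as a vanishing-viscosity limit from the previous interval. This is not a real obstacle because Corollary~\ref{cor.nonvanishing} via Theorem~\ref{thm.ad-imp-non-unique-ode} actually gives the stronger statement that for every $L^2$ initial datum the $L^2$ norm is preserved in the vanishing viscosity limit on that interval (since the contrapositive of Theorem~\ref{thm.ad-imp-non-unique-ode} shows that unique ODE trajectories for a continuous, nowhere-vanishing, autonomous, divergence-free 2D field forbids any loss of $L^2$ norm for any initial data), so iteration over finitely many intervals is immediate.
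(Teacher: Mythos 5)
Your proposal is correct and follows essentially the same route as the paper: the white-in-time limit is identified by the covariance computation, the lower bound $|h_j(y)+2KX_j|\geq 2K - K = K>0$ (and likewise for the $y$-component) shows each $u_j$ is nowhere vanishing, and Corollary~\ref{cor.nonvanishing} then rules out anomalous diffusion; the paper leaves the passage from the single-interval statement to the full time interval $[0,T]$ implicit, exactly the step you spell out. One remark on your concatenation: iterating norm preservation interval by interval is slightly delicate because the data handed to interval $I_j$ is the $\kappa$-dependent function $\theta^\kappa((j-1)\epsilon)$, not a fixed $L^2$ datum, so you would need an extra stability step (e.g.\ compare with the solution started from a weak limit of $\theta^{\kappa_n}((j-1)\epsilon)$, using that the $L^2$ distance between two solutions with the same drift is nonincreasing, and that weak convergence plus convergence of norms gives strong convergence). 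A cleaner way to concatenate, and the one most consistent with the paper's logic, is to work at the level of trajectories: backward ODE trajectories for the piecewise-constant field are unique on all of $[0,T]$ because they are unique on each subinterval by~\cite[Theorem 5.1]{silvestre_loss_2013}, and then a single application of the contrapositive of Theorem~\ref{thm.ad-imp-non-unique-ode} to $u^\epsilon$ on $[0,T]$ excludes anomalous diffusion without any limiting argument at intermediate times.
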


\begin{acks}
    I would like to thank the following for stimulating discussions: Scott Armstrong and Vlad Vicol, on deterministic aspects of anomalous diffusion as well as writing advice; Theodore Drivas, for very helpful comments on a draft; Aria Halavati on weighted Sobolev inequalities, in particular on the proof of Proposition~\ref{prop.weighted-poincare-base}; Alexander Dunlap, on SPDE theory; Elias Hess-Childs, on stochastics and functional inequalities. The author was partially supported by NSF grants DMS-1954357 and DMS-2000200 as well as a Simons Foundation grant.
\end{acks}

\appendix

\section{Control on the diffusion matrix}
\label{a.diffusion-matrix-computation}

\begin{proof}[Proof of Proposition~\ref{prop.diffusion-matrix-bound}]
    Fix $\alpha \in (0,1)$ and without loss of generality take $|w| =1$. The $\rho$ cutoff makes it so that there are two different regimes that need to be treated separately, $|x| \leq \eta$ and $|x| \geq \eta$. We then need to combine the bounds on both regions. Thus the proof proceeds in three parts. First, in two parts, we prove the following bound
    \[w \cdot (D^\eta(0) - D^\eta(x)) w \geq \begin{cases} c \eta^{2\alpha -2} |x|^2 & |x| \leq \eta \\ c |x|^{2\alpha} & |x| \geq \eta,  \end{cases}\]
    where the constant is independent of $\alpha,\eta,w$. Lastly, we show that this bound implies the claimed bound.

    Before splitting into cases, note the following exact computation
    \begin{align*}
    w \cdot (D^\eta(0) - D^\eta(x)) w &=  \sum_{k \ne 0} (1- e^{ik\cdot x})\left(1 - \frac{(w\cdot k)^2}{|k|^2} \right) |k|^{-(d+2\alpha)} \rho(\eta |k|)
    \\&=   \sum_{k \ne 0} (1- \cos(k\cdot x))\left(1 - \frac{(w\cdot k)^2}{|k|^2} \right) |k|^{-(d+2\alpha)}\rho(\eta |k|)
    \\&=   {|x|^{2\alpha}} \sum_{k \ne 0} |x|^d(1- \cos(k \cdot x))\left(1 - (w\cdot \hat k)^2 \right) (|x||k|)^{-(d+2\alpha)}  \rho(\eta |k|) 
    \\&=  {|x|^{2\alpha}} \sum_{\substack{\zeta \ne 0,\\ \zeta \in |x|\Z^d}} |x|^d (1- \cos(\zeta \cdot \hat x))\left(1 - (w\cdot \hat \zeta)^2 \right) |\zeta|^{-(d+2\alpha)} \rho(\eta |x|^{-1}|\zeta| ).
    \end{align*}

    \textit{Step 1.} In this step, we consider the case the $|x| \leq \eta$. First note that following computation
    \begin{align*}
    w \cdot (D^\eta(0) - D^\eta(x)) w &=|x|^{2\alpha}\sum_{\substack{\zeta \ne 0,\\ \zeta \in |x|\Z^d}} |x|^d (1- \cos(\zeta \cdot \hat x))\left(1 - (w\cdot \hat \zeta)^2 \right) |\zeta|^{-(d+2\alpha)} \rho(\eta |x|^{-1}|\zeta| )
    \\&\geq c |x|^{2\alpha}\sum_{\substack{\zeta \ne 0,\\ \zeta \in |x|\Z^d,\\ |\zeta| \leq 1}} |x|^d |\zeta \cdot \hat x|^2\left(1 - (w\cdot \hat \zeta)^2 \right) |\zeta|^{-(d+2\alpha)} \rho(\eta |x|^{-1}|\zeta|)
    \\&\geq c |x|^{2\alpha}\sum_{\substack{\zeta \ne 0,\\ \zeta \in |x|\Z^d,\\ |\zeta| \leq 1}} |x|^d 1_{\{|\hat \zeta \cdot \hat x| \geq \cos(\pi/3), |w \cdot \hat \zeta| \leq \cos(\pi/6) \}}  |\zeta|^{-d +2-2\alpha} \rho(\eta |x|^{-1}|\zeta|).
    \end{align*}
    We note then that the indicator amounts to restriction on the angles that $\zeta$ can occupy and the remaining terms are purely radially. We also have that in $d \geq 2$
    \[\inf_{\hat x, w \in S^{d-1}} |\{z \in S^{d-1} : |z \cdot \hat x| \geq \cos(\pi/3), |z \cdot w| \leq \cos(\pi/6)\}| >0,\]
    as such we are restricting over a set of angles of uniformly lower bounded measure. If instead of a sum over a lattice, we were integrating over $\R^d$, we could then remove the angular restriction at the cost of an additional constant. The lattice makes things a bit trickier, but by additionally restricting the sum to $|\zeta| \geq a |x|/\eta$, with $a \in (0,1)$ to be later specified uniformly in $\eta$, then for small enough $\eta$, we can also remove the angular restriction at the price of a uniform constant,  
    \begin{align*}
    \sum_{\substack{\zeta \ne 0,\\ \zeta \in |x|\Z^d,\\ |\zeta| \leq 1}}& |x|^d 1_{\{|\hat \zeta \cdot \hat x| \geq \cos(\pi/3), |w \cdot \hat \zeta| \leq \cos(\pi/6) \}}  |\zeta|^{-d +2-2\alpha} \rho(\eta |x|^{-1}|\zeta|)
    \\&\geq c \sum_{\substack{\zeta \in |x|\Z^d,\\ a |x|/\eta \leq |\zeta| \leq 1}} |x|^d   |\zeta|^{-d +2-2\alpha} \rho(\eta |x|^{-1}|\zeta|).
    \end{align*}
    We can do this because as $\eta$ gets small, the lattice points $|x| \Z^d$ more densely the fill the angles in a thin spherical shells with radii greater than $a|x|/\eta$. Making this precise is elementary but tedious. 

    As such, we have that for $|x| \leq \eta$ and $\eta$ sufficiently small, 
    \[w \cdot (D^\eta(0) - D^\eta(x)) w \geq c|x|^{2\alpha}\sum_{\substack{\zeta \in |x|\Z^d,\\ a |x|/\eta \leq |\zeta| \leq 1}} |x|^d |\zeta|^{-d +2-2\alpha} \rho(\eta |x|^{-1}|\zeta|).\]
    Then, as the summand is radially decaying, we can replace the sum with an integral (noting the $|x|^d$ is a volume factor for the lattice cells) to get a lower bound, giving
    \begin{align*}
        w \cdot (D^\eta(0) - D^\eta(x)) w &\geq c |x|^{2\alpha} \int_{a |x| /\eta \leq |y| \leq 1} |y|^{-d+2-2\alpha} \rho(\eta |x|^{-1} |y|)
        \\&= c |x|^{2\alpha} \int_{a|x|/\eta}^1 r^{1-2\alpha} \rho(\eta |x|^{-1} r).
    \end{align*}
    Now recall that $\rho(0) = 1$, $\rho$ is smooth and radially decaying. So let $b \in (0,1)$, depending on $\rho$, such that $\rho(b) \geq 1/2$ and let $a = b/2$. Then we get that
    \begin{align*}
        w \cdot (D^\eta(0) - D^\eta(x)) w &\geq c |x|^{2\alpha} \int_{b|x|/2\eta}^{b|x|/\eta} r^{1-2\alpha} = c\eta^{2\alpha -2}|x|^{2},
    \end{align*}
    with $c$ depending on $\rho$ but independent of $\eta,\alpha$. Thus we conclude Step~1.

    \textit{Step 2.} We now consider the case that $|x| \geq \eta$. In which case, using that $\rho$ is radially decaying, we have that 
    \begin{align*}
        w \cdot (D^\eta(0) - D^\eta(x)) w &=  |x|^{2\alpha}\sum_{\substack{\zeta \ne 0,\\ \zeta \in |x|\Z^d}} |x|^d (1- \cos(\zeta \cdot \hat x))\left(1 - (w\cdot \hat \zeta)^2 \right) |\zeta|^{-(d+2\alpha)} \rho(\eta |x|^{-1}|\zeta| )
        \\&\geq c |x|^{2\alpha}\sum_{\substack{\zeta \ne 0,\\ \zeta \in |x|\Z^d}} |x|^d (1- \cos(\zeta \cdot \hat x))\left(1 - (w\cdot \hat \zeta)^2 \right) |\zeta|^{-(d+2\alpha)} \rho(|\zeta| ).
    \end{align*}
    Then let $b>0$ such that $\rho(b) =1/2$. Then 
    \begin{align*}
        w \cdot (D^\eta(0) - D^\eta(x)) w &\geq c |x|^{2\alpha}\sum_{\substack{b/2 \leq |\zeta| \leq b,\\ \zeta \in |x|\Z^d}} |x|^d (1- \cos(\zeta \cdot \hat x))\left(1 - (w\cdot \hat \zeta)^2 \right) |\zeta|^{-(d+2\alpha)} 
        \\&\geq  c |x|^{2\alpha}\sum_{\substack{b/2 \leq |\zeta| \leq b,\\ \zeta \in |x|\Z^d}} |x|^d |\zeta|^{-(d+2\alpha)} 
        \\&\geq c |x|^{2\alpha} \int_{b/2}^b r^{-1-2\alpha}\ dr
        \\&\geq c|x|^{2\alpha}, 
    \end{align*}
    where the second inequality that removes the dependence on the angle of $\zeta$ follows exactly as in Step 1, and the switch from sum to integral also follows as in Step 1. The constant depends on $\rho$ but not $\alpha,\eta$. Thus we conclude Step 2.

    \textit{Step 3.} The above steps imply that 
    \[w \cdot a^{\kappa,\eta,\alpha}(x) w \geq \begin{cases} c\eta^{2\alpha -2} |x|^2 + \kappa & |x| \leq \eta \\ c |x|^{2\alpha} + \kappa & |x| \geq \eta.\end{cases}\]
    We seek to uniformly lower bound this by a small multiple of $|x|^{2\beta}$ for some $\beta \in [\alpha,1]$. The bound then is straightforward in the regime that $|x| \geq \eta$. For $|x| \leq \eta$, we compute the minimum of 
    \[\eta^{2\alpha -2} |x|^{2-2\beta} + \kappa |x|^{-2\beta},\]
    which gives that 
    \[\eta^{2\alpha -2} |x|^2 + \kappa \geq  \Big(\Big(\frac{\beta}{1-\beta}\Big)^{1-\beta} + \Big(\frac{1-\beta}{\beta}\Big)^{\beta}\Big)\eta^{2\beta(\alpha-1)} \kappa^{1-\beta} |x|^{2\beta} \geq \eta^{2\beta(\alpha-1)} \kappa^{1-\beta} |x|^{2\beta},\]
    and hence, for any $\beta \in [\alpha,1)$, 
    \[w \cdot a^{\kappa,\eta,\alpha}(x) w \geq c (1 \land \eta^{2\beta (\alpha-1)} \kappa^{1-\beta}) |x|^{2\beta},\]
    allowing us to conclude.
\end{proof}

\section{Proofs of weighted inequalities}
\label{a.ineq-proofs}

We note that in $d=2$, Propositions~\ref{prop.weighted-poincare-base} and~\ref{prop.est-for-nash} are corollaries of Propositions~\ref{prop.weighted-poincare-nonradial} and~\ref{prop.est-for-nash-nonradial}. We provide separate proofs of them here for two reasons. First, it is simpler to provide complete proofs here then it is to adapt Propositions~\ref{prop.weighted-poincare-nonradial} and~\ref{prop.est-for-nash-nonradial} to arbitrary dimension, which would be needed to get the wanted arbitrary dimension in Propositions~\ref{prop.weighted-poincare-base} and~\ref{prop.est-for-nash}. Second, only Propositions~\ref{prop.weighted-poincare-base} and~\ref{prop.est-for-nash} are needed for proof of anomalous diffusion in the usual Kraichnan model and in an effort to give a self contained and minimal proof of that fact, we wish to avoid relying on Propositions~\ref{prop.weighted-poincare-nonradial} and~\ref{prop.est-for-nash-nonradial}.

\begin{proof}[Proof of Proposition~\ref{prop.weighted-poincare-base}]
    Before proceeding, let us recall the version of the Caffarelli-Kohn-Nirenberg inequality we are proving an alternative version of, which states that for $g \in C_c^\infty(\R^d)$,
\begin{equation}\|g\|_{L^2} \leq C \||x| \nabla g\|_{L^2}.
\label{eq.CKN-inequality-prop-weighted-poincare}\end{equation}
    We will be using this version of the inequality in this proof.

    We prove the proposition by contradiction, similarly to the usual proof of Poincar\'e. Assuming the inequality fails, by normalizing we can construct a sequence $g_n$ such that
    \[\int g_n= 0;\qquad \|g_n\|_{L^2} = 1;\qquad \||x| \nabla g_n\|_{L^2} \to 0.\]
    By the $L^2$ boundedness, we have that the sequence $g_n$ is weakly compact in $L^2$. Thus be reindexing we can take without loss of generality that $g_n \stackrel{L^2}{\rightharpoonup} g.$

    We first claim that $g = 0$. First note that using weak convergence against $1$,
    \[0 = \int 1 g_n \to  \int 1 g = \int g.\]
    We next claim that the distributional derivative $\nabla g$ is distributionally equal to $0$ on $[-\pi,\pi]^d \backslash \{0\}$. Before proving this claim note this suffices to see that $g =0$. Since if the distributional derivative on $[-\pi,\pi]^d \backslash \{0\}$ is equal to $0$, by the connectedness of this set, $g$ is a constant on this set, and then by $g \in L^2$, $g$ is a constant on $[-\pi,\pi]^d$. Then since $\int g =0$, that constant must be $0$.

    Let's now compute the distributional derivative. Take $\phi \in C_c^\infty([-\pi,\pi]^d \backslash \{0\})$. Then let
    \[\psi := \frac{\phi}{|x|} \in  C_c^\infty((-\pi,\pi)^d \backslash \{0\}).\]
    Then
    \[\abs{\int \phi \nabla g} \leq \lim_n \abs{\int \phi \nabla g_n} = \lim_n \abs{\int |x| \psi \nabla g_n} \leq \limsup_n \|\psi\|_{L^2} \||x| \nabla g_n\|_{L^2} =0.\]
    Thus $\nabla g = 0$ distributionally, so $g=0$ by the above argument.
    
    We are now prepared to show the contradiction. Let $\chi : D \to \R$ be the $W^{1,\infty}$ piecewise-affine cutoff between $[-\pi/2,\pi/2]^d$ and $[-\pi,\pi]^d$. Then note that 
    \[1 = \|g_n\|_{L^2}^2 = \|\chi g_n\|_{L^2}^2 +\int (1- \chi^2) g_n^2.\]
    Then note that $\chi g_n$ is compactly supported on $\R^d$, as such we can apply the inequality~\eqref{eq.CKN-inequality-prop-weighted-poincare}, giving 
    \begin{align*}
    \|\chi g_n\|_{L^2}^2 &\leq C \||x|\nabla (\chi g_n)\|_{L^2}^2 
    \\&\leq C \||x| \chi \nabla g_n\|_{L^2}^2 + C \||x| \nabla \chi g_n\|_{L^2}^2
    \\&\leq C \||x| \nabla g_n\|_{L^2}^2 + C \|\nabla \chi g_n\|_{L^2}^2.
    \end{align*}
    Putting the two displays together, we get
    \begin{equation}\label{eq.contra-CKN-poinc}1 \leq C\||x| \nabla g_n\|_{L^2}^2 + C \int (1-\chi^2 + |\nabla \chi|^2) g_n^2 \leq C\||x| \nabla g_n\|_{L^2}^2 + C \int_{[-\pi,\pi]^d \backslash [-\pi/2,\pi/2]^d}g_n^2.
    \end{equation}
    Note that the first term on the right hand side goes to $0$ by assumption. We now want to argue the second term (subsequentially) also goes to $0$, giving the desired contradiction.

    Let $\tilde D := [-\pi,\pi]^d \backslash [-\pi/2,\pi/2]^d$. Note that
    \[\|g_n\|_{L^2(\tilde D)} \leq \|g_n\|_{L^2(D)} = 1\]
    and
    \[\|\nabla g_n\|_{L^2(\tilde D)} \leq C \||x| \nabla g_n\|_{L^2(\tilde D)} \leq C \||x| \nabla g_n\|_{L^2(D)} \to 0.\]
    Thus $g_n$ is a sequence uniformly bounded in $H^1(\tilde D).$ By Rellich-Kondrachov, $g_n$ is compact in $L^2(\tilde D)$, thus (after reindexing) $g_n \to v$ in $L^2(\tilde D)$. Recall $g_n \rightharpoonup 0$ in $L^2(D)$ and so also $g_n \rightharpoonup 0$ in $L^2(\tilde D)$. Then since weak and strong limits agree, $v =0,$ so $g_n \to 0$ in $L^2(\tilde D)$. We then conclude the argument, as we now see both terms on the right hand side of~\eqref{eq.contra-CKN-poinc} are going to $0$, contradicting that the sum is greater than $1$.
\end{proof}

\begin{proof}[Proof of Proposition~\ref{prop.est-for-nash}]
    We recall the Caffarelli-Kohn-Nirenberg inequality we will be using, which states that for $v \in C_c^\infty(\R^d)$, $\beta\in (0,1)$, 
    \begin{equation}
    \|v\|_{L^2(\R^d)} \leq C \| |x|^\beta \nabla v\|_{L^2(\R^d)}^a \|v\|_{L^1(\R^d)}^{1-a}
    \label{eq.CKN-inequality-prop-est-for-nash}    
    \end{equation}
    with
    \[a = \frac{d}{d+2-2\beta}.\]

    Identify $g$ with a periodic function $\R^d \to \R$ and the torus with $[-\pi,\pi]^d$. Let $\chi$ be the piecewise affine cutoff between $[-\pi,\pi]^d$ and $[-3\pi/2,3\pi/2]^d$. Then $\chi g$ is compactly supported on $\R^d$ so we can apply the inequality above~\eqref{eq.CKN-inequality-prop-est-for-nash}, giving
    \begin{align*}
        \|g\|_{L^2(\T^d)} &\leq \|\chi g\|_{L^2(\R^d)} 
        \\&\leq C \||x|^\beta \nabla (\chi g)\|_{L^2(\R^d)}^a \|\chi g\|_{L^1(\R^d)}^{1-a} 
        \\&\leq C \left( \|\chi |x|^\beta \nabla g\|_{L^2(\R^d)}^a + \||x|^\beta g\nabla \chi \|_{L^2(\R^d)}^a \right)\|g\|_{L^1(\T^d)}^{1-a}
        \\&\leq C \left( \||x|^\beta \nabla g\|_{L^2(\T^d)}^a + \|g\|_{L^2(\T^d)}^a \right)\|g\|_{L^1(\T^d)}^{1-a},
    \end{align*}
    giving the desired bound.
\end{proof}

The proofs of Propositions~\ref{prop.weighted-poincare-nonradial} and~\ref{prop.est-for-nash-nonradial} are more extensive. First, some notation.

\begin{definition}
    Let $\alpha,\beta \in \R$ and $p,q \in [1,\infty)$, then for $S \subseteq \R$, $g : S \to \R$, let
    \begin{align*}\|g\|_{L^{p,\alpha}(S)} := \left(\int_S |x|^\alpha |g(x)|^p\ dx\right)^{1/p}.\end{align*}
\end{definition}

\begin{lemma}
    For $\alpha \in [0,1),$
    \begin{align*}1 \leq p < \frac{2}{\alpha + 1},\end{align*}
    and $r>0$, there exists $C(r,\alpha,p) < \infty$ such that for any $g : \R^2 \supseteq B_r \to \R$,
    \begin{align*}\|g\|_{L^p} \leq C\|g\|_{L^\infty_x L^{1,\alpha}_y}^{1/2} \|g\|_{L^{1,\alpha}_x L^\infty_y}^{1/2},\end{align*}
    where all integrals are taken over $B_r$.
\end{lemma}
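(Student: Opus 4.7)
The plan is to estimate $\int_{B_r} g^p\,dx\,dy$ by two successive applications of H\"older's inequality, first in $y$ and then in $x$, each peeling off one factor of the target product. Assume without loss of generality that $g\ge 0$ and set $A:=\|g\|_{L^\infty_x L^{1,\alpha}_y}$, $B:=\|g\|_{L^{1,\alpha}_x L^\infty_y}$, and $M(x):=\sup_y g(x,y)$, so that $B=\int |x|^\alpha M(x)\,dx$. Since $p\ge 1$, the pointwise bound $g(x,y)^{p-1}\le M(x)^{p-1}$ gives the initial reduction
\[\int g^p\,dx\,dy \;\le\; \int M(x)^{p-1}\left(\int g(x,y)\,dy\right)dx.\]

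For fixed $x$, I would split $g=\bigl(|y|^\alpha g\bigr)^{p/2}\cdot g^{1-p/2}\,|y|^{-\alpha p/2}$ and apply H\"older in $y$ with conjugate exponents $2/p$ and $2/(2-p)$, giving
\[\int g(x,y)\,dy \;\le\; \left(\int |y|^\alpha g\,dy\right)^{p/2}\left(\int g\,|y|^{-\alpha p/(2-p)}\,dy\right)^{(2-p)/2}\;\le\; C\,A^{p/2}\,M(x)^{(2-p)/2}.\]
The first factor is at most $A^{p/2}$ by definition of $A$; the second is bounded using $g\le M(x)$ together with the finiteness on $(-r,r)$ of $\int |y|^{-\alpha p/(2-p)}\,dy$, which holds precisely when $\alpha p/(2-p)<1$, i.e.\ $p<2/(1+\alpha)$. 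Substituting collapses the $x$-integrand to a power of $M$:
\[\int g^p\,dx\,dy \;\le\; C\,A^{p/2}\int M(x)^{p/2}\,dx.\]

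The second H\"older step, now in $x$, writes $M(x)^{p/2}=\bigl(|x|^\alpha M(x)\bigr)^{p/2}\cdot|x|^{-\alpha p/2}$ and applies H\"older with the same exponents $2/p,\,2/(2-p)$ to yield $\int M^{p/2}\,dx\le C\,B^{p/2}$, using the same integrability condition $\alpha p/(2-p)<1$. Combining gives $\int g^p\le C\,A^{p/2}B^{p/2}$, and taking $p$-th roots produces $\|g\|_{L^p}\le C\,A^{1/2}B^{1/2}$.

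There is no serious obstacle here: the only calibration required is that the symmetric split $A^{p/2},B^{p/2}$ generates the \emph{same} inverse-power integrability threshold in both H\"older applications, and that common threshold is exactly the hypothesis $p<2/(1+\alpha)$. Had one tried the asymmetric factorization $M_x^{p-1}\cdot g$ followed by a single Cauchy--Schwarz (as in a naive approach), the weight powers would not match and one would obtain only the weaker constraint $p<(3+\alpha)/(2(1+\alpha))$, so the choice of splitting $g^p$ symmetrically via $g=(|y|^\alpha g)^{p/2}\cdot g^{1-p/2}|y|^{-\alpha p/2}$ is the key move.
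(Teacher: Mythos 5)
Your proof is correct and is essentially the same argument as the paper's: an iterated weighted H\"older inequality, first in $y$ and then in $x$, in which the weight $|y|^{-\alpha p/(2-p)}$ (resp.\ $|x|^{-\alpha p/(2-p)}$) is integrable near the origin precisely under the hypothesis $p<2/(1+\alpha)$, and the two target norms each appear with exponent $1/2$. The paper arranges this as a single three-factor H\"older on $\|g\|_{L^p_y}$ via the split $|g|=|y|^{-\alpha/2}\cdot(|y|^{\alpha}|g|)^{1/2}\cdot|g|^{1/2}$ rather than first peeling off $M(x)^{p-1}$, but the bookkeeping and the resulting threshold are identical.
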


\begin{proof}
    Note that
    \begin{align*}\|g\|_{L^p}  &= \|\|g\|_{L^p_y} \|_{L^p_x} 
    \\& = \|\||y|^{-\alpha/2} |y|^{ \alpha/2} |g|^{1/2}|g|^{1/2} \|_{L^p_y}\|_{L^p_x}
    \\&\leq \||y|^{-\alpha/2}\|_{L_y^{2p/(2-p)}([-r,r])} \| \|g\|_{L^\infty_y}^{1/2} \|(|y|^\alpha |g|)^{1/2}\|_{L^2_y} \|_{L^p_x}
    \\&\leq C_{\alpha,p,r} \| \|g\|_{L^\infty_y}^{1/2} \||y|^\alpha g\|_{L^1_y}^{1/2} \|_{L^p_x} 
    \\&\leq  C\| |x|^{-\alpha/2}|x|^{\alpha/2}\|g\|_{L^\infty_y}^{1/2} \||y|^\alpha g\|_{L^1_y}^{1/2} \|_{L^p_x}
    \\&\leq C \||x|^{-\alpha/2}\|_{L_x^{2p/(2-p)}([-r,r])} \|g\|_{L^{1,\alpha}_x L^\infty_y}^{1/2} \|g\|_{L^\infty_x L^{1,\alpha}_y}^{1/2}
    \\&\leq C\|g\|_{L^{1,\alpha}_x L^\infty_y}^{1/2} \|g\|_{L^\infty_x L^{1,\alpha}_y}^{1/2},\end{align*}
    where we use the restriction on $p$ to get that 
    \begin{align*}\||y|^{-\alpha/2}\|_{L_y^{2p/(2-p)}([-r,r])} <\infty.\end{align*}
\end{proof}

\begin{lemma}
    Let $g \in C_c^\infty((-\infty,0]^2)$, then
    \begin{align*}\|g\|_{L_x^{1,\alpha} L_y^\infty} \leq \|\partial_y g\|_{L^{1,\alpha}_x L^1_y}\end{align*}
    and
    \begin{align*}\|g\|_{L_x^\infty L^{1,\alpha}_y} \leq \|\partial_x g\|_{L^1_x L^{1,\alpha}_y}.\end{align*}
\end{lemma}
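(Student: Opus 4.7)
The plan is to prove the first inequality by the fundamental theorem of calculus applied in the $y$ variable, slice by slice, and then integrate against $|x|^\alpha\, dx$ using Tonelli. The second inequality will follow by symmetry, swapping the roles of $x$ and $y$.

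In more detail: fix $x \in (-\infty, 0]$ and observe that, since $g$ is smooth and compactly supported in $(-\infty, 0]^2$, we have $g(x, y) \to 0$ as $y \to -\infty$. Hence by the fundamental theorem of calculus, for every $y_0 \in (-\infty, 0]$,
\[
g(x, y_0) \;=\; \int_{-\infty}^{y_0} \partial_y g(x, y)\, dy,
\]
so that
\[
\|g(x, \cdot)\|_{L^\infty_y} \;=\; \sup_{y_0 \leq 0} |g(x, y_0)| \;\leq\; \int_{-\infty}^{0} |\partial_y g(x, y)|\, dy \;=\; \|\partial_y g(x, \cdot)\|_{L^1_y}.
\]
Multiplying by $|x|^\alpha$, integrating over $x \in (-\infty, 0]$, and applying Tonelli's theorem yields
\[
\|g\|_{L^{1,\alpha}_x L^\infty_y} \;=\; \int |x|^\alpha \|g(x, \cdot)\|_{L^\infty_y}\, dx \;\leq\; \int |x|^\alpha \|\partial_y g(x, \cdot)\|_{L^1_y}\, dx \;=\; \|\partial_y g\|_{L^{1,\alpha}_x L^1_y},
\]
which is the first claimed inequality. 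The second inequality is the same argument with the roles of $x$ and $y$ exchanged, using the fundamental theorem of calculus in $x$ on the slice at fixed $y$ and then integrating the resulting pointwise-in-$y$ inequality with the weight $|y|^\alpha$.

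There is no real obstacle here: the content of the lemma is just the one-dimensional Sobolev embedding $W^{1,1} \hookrightarrow L^\infty$ on the half-line (which is an equality in the inequality sense because of the boundary condition at $-\infty$ supplied by compact support), applied slice by slice and then integrated against the weighted measure in the transverse variable. The only slight care needed is to note that the weight in the $x$-integral is compatible with this slicing because the weight depends only on $x$ in the first inequality (and only on $y$ in the second), so it passes through the $y$-integration (resp.\ $x$-integration) unchanged.
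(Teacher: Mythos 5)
Your proof of the first inequality is exactly the paper's: fundamental theorem of calculus in $y$ on each slice, take the sup, then integrate against $|x|^\alpha\,dx$. For the second inequality there is one small point you gloss over: the two inequalities in the lemma are \emph{not} literal mirror images of each other, because the order of the mixed norms is reversed ($L^{1,\alpha}_x$ outer / $L^\infty_y$ inner in the first, but $L^\infty_x$ outer / $L^{1,\alpha}_y$ inner in the second). Your ``swap $x$ and $y$'' argument---FTC in $x$ at fixed $y$, then integrate in $y$ with weight---actually bounds $\int |y|^\alpha \sup_x |g(x,y)|\,dy$, i.e.\ the norm $\|g\|_{L^{1,\alpha}_y L^\infty_x}$, whereas the claim concerns $\sup_x \int |y|^\alpha |g(x,y)|\,dy = \|g\|_{L^\infty_x L^{1,\alpha}_y}$. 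This is harmless, since trivially $\|g\|_{L^\infty_x L^{1,\alpha}_y} \le \|g\|_{L^{1,\alpha}_y L^\infty_x}$ and the two right-hand sides agree by Tonelli, so you prove a (formally stronger) statement that implies the one asserted---but that one-line reduction should be stated. The paper avoids the issue by applying FTC in $x$ directly to the already-$y$-integrated quantity $x \mapsto \int |y|^\alpha |g(x,y)|\,dy$ and then taking the sup in $x$, which produces the claimed norm on the nose. Either route is fine.
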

\begin{note}
    The requirement $g \in C_c^\infty((-\infty,0]^2)$ means the support of $g$ can contains parts of the $x$ and $y$-axes, but $g$ must eventually be $0$ in the lower-left quadrant.
\end{note}

\begin{proof}
    For fixed $x,y \in \R$,
    \begin{align*}|g(x,y)| = \abs{ \int_{-\infty}^y \partial_y g(x,r)\ dr} \leq \|\partial_y g\|_{L^1_y}(x)\end{align*}
    Thus
    \begin{align*}\|g\|_{L^\infty_y}(x) \leq \|\partial_y g\|_{L^1_y}(x).\end{align*}
    Thus by monotonicity,
    \begin{align*}\|g\|_{L^{1,\alpha}_x L^\infty_y} \leq \|\partial_y g\|_{L^{1,\alpha}_x L^1_y}.\end{align*}

    For the second inequality, we note that for fixed $x$, we have that
    \begin{align*}\|g\|_{L^{1,\alpha}_y}(x) &= \int |y|^{\alpha} |g(x,y)|\ dy 
    \\&= \int_{-\infty}^x \frac{d}{dr} \left(\int |y|^{\alpha} |g(r,y)|\ dy\right) dr 
    \\&\leq \int_{-\infty}^x \int |y|^{\alpha} |\partial_x g|(r,y)\ dydr 
    \\&\leq \|\partial_x g\|_{L^1_x L^{1,\alpha}_y}.\end{align*}
    Thus taking the $L^\infty$ norm over $x$, we conclude.
\end{proof}

The direct consequence of the above lemmas is the following.

\begin{proposition}
    For $\alpha \in [0,1),$
    \begin{align*}1 \leq p < \frac{2}{\alpha + 1},\end{align*}
    and $r>0$, there exists $C(r,\alpha,p) < \infty$ such that for any $g \in C_c^\infty((-r,0]^2)$,
    \begin{align*}\|g\|_{L^p} \leq  C\|\partial_y g\|_{L^{1,\alpha}_x L^1_y}^{1/2}\|\partial_x g\|_{L^1_x L^{1,\alpha}_y}^{1/2}.\end{align*}
\end{proposition}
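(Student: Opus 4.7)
The statement is flagged as a direct consequence of the two lemmas immediately preceding it, and my plan is simply to chain them together. First, because $g \in C_c^\infty((-r,0]^2)$, the support of $g$ is contained in $[-r,0]^2 \subset B_{r\sqrt{2}}$, so the first lemma (applied with its radius taken to be $r\sqrt{2}$, which only affects the constant) gives
\[
\|g\|_{L^p} \leq C \|g\|_{L^\infty_x L^{1,\alpha}_y}^{1/2} \|g\|_{L^{1,\alpha}_x L^\infty_y}^{1/2}.
\]

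Next, the hypothesis $g \in C_c^\infty((-r,0]^2) \subset C_c^\infty((-\infty,0]^2)$ means both inequalities of the second lemma are available. Applying each to the corresponding mixed-norm factor above yields
\[
\|g\|_{L^\infty_x L^{1,\alpha}_y} \leq \|\partial_x g\|_{L^1_x L^{1,\alpha}_y}, \qquad \|g\|_{L^{1,\alpha}_x L^\infty_y} \leq \|\partial_y g\|_{L^{1,\alpha}_x L^1_y}.
\]
Substituting these two bounds into the previous display produces the claimed inequality.

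There is essentially no obstacle here, since the work has been done in the two lemmas. The only bookkeeping to be careful about is that the constant $C(r,\alpha,p)$ inherits its $r$-dependence and the restriction $1 \leq p < 2/(\alpha+1)$ entirely from the first lemma, where these arise from requiring $|x|^{-\alpha/2}$ and $|y|^{-\alpha/2}$ to be in $L^{2p/(2-p)}([-r,r])$; the second lemma introduces no $r$-dependence because it uses only the fundamental theorem of calculus along rays entering the support of $g$ from $-\infty$, where the boundary term vanishes thanks to the compact support assumption.
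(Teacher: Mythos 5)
Your proposal is correct and is exactly the argument the paper intends: the paper states this proposition as ``the direct consequence of the above lemmas'' with no written proof, and your chaining of the first lemma (on a ball containing $[-r,0]^2$) with the two inequalities of the second lemma is precisely that consequence. The bookkeeping remarks about the $r$-dependence and the range of $p$ are accurate.
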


By applying this proposition to $v = |g|^\gamma$ for $\gamma>1$, we can get a similar inequality for a broader range of norms. In particular, we have the following proposition.

\begin{proposition}
    For $\alpha \in [0,1),$ 
    \begin{align*}1 \leq q < 2\alpha^{-1},\end{align*}
    and $r >0$, there exists $C(r,\alpha,q) < \infty$ such that for any $g \in C_c^\infty((-r,0]^2)$, 
    \begin{equation}
    \label{eq.zero-trace-ineq}
    \|g\|_{L^q} \leq C  \||x|^\alpha\partial_y g\|_{L^2}^{1/2} \||y|^\alpha \partial_x g\|_{L^2}^{1/2}.
    \end{equation}
\end{proposition}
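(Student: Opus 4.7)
The plan is to apply the preceding proposition (the one with exponent $p \in [1, 2/(\alpha+1))$) to the auxiliary function $v = |g|^\gamma$ for a well-chosen $\gamma \geq 2$, and then use Cauchy--Schwarz to convert the weighted $L^1$ norms of $\partial v$ into the desired weighted $L^2$ norms of $\partial g$.

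Concretely, for $\gamma \ge 2$ set $v = |g|^\gamma$; then $|\nabla v| \le \gamma |g|^{\gamma-1} |\nabla g|$ a.e., so applying the previous proposition (after a standard mollification turning $|g|^\gamma$ into a genuine $C_c^\infty$ test function, e.g.\ $v_\epsilon = (g^2 + \epsilon)^{\gamma/2}$, and passing to the limit $\epsilon\downarrow0$) gives
\[
\|g\|_{L^{p\gamma}}^\gamma \;=\; \|v\|_{L^p} \;\le\; C\gamma\,\big\||g|^{\gamma-1} |\partial_y g|\big\|_{L^{1,\alpha}_x L^1_y}^{1/2}\,\big\||g|^{\gamma-1} |\partial_x g|\big\|_{L^1_x L^{1,\alpha}_y}^{1/2}.
\]
Next, Cauchy--Schwarz in both variables gives
\[
\int\!\!\int |x|^\alpha |g|^{\gamma-1}|\partial_y g|\,dx\,dy \;\le\; \|g\|_{L^{2(\gamma-1)}}^{\gamma-1}\,\||x|^\alpha \partial_y g\|_{L^2},
\]
and symmetrically for the factor involving $\partial_x g$. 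Substituting yields
\[
\|g\|_{L^{p\gamma}}^\gamma \;\le\; C\gamma\, \|g\|_{L^{2(\gamma-1)}}^{\gamma-1}\,\||x|^\alpha \partial_y g\|_{L^2}^{1/2}\,\||y|^\alpha \partial_x g\|_{L^2}^{1/2}.
\]

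The key balancing step is to match the two appearances of $g$: choose $\gamma$ so that $p\gamma = 2(\gamma-1) =: q$, which forces $p = 2 - 2/\gamma$. The constraint $p < 2/(\alpha+1)$ then becomes $\gamma < 1 + 1/\alpha$, i.e.\ $q < 2/\alpha$, which is exactly the announced range; the constraint $p \ge 1$ becomes $\gamma \ge 2$, i.e.\ $q \ge 2$. With this choice the factor $\|g\|_{L^q}^{\gamma-1}$ on the right cancels against the left, leaving
\[
\|g\|_{L^q} \;\le\; C(r,\alpha,q)\,\||x|^\alpha \partial_y g\|_{L^2}^{1/2}\,\||y|^\alpha \partial_x g\|_{L^2}^{1/2},
\]
which is \eqref{eq.zero-trace-ineq} for $q \in [2, 2/\alpha)$. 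For the remaining range $q \in [1,2)$, since $g$ is supported in the bounded set $(-r,0]^2$, H\"older's inequality gives $\|g\|_{L^q} \le C(r,q)\|g\|_{L^2}$, and we reduce to the case $q=2$ just proved (which uses only $\alpha<1$).

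The main technical subtlety is the lack of smoothness of $|g|^\gamma$ for non-integer $\gamma$: the previous proposition is stated for $C_c^\infty$ functions, so one must work with $v_\epsilon = (g^2+\epsilon)^{\gamma/2}$ and take $\epsilon \downarrow 0$, using the weighted $L^2$ norms on the right to get uniform bounds and dominated convergence on the left. Once this approximation is in place, the argument is straightforward bookkeeping of exponents.
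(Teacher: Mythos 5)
Your argument is essentially identical to the paper's proof: both apply the preceding $L^p$ inequality to $|g|^\gamma$, use Cauchy--Schwarz to pass from the weighted $L^1$ norms of $|g|^{\gamma-1}\nabla g$ to $\|g\|_{L^{2(\gamma-1)}}^{\gamma-1}$ times the weighted $L^2$ norms of $\nabla g$, and balance exponents via $p\gamma = 2(\gamma-1)$; you are in fact slightly more careful than the paper in flagging the smoothness of $|g|^\gamma$ and the range $q\in[1,2)$. The only small slip is that $v_\epsilon=(g^2+\epsilon)^{\gamma/2}$ is not compactly supported (it equals $\epsilon^{\gamma/2}$ off $\supp g$); one should use $(g^2+\epsilon)^{\gamma/2}-\epsilon^{\gamma/2}$ instead, after which the approximation argument goes through as you describe.
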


\begin{proof}
    Fix $\gamma \geq 1$ and apply the above proposition to $|g|^\gamma$, yielding
    \begin{align*}\|g\|_{L^{\gamma p}}^\gamma &= \||g|^\gamma\|_{L^p} 
    \\&\leq C\gamma  \||g|^{\gamma -1} \partial_y g\|_{L^{1,\alpha}_x L^1_y}^{1/2} \||g|^{\gamma-1} \partial_x g\|_{L^1_x L^{1,\alpha}_y}^{1/2}
    \\&\leq C\gamma \||g|^{\gamma-1}\|_{L^2} \||x|^\alpha\partial_y g\|_{L^2}^{1/2} \||y|^\alpha \partial_x g\|_{L^2}^{1/2}
    \\&=  C\gamma \|g\|_{L^{2(\gamma-1)}}^{\gamma-1} \||x|^\alpha\partial_y g\|_{L^2}^{1/2} \||y|^\alpha \partial_x g\|_{L^2}^{1/2}.\end{align*}
    In order to apply the above proposition, we needed
    \begin{align*}1 \leq p < \frac{2}{\alpha+1}.\end{align*}
    We take $\gamma$ such that $\gamma p = 2(\gamma-1),$ hence
    \begin{align*}\gamma :=  \frac{2}{2 -p},\end{align*}
    in which case we get that
    \begin{align*}\|g\|_{L^{\frac{2p}{2-p}}} =\|g\|_{L^{\gamma p}} \leq C \gamma \||x|^\alpha\partial_y g\|_{L^2}^{1/2} \||y|^\alpha \partial_x g\|_{L^2}^{1/2}.\end{align*}
    Note $\gamma \geq 1$ as 
    \begin{align*}p < \frac{2}{\alpha+1} \leq 2.\end{align*}
    Then we are free to choose $p$ as we want in the range $[1,2/(\alpha+1))$ (getting a $p$ dependent constant) and noting that $2p/(2-p)$ is increasing in $p$ and that
    \begin{align*}\frac{2 \frac{2}{\alpha+1}}{2- \frac{2}{\alpha+1}} =2 \alpha^{-1},\end{align*}
    we get that for any $q \in [1,2\alpha^{-1})$,
    \begin{align*}\|g\|_{L^q} \leq C_{q,\alpha,r} \||x|^\alpha\partial_y g\|_{L^2}^{1/2} \||y|^\alpha \partial_x g\|_{L^2}^{1/2},\end{align*}
    giving the desired bound.
\end{proof}

We will want to prove this inequality for mean-zero functions instead of trace-zero functions. To that end, we first prove the following lemma.

\begin{lemma}
    \label{lem.zero-trace-alt}

    For $\alpha \in [0,1),$
    \begin{align*}1 \leq p < \frac{2}{\alpha + 1},\end{align*}
    then there exists $C(p,\alpha)<\infty$ such that for any $g \in C^\infty([-1,0]^2)$ such that $g(-1,y) =0 $ for all $y \in [-1,0]$, we have the estimate
    \begin{align*}\|g\|_{L^p} \leq C(\||x|^\alpha \partial_y g\|_{L^1} + \|\partial_x g\|_{L^1}).\end{align*}
\end{lemma}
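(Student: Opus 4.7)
The plan is to reduce to the immediately preceding proposition (which requires trace zero on both ``back'' edges $\{x=-r\}$ and $\{y=-r\}$) by extending $g$ across the free edge $\{y=-1\}$ via even reflection, then cutting off to restore the missing trace-zero condition. I will then unwind the cutoff error using the single trace-zero hypothesis $g(-1,y)=0$.

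First I would define $\tilde g : [-1,0]\times [-2,0] \to \R$ by $\tilde g(x,y) = g(x,y)$ for $y \in [-1,0]$ and $\tilde g(x,y) = g(x,-2-y)$ for $y \in [-2,-1]$. This extension is continuous across $y=-1$, lies in $W^{1,1}$, and preserves the hypothesis $\tilde g(-1,y) = 0$ for all $y \in [-2,0]$. Pick a smooth cutoff $\psi \in C^\infty([-2,0],[0,1])$ with $\psi \equiv 1$ on $[-1,0]$ and $\psi \equiv 0$ in a neighborhood of $y = -2$, and set $\tilde g_1 := \psi\, \tilde g$. Then $\tilde g_1$ has trace zero on both $\{x=-1\}$ and $\{y=-2\}$, so the preceding proposition (whose proof only uses the fundamental theorem of calculus from the trace-zero edges together with the weighted H\"older interpolation of Lemma 1, and adapts verbatim to the rectangle $[-1,0]\times[-2,0]$, modulo constants depending on the side lengths) yields
\[
\|\tilde g_1\|_{L^p} \;\leq\; C\, \||x|^\alpha \partial_y \tilde g_1\|_{L^1}^{1/2}\, \||y|^\alpha \partial_x \tilde g_1\|_{L^1}^{1/2}.
\]

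Next I would estimate the two factors. Since $\psi$ depends only on $y$, $\partial_x \tilde g_1 = \psi\, \partial_x \tilde g$, and using $|y| \leq 2$ on the extended domain together with $\|\partial_x \tilde g\|_{L^1} = 2\|\partial_x g\|_{L^1}$ from the reflection, $\||y|^\alpha \partial_x \tilde g_1\|_{L^1} \leq C \|\partial_x g\|_{L^1}$. For the other factor, expand $\partial_y \tilde g_1 = \psi'\, \tilde g + \psi\, \partial_y \tilde g$. The second term satisfies $\||x|^\alpha \psi\, \partial_y \tilde g\|_{L^1} \leq 2\||x|^\alpha \partial_y g\|_{L^1}$ by reflection. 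For the cutoff error $\psi'\tilde g$, I would exploit the single-edge trace zero: $|\tilde g(x,y)| \leq \int_{-1}^0 |\partial_x \tilde g(s,y)|\,ds$, so
\[
\int |x|^\alpha |\psi'|\, |\tilde g|\, dx\, dy \;\leq\; \|\psi'\|_\infty \Big(\int_{-1}^0 |x|^\alpha\, dx\Big) \int \|\partial_x \tilde g(\cdot,y)\|_{L^1_x}\, dy \;\leq\; C\,\|\partial_x g\|_{L^1}.
\]
Combining these, $\||x|^\alpha \partial_y \tilde g_1\|_{L^1} \leq C(\||x|^\alpha \partial_y g\|_{L^1} + \|\partial_x g\|_{L^1})$.

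Finally, substituting into the preceding proposition, applying AM--GM to dominate the geometric mean by the arithmetic mean, and using $\|g\|_{L^p([-1,0]^2)} \leq \|\tilde g_1\|_{L^p([-1,0]\times[-2,0])}$ (valid since $\tilde g_1 \equiv g$ on $[-1,0]^2$), I obtain the desired bound. The main obstacles are the purely bookkeeping ones: verifying that the preceding proposition, stated there on a square domain $(-r,0]^2$ for $C_c^\infty$ functions, truly adapts to the rectangle $[-1,0]\times[-2,0]$ for $W^{1,1}$ functions with the indicated boundary traces (which follows by density/mollification and by inspection of the proofs of Lemma 1 and Lemma 2), and correctly tracking that only the single-sided trace $g(-1,y)=0$ is ever used when bounding the cutoff error.
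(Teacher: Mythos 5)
Your argument is correct, but it takes a genuinely different route from the paper's. The paper proves the lemma directly on $[-1,0]^2$: it first uses the fundamental theorem of calculus in $x$ from the trace-zero edge to get $\|g\|_{L^\infty_x L^1_y} \leq \|\partial_x g\|_{L^1}$, then for each fixed $x$ selects a base point $y_x$ at which $|g(x,y_x)| \leq \|g\|_{L^1_y}(x)$, integrates $\partial_y g$ from $y_x$ to bound $\|g\|_{L^\infty_y}(x) \leq \|\partial_y g\|_{L^1_y}(x) + \|\partial_x g\|_{L^1}$, and closes with the same one-sided weighted H\"older interpolation that appears in the first lemma of the appendix (only the $|x|^{-\alpha/2}$ singularity enters, which is exactly why $p < \tfrac{2}{\alpha+1}$ suffices). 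You instead reflect evenly across $y=-1$, cut off near $y=-2$, reduce to the already-proven two-sided trace-zero proposition on the rectangle, and absorb the cutoff error $\psi'\tilde g$ via the same FTC-in-$x$ mechanism. Both proofs exploit the single hypothesis $g(-1,\cdot)=0$ in the same essential way — to manufacture the missing control in the $y$-direction — but your reduction buys reuse of the earlier proposition at the cost of having to extend it from $C_c^\infty((-r,0]^2)$ on a square to Lipschitz (the reflection is only $W^{1,\infty}$ across $y=-1$) functions with trace-zero boundary data on a rectangle. You correctly flag this as the main friction point; it is resolvable either by a translation-plus-mollification density argument (with one more cutoff in $x$ whose error is again controlled by the trace-zero FTC bound) or, more cleanly, by observing that the proofs of the two auxiliary lemmas use only absolute continuity on lines and finiteness of $\||x|^{-\alpha/2}\|_{L^{2p/(2-p)}}$ and $\||y|^{-\alpha/2}\|_{L^{2p/(2-p)}}$ over the relevant intervals, so they apply verbatim in that generality. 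The paper's direct argument avoids this bookkeeping entirely and is shorter, but your version makes the logical dependence on the two-sided estimate explicit. One small remark: your final bound on the second factor discards the harmless weight $|y|^\alpha \leq 2^\alpha$, which is fine, and the AM--GM step at the end is exactly what converts the geometric mean of the proposition into the sum appearing in the lemma.
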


\begin{proof}
    Note that for each $x \in [-1,0]$,
    \begin{align*}\int_{-1}^0 |g(x,y)|\ dy = \int_{-1}^x \frac{d}{ds} \int_{-1}^{0} |g(s,y)|\ dy\,ds \leq \|\partial_x g\|_{L^1}.\end{align*}
    Thus
    \begin{align*}\|g\|_{L^\infty_x L^1_y} \leq \|\partial_x g\|_{L^1}.\end{align*}
    Then for each $x \in [-1,0]$, there exists $y_x \in [-1,0]$ such that
    \begin{align*}|g(x,y_x)| \leq \|g\|_{L^1_y([-1,0])}(x) \leq \|g\|_{L^\infty_x L^1_y} \leq \|\partial_x g\|_{L^1}.\end{align*}
    Thus for each $x \in [-3/4,0]$, 
    \begin{align*}|g(x,y)| \leq \abs{\int_{y_x}^y \partial_y g(x,s)\ ds} + |g(x,y_x)| \leq \int_{-1}^{0} |\partial_y g(x,s)|\ ds +  \|\partial_x g\|_{L^1}.\end{align*}
    Thus
    \begin{align*}\|g\|_{L^\infty_y}(x) \leq \int_{-1}^0|\partial_y g(x,s)|\ ds + \|\partial_x g\|_{L^1}.\end{align*}
    Multiplying by $|x|^\alpha$ and integrating over $x$ gives
    \begin{align*}\|g\|_{L^{1,\alpha}_x L^\infty_y} \leq \|\partial_y g\|_{L^{1,\alpha}_x L^1_y}  + C \|\partial_x g\|_{L^1}.\end{align*}
    Then
    \begin{align*}\|g\|_{L^p}  &= \|\||g|^{1/2}|g|^{1/2}\|_{L^p_y} \|_{L^p_x} 
    \\&\leq \| \|g\|_{L^\infty_y}^{1/2} \|g\|_{L^1_y}^{1/2}\|_{L^p_x}
    \\&\leq \||x|^{-\alpha/2}\|_{L_x^{2p/(2-p)}([-1,0])} \|g\|_{L^{1,\alpha}_x L^\infty_y}^{1/2} \|g\|_{L^\infty_x L^1_y}^{1/2}
    \\&\leq C (\|g\|_{L^{1,\alpha}_x L^\infty_y} + \|g\|_{L^\infty_x L^1_y})
    \\&\leq C(\|\partial_y g\|_{L^{1,\alpha}_x L^1_y} + \|\partial_x g\|_{L^1}),
    \end{align*}
    thus giving the desired bound.
\end{proof}

\begin{proposition}
    For $\alpha \in [0,1),$ 
    \begin{align*}1 \leq q < 2\alpha^{-1},\end{align*}
    there exists $C(q,\alpha)$ such that for any $g \in C^\infty([-1,0]^2)$ such that
    \begin{align*}\int_{[-1,-1/2]^2} g = 0,\end{align*}
    then
    \begin{align*}\|g\|_{L^q} \leq C(\||x|^\alpha\partial_y g\|_{L^2}+ \||y|^\alpha \partial_x g\|_{L^2}).\end{align*}
\end{proposition}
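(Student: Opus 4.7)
The plan is to prove the inequality by a contradiction-compactness argument in the spirit of the proof of Proposition~\ref{prop.weighted-poincare-base}, using a partition of unity around the four corners of $[-1,0]^2$ to reduce to the compactly supported Nash-type proposition established immediately above.

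Suppose, for contradiction, that the inequality fails. Normalizing, we obtain a sequence $g_n \in C^\infty([-1,0]^2)$ with $\int_{[-1,-1/2]^2} g_n = 0$, $\|g_n\|_{L^q} = 1$, and $\||x|^\alpha \partial_y g_n\|_{L^2} + \||y|^\alpha \partial_x g_n\|_{L^2} \to 0$. Without loss of generality take $q>1$ (smaller $q$ follow from larger $q$ by H\"older on the bounded domain). Extract a weakly convergent subsequence $g_n \rightharpoonup g$ in $L^q$. Testing the weighted gradient bounds against smooth functions compactly supported in the interior $(-1,0)^2$---where both weights $|x|^\alpha, |y|^\alpha$ are bounded below---forces $\partial_x g = \partial_y g = 0$ there, so $g$ is constant on $(-1,0)^2$, and the mean-zero hypothesis forces $g \equiv 0$.

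To derive the contradiction, we must upgrade to strong $L^q$ convergence. Choose $\phi \in C^\infty(\R)$ with $\phi \equiv 1$ on $[-1/3,0]$ and $\phi \equiv 0$ on $(-\infty,-2/3]$, and consider the partition of unity
\[\chi_1 = \phi(x)\phi(y),\qquad \chi_2 = (1-\phi(x))\phi(y),\qquad \chi_3 = \phi(x)(1-\phi(y)),\qquad \chi_4 = (1-\phi(x))(1-\phi(y)),\]
subordinate to the four corners of $[-1,0]^2$. Each $\chi_i g_n$ is compactly supported near a single corner; after a reflection placing that corner at $(0,0)$, the compactly supported Nash-type proposition applies, and the reflected weights are comparable (up to constants depending on $\alpha$) to $|x|^\alpha, |y|^\alpha$ on the support. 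Expanding the derivatives of the products yields
\[\|\chi_i g_n\|_{L^q} \le C\big(\||x|^\alpha \partial_y g_n\|_{L^2} + \|g_n\|_{L^2(\supp \partial_y \chi_i)}\big)^{1/2}\big(\||y|^\alpha \partial_x g_n\|_{L^2} + \|g_n\|_{L^2(\supp \partial_x \chi_i)}\big)^{1/2}.\]
The tensor-product structure of the $\chi_i$ guarantees $\supp \partial_y \chi_i \subseteq \{|y| \ge 1/3\}$ and $\supp \partial_x \chi_i \subseteq \{|x| \ge 1/3\}$, so each transition set is bounded away from one of the two degenerate axes.

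The main obstacle is to show $\|g_n\|_{L^2} \to 0$ on these transition sets, which border one degenerate axis but not the other, so that standard Rellich--Kondrachov does not immediately apply: only one partial derivative is unweighted $L^2$ there. The resolution is a further cutoff and iteration: on, say, $\{|y| \ge 1/3\}$ the weight $|y|^\alpha$ is bounded below, so $\partial_x g_n \in L^2$ there; localizing $g_n$ once more to bound it away from the remaining axis $\{x=0\}$ recovers a region where both weights are bounded below and Rellich gives $L^2$ compactness, while the compactly supported Nash-type proposition controls the small contribution from the remaining near-axis strip in $L^q$ (and hence, by a cutoff argument as in the proof of Proposition~\ref{prop.weighted-poincare-base}, in $L^2$). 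Combined with the already-established weak convergence to zero, this yields $\|g_n\|_{L^q} \to 0$, contradicting $\|g_n\|_{L^q} = 1$.
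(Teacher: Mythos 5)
Your argument is correct in outline but takes a genuinely different route from the paper. The paper proves this proposition \emph{directly}: it first reduces to the case $g|_{[-1,-3/4]^2}=0$ by a cutoff plus the standard (unweighted) Poincar\'e inequality on $[-1,-1/2]^2$, where both weights are bounded below; it then multiplies by a tensor cutoff $\psi(x)\psi(y)$ to land in the compactly supported class $C_c^\infty((-1,0]^2)$ where~\eqref{eq.zero-trace-ineq} applies, and controls the leftover L-shaped region with Lemma~\ref{lem.zero-trace-alt}. You instead run a contradiction--compactness argument with a four-corner partition of unity, reflecting each corner piece so that~\eqref{eq.zero-trace-ineq} applies (correctly noting that on each piece the reflected weights are dominated by the original ones), and using Rellich--Kondrachov plus the identification of the weak limit to kill the commutator terms. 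This is essentially the scheme the paper reserves for the \emph{next} step (Proposition~\ref{prop.mean-zero-ineq-nonradial}, the full mean-zero statement on $[-\pi,\pi]^2$); you are applying it one level earlier. What each buys: the paper's proof is quantitative in principle and shorter given Lemma~\ref{lem.zero-trace-alt}, which your proof avoids entirely; yours is non-quantitative but conceptually uniform with the final compactness step.

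Two small repairs are needed to make your write-up airtight. First, ``without loss of generality $q>1$'' is not enough: the Rellich step needs $\|g_n\|_{L^2}\lesssim\|g_n\|_{L^q}$ and the thin-strip step needs strict gain from H\"older, so you should take $q>2$ (which is permissible since $2<2\alpha^{-1}$ for $\alpha<1$, and smaller $q$ follow by H\"older on the bounded domain). Second, the handling of the near-axis strip is stated vaguely; the clean way is not another invocation of the Nash-type proposition but simply H\"older against the normalization: on the strip $S_\delta$ of width $\delta$ along the degenerate axis, $\|g_n\|_{L^2(S_\delta)}\le |S_\delta|^{1/2-1/q}\|g_n\|_{L^q}\le C\delta^{1/2-1/q}$ uniformly in $n$, while on the complement both weights are bounded below and Rellich applies; sending $n\to\infty$ and then $\delta\to0$ gives $\|g_n\|_{L^2}\to0$ on each transition set. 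With these adjustments the proof closes as you describe.
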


\begin{proof}
    We assume without loss of generality that $q \geq 2$. Then note that it suffices to prove the result under the assumption that $g|_{[-1,-3/4]^2}=0$. To see this, suppose $\int_{[-1,-1/2]^2} g = 0$ and let $\chi$ be a smooth cutoff between $[-1,0]^2 \backslash [-1,-1/2]^2$ and $[-1,0]^2 \backslash [-1,-3/4]^2$, so that $v := \chi g$ is such that $v|_{[-1,-3/4]^2} =0$. As such by assumption we can apply the inequality to $v$. Thus we have that
    \begin{align*}\|g\|_{L^q} &\leq \|g\|_{L^q([-1,-1/2]^2)} + \|v\|_{L^q} 
    \\&\leq  \|g\|_{L^q([-1,-1/2]^2)} + C(\||x|^\alpha\partial_y (\chi g)\|_{L^2}+ \||y|^\alpha \partial_x (\chi g)\|_{L^2})
    \\&\leq  C(\|g\|_{L^q([-1,-1/2]^2)} + \||x|^\alpha\partial_y g\|_{L^2}+ \||y|^\alpha \partial_x g\|_{L^2}).\end{align*}
    But then we note that the usual Poincar\'e inequality together with the fact that $\int_{[-1,-1/2]^2} g = 0$ allows us to get the bound 
    \begin{align*}\|g\|_{L^q([-1,-1/2]^2)} \leq C \|\nabla g\|_{L^q([-1,-1/2])}^2 \leq C(\||x|^\alpha\partial_y g\|_{L^2}+ \||y|^\alpha \partial_x g\|_{L^2}).\end{align*}
    Putting this together yields 
    \begin{align*}\|g\|_{L^q} \leq C(\||x|^\alpha\partial_y g\|_{L^2}+ \||y|^\alpha \partial_x g\|_{L^2}),\end{align*}
    as desired.

    So we thus suppose that $g|_{[-1,-3/4]^2} =0$. We want to use the previously proven inequality~\eqref{eq.zero-trace-ineq}, so we cut off to make $g$ compactly supported in $(-1,0]^2$. To that end, let $\psi: [-1,0] \to \R$ be a smooth cutoff between $[-3/4,0]$ and $[-1,0]$ and let $v(x,y) := \psi(x)\psi(y) g(x,y).$ Then $v \in C_c^\infty((-1,0]^2)$, so we can apply the above inequality. Letting $A:= [-1,0]^2 \backslash [-3/4,0]^2$, we have
    \begin{align*}\|g\|_{L^q} &\leq \|g\|_{L^q(A)} + \|v\|_{L^q} 
    \\&\leq \|g\|_{L^q(A)} + C \||x|^\alpha\partial_y v\|_{L^2}^{1/2} \||y|^\alpha \partial_x v\|_{L^2}^{1/2}
    \\&\leq \|g\|_{L^q(A)} + C (\||x|^\alpha\partial_y v\|_{L^2}+ \||y|^\alpha \partial_x v\|_{L^2})
    \\&\leq C (\|g\|_{L^q(A)} + \||x|^\alpha\partial_y g\|_{L^2}+ \||y|^\alpha \partial_x g\|_{L^2}).\end{align*}
    Then to conclude, it suffices to show note the bound
    \begin{align*}\|g\|_{L^q(A)}  \leq C(\||x|^\alpha\partial_y g\|_{L^2}+ \||y|^\alpha \partial_x g\|_{L^2})\end{align*}
    follows from Proposition~\ref{lem.zero-trace-alt} using that $g|_{[-1,-3/4]^2} =0$.
\end{proof}

We are now ready to prove the weighted inequality for zero mean functions.

\begin{proposition}
    \label{prop.mean-zero-ineq-nonradial}
    For $\alpha \in [0,1),$ 
    \begin{align*}1 \leq q < 2\alpha^{-1},\end{align*}
    there exists $C(q,\alpha)$ such that for any $g \in C^\infty([-\pi,\pi]^2)$ such that
    \begin{align*}\int g = 0,\end{align*}
    then
    \begin{align*}\|g\|_{L^q} \leq C(\||x|^\alpha\partial_y g\|_{L^2}+ \||y|^\alpha \partial_x g\|_{L^2}).\end{align*}
\end{proposition}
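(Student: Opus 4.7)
The plan is to deduce Proposition~\ref{prop.mean-zero-ineq-nonradial} from the preceding proposition (which handles functions on $[-1,0]^2$ with zero mean on a subsquare away from the origin) by splitting $[-\pi,\pi]^2$ into its four closed quadrants $Q_{\sigma\tau} := [\sigma\pi/2,\sigma\pi/2+\sigma\pi/2]$ \dots more concretely $Q_{++} = [0,\pi]^2$, $Q_{-+} = [-\pi,0]\times[0,\pi]$, $Q_{+-}=[0,\pi]\times[-\pi,0]$, $Q_{--}=[-\pi,0]^2$, and handling each quadrant separately. For each $Q_{\sigma\tau}$, pick the ``far corner'' subsquare $S_{\sigma\tau}$, e.g.\ $S_{++}=[\pi/2,\pi]^2$, and note that under the reflection-and-rescaling map sending $Q_{\sigma\tau}$ to $[-1,0]^2$ and $S_{\sigma\tau}$ to $[-1,-1/2]^2$, the corner-origin of the quadrant is sent to the origin and thus the weights $|x|^\alpha,|y|^\alpha$ are preserved (up to a $\pi^\alpha$ factor absorbed into the constant).

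The next step is to set $m_{\sigma\tau} := \fint_{S_{\sigma\tau}} g$, so that $g - m_{\sigma\tau}$ restricted to $Q_{\sigma\tau}$ has zero mean on $S_{\sigma\tau}$. Applying the previous proposition (after the reflection-rescaling above) on each quadrant gives
\[
\|g - m_{\sigma\tau}\|_{L^q(Q_{\sigma\tau})} \leq C\left(\||x|^\alpha \partial_y g\|_{L^2(Q_{\sigma\tau})} + \||y|^\alpha \partial_x g\|_{L^2(Q_{\sigma\tau})}\right) =: C \cdot R_{\sigma\tau}.
\]
Summing over the four quadrants reduces the problem to controlling the four constants $|m_{\sigma\tau}|$ by the right-hand side of the desired inequality.

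The main step is this control of the means. I connect adjacent subsquares through strips on which one partial derivative of $g$ is $L^2$-controlled without any degenerate weight. For instance, $S_{++}$ and $S_{-+}$ both lie in the strip $T_N := [-\pi,\pi]\times[\pi/2,\pi]$, on which $|y|^\alpha \geq (\pi/2)^\alpha$, so $\|\partial_x g\|_{L^2(T_N)} \leq C \||y|^\alpha \partial_x g\|_{L^2}$. A direct fundamental-theorem-of-calculus computation then yields
\[
|m_{++} - m_{-+}| \leq \fint_{\pi/2}^\pi \fint_{\pi/2}^\pi \fint_{-\pi}^{-\pi/2} \Big|\int_{x_2}^{x_1} \partial_\xi g(\xi,y)\,d\xi\Big|\,dx_2\,dx_1\,dy \leq C \|\partial_x g\|_{L^2(T_N)} \leq C \||y|^\alpha \partial_x g\|_{L^2}.
\]
Analogous bounds via the south, east, and west strips give $|m_{\sigma\tau} - m_{\sigma'\tau'}|$ bounded by the RHS for all adjacent pairs, and hence for all pairs by the triangle inequality. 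Combined with the mean-zero condition $\int_U g = 0$, which via Hölder and the quadrant bounds gives $|\sum_{\sigma\tau} m_{\sigma\tau}| = |Q|^{-1}|\sum_{\sigma\tau} \int_{Q_{\sigma\tau}}(m_{\sigma\tau} - g)| \leq C \cdot \text{RHS}$, I conclude $|m_{\sigma\tau}| \leq C \cdot \text{RHS}$ for each $\sigma\tau$.

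Finally, summing gives
\[
\|g\|_{L^q(U)} \leq \sum_{\sigma\tau} \left(\|g - m_{\sigma\tau}\|_{L^q(Q_{\sigma\tau})} + |m_{\sigma\tau}|\,|Q_{\sigma\tau}|^{1/q}\right) \leq C\left(\||x|^\alpha \partial_y g\|_{L^2} + \||y|^\alpha \partial_x g\|_{L^2}\right),
\]
completing the proof. The main obstacle is the bookkeeping in bounding the means via strip connections, but no further functional inequality is needed beyond those already established.
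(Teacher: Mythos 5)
Your proof is correct, and it takes a genuinely different route from the paper's. The paper argues by contradiction: it takes a normalizing sequence $g_n$ with $\|g_n\|_{L^q}=1$, $\int g_n=0$, and vanishing weighted gradient norms, extracts a weak $L^q$ limit $g$, shows the distributional derivatives of $g$ vanish away from the coordinate axes so that $g$ is constant and hence zero, and then controls the subsquare means $\int_{[-1,-1/2]^2}g_n\to 0$ purely by weak convergence before invoking the same quadrant-wise proposition you use. Your argument replaces this compactness step with a direct, quantitative control of the four means $m_{\sigma\tau}$: the strip estimate $|m_{++}-m_{-+}|\leq C\|\partial_x g\|_{L^2(T_N)}\leq C\||y|^\alpha\partial_x g\|_{L^2}$ is a correct fundamental-theorem-of-calculus plus Cauchy--Schwarz computation (the segments of integration stay inside the strip where the relevant weight is bounded below), the analogous bounds for the other adjacent pairs follow by symmetry, and combining the pairwise differences with the global mean-zero condition pins down each $|m_{\sigma\tau}|$ by the right-hand side. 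The reflection-and-rescaling bookkeeping that transports the previous proposition from $[-1,0]^2$ to each quadrant is also handled correctly, since the degenerate corner of each quadrant is sent to the origin and the weights only pick up harmless powers of $\pi$. What your approach buys is an effective, in-principle-trackable constant $C(q,\alpha)$ and independence from any weak-compactness machinery; what the paper's approach buys is brevity, since it never has to make the mean-comparison quantitative. Both are sound.
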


\begin{proof}
    We again can assume without loss of generality that $q\geq2$. We prove this inequality by contradiction. Suppose that $g_n \in C^\infty([-1,1]^2)$ such that
    \begin{align*}\|g_n\|_{L^q} = 1;\qquad \int g_n = 0;\qquad \||x|^\alpha\partial_y g_n\|_{L^2}+ \||y|^\alpha \partial_x g_n\|_{L^2} \to 0.\end{align*}
    Then by weak compactness and relabelling the subsequence, we can assume
    \begin{align*}g_n \stackrel{L^q}{\weakto} g.\end{align*}
    Then in particular
    \begin{align*}0 =\int 1 g_n \to \int 1 g  = \int g.\end{align*}
    Further, one can easily verify that $\partial_x g_n \to 0$ in $L^2_{loc}([-1,1] \times [-1,0) \cup (0,1])$ and $\partial_y g_n \to 0$ in $L^2_{loc}([-1,0)\cup (0,1] \times [-1,1]).$

    Thus the distribution $\partial_x g = 0$ away from $y=0$ and the distribution $\partial_y g = 0$ away from $x=0$. This then implies that $g$ is a constant. But $\int g =0 $, so $g =0$. Thus
    \begin{align*}g_n \stackrel{L^q}{\weakto} 0.\end{align*}
    We will now show that
    \begin{align*}\|g_n\|_{L^q([-1,0]^2)} \to 0,\end{align*}
    the other quadrants follow similarly. Note that this suffices to finish the proof, as we then get that $g_n \to 0$ in $L^q$, contradicting, $\|g_n\|_{L^q} = 1$.

    Let
    \begin{align*}a_n := 4 \int_{[-1,-1/2]^2} g_n,\end{align*}
    and note by the weak convergence $a_n \to 0$. Let $v_n := g_n - a_n$. Then note that
    \begin{align*}\int_{[-1,-1/2]^2} v_n =0,\end{align*}
    so we can apply the previously proven inequality to give that
    \begin{align*}\|g_n\|_{L^q([-1,0]^2)} &\leq a_n + \|v_n\|_{L^q([-1,0]^2)} 
    \\&\leq a_n + C(\||x|^\alpha\partial_y g_n\|_{L^2([-1,0]^2)}+ \||y|^\alpha \partial_x g_n\|_{L^2([-1,0]^2)}) \to 0,\end{align*}
    allowing us to conclude.
\end{proof}

\begin{proof}[Proof of Proposition~\ref{prop.weighted-poincare-nonradial} and Proposition~\ref{prop.est-for-nash-nonradial}]
    Note now that Proposition~\ref{prop.weighted-poincare-nonradial} is just Proposition~\ref{prop.mean-zero-ineq-nonradial} applied with $q=2$.

    For Proposition~\ref{prop.est-for-nash-nonradial}, let
    \begin{align*}q = \frac{2\alpha^{-1} + 2}{2} = \frac{\alpha + 1}{\alpha}.\end{align*}
    Then $2 < q < 2\alpha^{-1}$. We then interpolate $2$ between $q$ and $1$, giving
    \begin{align*}\|g\|_{L^2} \leq \|g\|_{L^q}^{1-a} \|g\|_{L^1}^a\end{align*}
    with
    \begin{align*}a + \frac{1-a}{q} = \frac{1}{2}\end{align*}
    or
    \begin{align*}a= \frac{q-2}{2q-2} = \frac{1-\alpha}{2}.\end{align*}
    Then applying Proposition~\ref{prop.mean-zero-ineq-nonradial} to bound $\|g\|_{L^q}$, we conclude.
\end{proof}

\section{Properties of weak solutions to the continuity equation}
\label{a.weak-continuity}

Before proceeding to the proof of Lemma~\ref{lem.properties-of-weak-solutions-cont}, we need to introduce the notion of weak Lebesgue points and show a version of Lebesgue differentiation for Banach space valued functions.

\begin{definition}
    Let $X$ a Banach space and let $f : (0,T) \to X$ in $L^1((0,T); X)$. We say that $t \in (0,T)$ is a \textit{weak Lebesgue point of $f$} if for every $\phi \in X'$, we have that $t$ is a Lebesgue point of $\phi\circ f :[0,T] \to \R$, i.e.\ 
    \[\lim_{\epsilon \to 0} \frac{1}{2\epsilon} \int_{t-\epsilon}^{t+\epsilon} \phi(f(s))\ ds = \phi(f(t)),\]
    or equivalently, we have that
    \[\lim_{\epsilon \to 0} \frac{1}{2\epsilon} \int_{t-\epsilon}^{t+\epsilon} f(s)\ ds = f(t),\]
    with convergence in the weak topology.
\end{definition}

\begin{lemma} 
    \label{lem.weak-leb-pt}
    Let $X$ a Banach space with a separable dual and let $f : [0,T] \to X$ in $L^1([0,T]; X)$. Then almost every $t \in [0,T]$ is a weak Lebesgue point of $f$.
\end{lemma}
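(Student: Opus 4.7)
The plan is to reduce to the scalar Lebesgue differentiation theorem using the separability hypothesis on $X'$. Since $X'$ is separable, fix a countable dense sequence $\{\phi_n\}_{n\in\N} \subset X'$. For each $n$, the scalar-valued function $\phi_n \circ f : [0,T] \to \R$ lies in $L^1([0,T])$, because Bochner integrability of $f$ gives $\|f(\cdot)\|_X \in L^1([0,T])$ and $|\phi_n(f(s))| \le \|\phi_n\|_{X'} \|f(s)\|_X$. The scalar Lebesgue differentiation theorem then yields a full-measure set $E_n \subset [0,T]$ of Lebesgue points of $\phi_n \circ f$. I would also let $E_0$ be the full-measure set of Lebesgue points of the real-valued function $\|f(\cdot)\|_X$. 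Setting $E := \bigcap_{n \ge 0} E_n$, the set $E$ still has full measure.

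The claim is that every $t \in E$ is a weak Lebesgue point of $f$. Fix $t \in E$ and arbitrary $\phi \in X'$, and let $\delta > 0$. Choose $n$ so that $\|\phi - \phi_n\|_{X'} < \delta$. For $\epsilon > 0$, write
\[
\Big|\tfrac{1}{2\epsilon}\!\!\int_{t-\epsilon}^{t+\epsilon}\!\!\phi(f(s))\,ds - \phi(f(t))\Big|
\le \|\phi-\phi_n\|_{X'}\tfrac{1}{2\epsilon}\!\!\int_{t-\epsilon}^{t+\epsilon}\!\!\|f(s)\|_X\,ds
+ \Big|\tfrac{1}{2\epsilon}\!\!\int_{t-\epsilon}^{t+\epsilon}\!\!\phi_n(f(s))\,ds - \phi_n(f(t))\Big|
+ \|\phi-\phi_n\|_{X'}\|f(t)\|_X.
\]
As $\epsilon \to 0$, the first term tends to $\|\phi-\phi_n\|_{X'}\|f(t)\|_X \le \delta\|f(t)\|_X$ because $t \in E_0$, the middle term tends to $0$ because $t \in E_n$, and the last is bounded by $\delta\|f(t)\|_X$. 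Hence $\limsup_{\epsilon \to 0}|\cdots| \le 2\delta\|f(t)\|_X$, and letting $\delta \to 0$ gives the desired convergence. Since $\phi \in X'$ was arbitrary, $t$ is a weak Lebesgue point.

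There is essentially no obstacle here beyond bookkeeping: the separability of $X'$ lets us pass from the countable collection of good points to all of $X'$ by a density-and-triangle-inequality argument, exactly as in the scalar case. I note in passing that if one were willing to use the vector-valued Lebesgue differentiation theorem for Bochner integrable functions (which gives strong, not merely weak, convergence of averages at a.e.\ $t$), the conclusion would follow immediately, but the proof above only uses the scalar statement and is self-contained.
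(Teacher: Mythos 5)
Your proof is correct and follows essentially the same route as the paper's: a countable dense family in $X'$, the scalar Lebesgue differentiation theorem applied to each $\phi_n\circ f$ and to $\|f(\cdot)\|_X$, and a density/triangle-inequality argument yielding the bound $2\delta\|f(t)\|_X$ on the limsup. No gaps.
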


\begin{proof}
    Let $\phi_j \in X'$ be a dense sequence in $X'$. Then, for each $j$, $\phi_j \circ f \in L^1((0,T), \R)$ so the set of it's Lebesgue points, $L_j$, is full measure. Additionally, note that $\|f\|_X \in L^1((0,T), \R),$ so it's set of Lebesgue points, $L_0$, is full measure. Let
    \[L := \bigcap_{j=0}^\infty L_j.\]
    Then $L$ is full measure also. We claim that for each $t \in L$, $t$ is a weak Lebesgue point of $f$. Fix $t \in L$ and $\phi \in X'$. Fix $\delta>0$ and let $\phi_j$ such that $\|\phi -\phi_j\| < \delta$. Then
    \begin{align*}
    \left \|\frac{1}{2\epsilon} \int_{t-\epsilon}^{t+\epsilon} \phi(f(s)) - \phi(f(t)) \right\| &\leq \left \|\frac{1}{2\epsilon} \int_{t-\epsilon}^{t+\epsilon} \phi_j(f(s))\ ds - \phi_j(f(t)) \right\| \\&\qquad+ \left\|(\phi - \phi_j) \left(\frac{1}{2\epsilon}\int_{t-\epsilon}^{t+\epsilon} f(s)\ ds - f(t)\right) \right\|.
    \end{align*}
    The first term goes to $0$ as $\epsilon \to 0$. For the second term we have that
    \begin{align*}
    \limsup_{\epsilon \to 0}\left\|(\phi - \phi_j) \left(\frac{1}{2\epsilon}\int_{t-\epsilon}^{t+\epsilon} f(s)\ ds - f(t)\right) \right\| &\leq \delta \left(\|f(t)\|_X + \limsup_{\epsilon \to 0} \frac{1}{2\epsilon} \int_{t-\epsilon}^{t+\epsilon} \|f(s)\|_X\right) 
    \\&= 2\delta \|f(t)\|_X.
    \end{align*}
    Thus, for any $\delta>0,$
    \[\limsup_{\epsilon \to 0} \left \|\frac{1}{2\epsilon} \int_{t-\epsilon}^{t+\epsilon} \phi(f(s)) - \phi(f(t)) \right\|  \leq 2 \delta \|f(t)\|_X.\]
    Taking $\delta \to 0$ then gives that
    \[\frac{1}{2\epsilon} \int_{t-\epsilon}^{t+\epsilon} \phi(f(s)) \to \phi(f(t)).\]
    Thus every $t \in L$ is a weak Lebesgue point of $f$, in particular the set of weak Lebesgue points of $f$ is full measure.
\end{proof}

\begin{proof}[Proof of Lemma~\ref{lem.properties-of-weak-solutions-cont}, Part 1]
    We are going to show the equality
    \[\theta = \theta_0 + \int_0^t \nabla \cdot (u\theta)\]
    pointwise for a.e.\ time. In particular, we will show it for every weak Lebesgue point of $\theta$, which are full measure by Lemma~\ref{lem.weak-leb-pt}. So fix $r \in (0,T)$ a weak Lebesgue point.

    We will check the equality distributionally
    \[\theta(r) = \theta_0 + \int_0^r \nabla \cdot (u \theta)(s)\ ds,\]
    so, letting $\phi \in C^\infty(\T^d)$ arbitrary, it suffices to verify
    \begin{align} \notag 0&=\int \theta(r,x) \phi(x) -\theta_0(x) \phi(x)-\int_0^r \nabla \cdot (u \theta)(s,x) \phi(x)\ dsdx
    \\&= \int \theta(r,x) \phi(x) -\theta_0(x) \phi(x)+\int_0^r u(s,x) \cdot \nabla \phi(x) \theta(s,x) \ dsdx.
    \label{eq.distributional-eq}
    \end{align}
    For each $\epsilon > 0$, let $\psi^\epsilon : [0,T] \to \R$ be defined as
    \begin{equation*}
    \psi^\epsilon(t) :=
    \begin{cases}
        1 & t \leq r-\epsilon,\\ 
        1 - \frac{t- (r-\epsilon)}{2\epsilon} & r-\epsilon \leq t \leq r+\epsilon, \\
        0 & t \geq r+\epsilon.
    \end{cases}
    \end{equation*}
    Then we test the equation for $\theta$ with $\psi^\epsilon(t) \phi(x)$ (technically this isn't smooth in time, but it an additional time mollification argument quickly shows this is not a problem), giving
    \begin{align*}
    0&= \int -\partial_t \psi^\epsilon \theta \phi + u \cdot \nabla \phi \theta \psi^\epsilon - \int \phi(x) \theta_0(x)
    \\&= \frac{1}{2\epsilon} \int_{r-\epsilon}^{r+\epsilon} \int \phi(x)  \theta(s,x)\ dxds - \int \theta_0 \phi\ dx+ \int \int_0^r u \cdot \nabla \phi \theta\ dsdr 
    \\&\qquad\qquad+ \int u \cdot \nabla \phi \theta (\psi^\epsilon(s) - 1_{[0,r]}(s))\ dsdr.
    \end{align*}
    Thus we see by comparing with~\eqref{eq.distributional-eq}, in order to conclude it suffices to show that, as $\epsilon \to 0$, the first term converges to $\int \phi(x) \theta(r,x)\ dx$ and then last term converges to $0$. 

    For the first term, the convergence is direct by the definition of a Lebesgue point. for the last term, we note that
    \[ \abs{\int u \cdot \nabla \phi \theta (\psi^\epsilon(s) - 1_{[0,r]}(s))\ dsdr}\leq \|u\|_{L^\infty} \|\nabla \phi\|_{L^\infty} \|\theta\|_{L^\infty_t L^1_x} \|\psi^\epsilon - 1_{[0,r]}\|_{L^1} \to 0.\]
\end{proof}

\begin{proof}[Proof of Lemma~\ref{lem.properties-of-weak-solutions-cont}, Part 2]
    By topological considerations, we have topological continuity if and only if we have sequential continuity, so we can just consider the problem of sequential continuity.
    
    Take the representation of $\theta$ given by by part (1). One can directly check that $\theta \in C([0,T], H^{-1})$ Let $t\in[0,T]$ and $t_n \in [0,T]$ such that $t_n \to t$. Let $t_{n_j}$ an arbitrary subsequence of $t_n$. Then $\theta(t_{n_j})$ is bounded in $L^2(\T^d)$ so is weakly compact. By taking a further subsequence, $s_j$, we have $\theta(s_j) \stackrel{L^2}{\weakto} \alpha$. But then $\theta(s_j) \stackrel{H^{-1}}\weakto \alpha$ also, but since $s_j \to t$ and by the continuity of $\theta$ into $H^{-1}$, we have $\theta(s_j) \stackrel{H^{-1}}\to \theta(t)$. Thus $\alpha = \theta(t)$ and so $\theta(s_j) \stackrel{L^2}{\weakto} \theta(t)$. Thus for every subsequence of $t_n$, there exists a further subsequence along which $\theta(s_j) \stackrel{L^2}{\weakto} \theta(t)$, we have that $\theta(t_j) \stackrel{L^2}{\weakto} \theta(t)$, so we conclude.
\end{proof}

\begin{proof}[Proof of Lemma~\ref{lem.properties-of-weak-solutions-cont}, Part 3]
     We prove the result for tensor products $\phi(t,x) = \psi(t) \gamma(x)$ and then we can conclude by the approximating an arbitrary $\phi$ by a linear combination of tensors.

    For fix $0 \leq s <r \leq T$ and a tensor $\phi(t,x) = \psi(t) \gamma(x)$. Let $\psi^\epsilon(t)$ be defined as follows. 
    \begin{equation*}
    \psi^\epsilon(t) := 
    \begin{cases}
        0 &  t\leq s,\\
        \psi(s+\epsilon)\frac{t-s}{\epsilon} & s \leq t \leq s+\epsilon,\\
        \psi(t) & s+\epsilon \leq t \leq r-\epsilon, \\
        \psi(r-\epsilon)\frac{r-t}{\epsilon} & r-\epsilon \leq t \leq r,\\
        0 & r \leq t.
    \end{cases}
    \end{equation*}
    By a simple mollification in time argument, one can verify we can test the equation for $\theta$ with $\psi^\epsilon(t) \gamma(x)$, giving
    \begin{align*}
    0&= \int -\partial_t \psi^\epsilon \theta \gamma + u \cdot \nabla \gamma \theta \psi^\epsilon\ dx dt
    \\&= \int \int_{s+\epsilon}^{r-\epsilon} -\partial_t \psi \theta \gamma + u \cdot \nabla \gamma \theta \psi + \int \int_{r-\epsilon}^r -\partial_t \psi^{\epsilon} \theta \gamma - \int \int_s^{s+\epsilon} \partial_t \psi^\epsilon \theta \gamma
    \\&\qquad + \int \int_{[s,s+\epsilon] \cup [r-\epsilon,r]} u \cdot \nabla \gamma \theta \psi^\epsilon.
    \end{align*}
    Thus to conclude, we need to show the following four limits
    \begin{align*}
    \int \int_{s+\epsilon}^{r-\epsilon} -\partial_t \psi \theta \gamma + u \cdot \nabla \gamma \theta \psi &\to \int \int_s^{r} -\partial_t \psi \theta \gamma + u \cdot \nabla \gamma \theta \psi
    \\ \int \int_{r-\epsilon}^r -\partial_t \psi^{\epsilon} \theta \gamma &\to \int \psi(r)\theta(r)\gamma\ dx\\
    \int \int_s^{s+\epsilon} \partial_t \psi^\epsilon \theta \gamma &\to \int \psi(s) \theta(s) \gamma\ dx\\
    \int \int_{[s,s+\epsilon] \cup [r-\epsilon,r]} u \cdot \nabla \gamma \theta \psi^\epsilon &\to 0.
    \end{align*}
    The first convergence is direct. The second and third are similar, so let's just do the second. Note that
    \[\int \int_{r-\epsilon}^r -\partial_t \psi^{\epsilon} \theta \gamma  = \psi(r-\epsilon)\frac{1}{\epsilon}\int_{r-\epsilon}^r \int \gamma(x)  \theta(t,x)\ dxdt.\]
    Then since $\theta \in C([0,T], L^2_w)$, we have that
    \[\alpha(t) := \int \gamma(x) \theta(t,x)\ dx \in C([0,T]),\]
    thus
    \[\frac{1}{\epsilon} \int_{r-\epsilon}^r \alpha(t) \to \alpha(r).\]
    Then by smoothness $\psi(r-\epsilon) \to \psi(r)$, so we get the limit for the product
    \[\int \int_{r-\epsilon}^r -\partial_t \psi^{\epsilon} \theta \gamma \to \psi(r) \alpha(r) = \psi(r) \int \gamma \theta(r)\ dx = \int \psi(r) \theta(r) \gamma\ dx.\]

    Lastly, the convergence of the fourth integral is direct also. So taking the all four integrals to their limits, we get the desired result.
\end{proof}

{\small
\bibliographystyle{alpha}
\bibliography{references,other-references}
}
 
\end{document}